\newtheorem{axiom}{Axiom}
\newtheorem{assumption}[axiom]{Assumption}
\newtheorem{theorem}{Theorem}[section]
\newtheorem{lemma}[theorem]{Lemma}
\theoremstyle{remark}
\newtheorem{definition}[theorem]{Definition}
\newtheorem*{example}{Example}
\newtheorem*{remark}{Remark}
\newcommand\red[1]{{#1}}
\newcommand\redsout[1]{{}} 
\newcommand{\Z}{\mathbb{Z}}
\newcommand{\E}{\mathbb{E}}
\newcommand{\R}{\mathbb{R}}
\renewcommand{\P}{\mathbb{P}}
\newcommand{\eps}{\varepsilon}
\newcommand{\diam}{\textnormal{diam}}
\newcommand\supnorm[1]{\sup\limits_{#1 \in [0, T]}}
\newcommand{\chieq}{\overset{\mathrm{\chi}}{=}}
\begin{document}

\title[Topology-Driven Goodness-of-Fit Tests in Arbitrary Dimensions]{Topology-Driven Goodness-of-Fit Tests in Arbitrary Dimensions}


\author[1]{\fnm{Pawe{\l}} \sur{D{\l}otko}}\email{pawel.dlotko@impan.pl}

\author[1]{\fnm{Niklas} \sur{Hellmer}}\email{nhellmer@impan.pl}

\author[1]{\fnm{{\L}ukasz} \sur{Stettner}}\email{stettner@impan.pl}

\author[1]{\fnm{Rafa{\l}} \sur{Topolnicki}}\email{rafal.topolnicki@impan.pl}

\affil[1]{\orgdiv{Institute of Mathematics}, \orgname{Polish Academy of Sciences}, \orgaddress{\street{{\'S}niadeckich 8}, \city{Warsaw}, \postcode{00-656}, \country{Poland}}}

\abstract{This paper adopts a tool from computational topology, the Euler characteristic curve (ECC) of a sample, to perform one- and two-sample goodness of fit tests\red{. We call our procedure \emph{TopoTests}}. The presented tests work for samples \red{of} arbitrary dimension, having comparable power to the state-of-the-art tests in the one-dimensional case. It is demonstrated that the type I error of TopoTests can be controlled and their type II error vanishes exponentially with increasing sample size. Extensive numerical simulations of TopoTests are conducted to demonstrate their power \red{for samples of various sizes}.
}

\keywords{Goodness-of-fit test, Euler characteristic curve, high-dimensional inference, topological data analysis}

\maketitle
        
\section{Introduction}
Goodness-of-fit (GoF) testing is one of the standard tasks in statistics.
The testing procedure can be stated in the one-sample or two-sample setting. 
In case of the one-sample problem, we observe a sample of $m$ independent realizations $\{x_1, \ldots, x_m \}$ of a $d$-dimensional 
random vector $X$ with an unknown distribution function $G$, i.e. $x_i \sim G$. 
The task is to test whether $G$ is equal
to \red{a} specific distribution $F$, i.e. we would like to test
\begin{equation}
    \label{eqn:OneSampleHypothesis}
    H_0: G = F~~\text{vs.}~~H_1: G \ne F.
\end{equation}
In the setting of the two-sample problem we are given two independent samples consisting of $m$ and $n$ ($m \ne n$ in general) independent realizations
of \redsout{an} $d$-dimensional random vectors $X$ and $Y$ with an unknown distribution function $F$ and $G$, respectively. This means $X=\{x_1, \ldots, x_m\}, x_i \sim F$ and $Y=\{y_1,\ldots,y_n\}, y_j \sim G$, while the hypothesis is the same as in (\ref{eqn:OneSampleHypothesis}).

In this paper, we consider a more general notion of equivalence, replacing the equal sign above by the relation of being \textit{Euler equivalent} (cf. Definition~\ref{def:EulerEquivalent}).

We are interested in the setting in which the underlying distribution is continuous.  
In this case, prominent GoF tests for samples from $\R$ rely on the empirical distribution function, see  \cite[chapter 4]{dagostino1986goodness}.
These include, in the one dimensional case, the Kolmogorov-Smirnov, Cram\'e{}r-von-Mises and Anderson-Darling tests. In higher dimensions, Kolmogorov-Smirnov leads to Fasano-Franceschini\cite{Fasano1987} and Peacock\cite{Peacock1983} tests; a general case was 
considered by Justel~\cite{Justel1997}. 
\red{A m}ultivariate version of Cram\'e{}r-von-Mises was proposed by Chiu and Liu\cite{Chiu2009}. 
Since those tests are based on empirical distribution function, their generalization to $\R^d$ for $d \geq 2$ is conceptually and computationally difficult. Moreover, we are not aware of an efficient implementation of a general goodness of fit tests for high dimensional samples.

To tackle this challenge we propose to replace the cumulative distribution function with \textit{Euler characteristic curves (ECCs)}  
\cite{gonzalez1977digital, richardsonEfficientClassificationUsing2014, Worsley1996Geometry}, a tool from computational topology that provides a signature of the considered sample.
To a given sample $X$, this notion associates a function $\chi(X)\colon [0,\infty) \to \Z$, which can serve as a stand-in for the empirical distribution function in arbitrary dimensions. Subsequently, for \red{one}-sample tests, inspired by the Kolmogorov-Smirnov test, we define the test statistic to be the supremum distance between the ECC of the sample and the expected ECC for the distribution. 
This topologically driven testing scheme will be referred to as \red{``}TopoTest'' for short.

The key characteristic of any goodness of fit test is its power, i.e. the type II error should be small,
under the requirement that the type I error is fixed at level $\alpha$.
We show that the proposed test satisfies this condition and that it performs very well in practical cases. In particular, even restricted to one dimensional samples, its power is comparable to those of the standard GoF tests.

The paper is organized as follows:
Section~\ref{sec:CompTop} reviews the necessary background from topology as well as the current work in the topic. In Section~\ref{sec:method} we present the theoretical justification of our method. In Section~\ref{sec:algorithms} the algorithms implementing proposed GoF tests are detailed. 
Sections~\ref{sec:NumericalExperiments_one_sample} and~\ref{sec:NumericalExperiments2sample} present the numerical experiments and comparison of the presented technique to existing methods. In particular, comparing to a higher dimensional version of the Kolmogorov-Smirnov test, we find that our procedure provides better power and takes less time to compute.
Finally, in Section~\ref{sec:discussion} the conclusions are drawn.

\subsection{Background}

\red{Since the seminal work of Edelsbrunner et al. \cite{edelsbrunner_topological_2002} and Carlsson \& Zomorodian \cite{zomorodian_computing_2005}, t}opological Data Analysis (TDA) is a fast growing interdisciplinary area 
combining tools and results of such diverse fields of science as algebra, topology, statistics and machine learning,
just to name few. 
\red{For a survey from a statistician's perspective, see \cite{wasserman_topological_2018}.}
One of the areas in which TDA can contribute to statistics is related to applications of topological
summaries of the data to hypothesis testing. 
Despite ongoing research and growing interest in TDA methods, attempts to construct statistical tests within the classical
Neyman-Person hypothesis testing paradigm based on persistent homology, the most popular topological summaries of data,
are limited because the distributions of test statistics under the null hypothesis are unknown. 
Therefore, the approaches that are most common in the literature utilize sampling and permutation based techniques  \cite{Cericola2016, Robinson2017, vejdemo-johanssonMultipleTestingPersistent2022}.
In this work, a different topological summary of the data, namely the Euler characteristic curve (ECC), is used to construct one-sample and two-sample statistical tests.
The application of ECCs is motivated by recent theoretical findings regarding the asymptotic distribution of ECC,
which enables us to construct tests in rigorous fashion. 
Since the finite sample distributions of ECCs remain unknown, extensive Monte Carlo simulations were conducted to investigate the properties and performances of the proposed tests.

\subsubsection*{Tools from Computational Topology}\label{sec:CompTop}

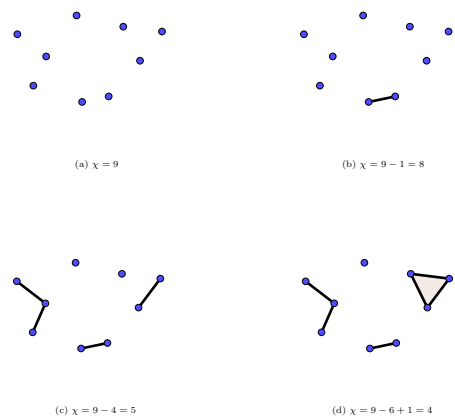
\begin{figure}
    \resizebox{\linewidth}{!}{
	\subfloat[$\chi=9$\label{fig:PointCloud}]{
		\definecolor{ududff}{rgb}{0.30196078431372547,0.30196078431372547,1}
\begin{tikzpicture}[line cap=round,line join=round,>=triangle 45,x=1cm,y=1cm]
\clip(-0.4815975434601234,0.4826342660750136) rectangle (6.687880545753719,6.195416277091689);
\begin{scriptsize}
\draw [fill=ududff] (0.96,3.66) circle (2.5pt);
\draw [fill=ududff] (2.51413,4.15604) circle (2.5pt);
\draw [fill=ududff] (3.74,3.86) circle (2.5pt);
\draw [fill=ududff] (4.75617,3.7317) circle (2.5pt);
\draw [fill=ududff] (1.71611,3.07302) circle (2.5pt);
\draw [fill=ududff] (4.17983,2.95902) circle (2.5pt);
\draw [fill=ududff] (1.38,2.3) circle (2.5pt);
\draw [fill=ududff] (2.6598,1.86966) circle (2.5pt);
\draw [fill=ududff] (3.35648,2.01533) circle (2.5pt);
\end{scriptsize}
\end{tikzpicture}%
	}
	\subfloat[$\chi = 9-1 = 8$\label{fig:VR1}]{
		\definecolor{ududff}{rgb}{0.30196078431372547,0.30196078431372547,1}
\begin{tikzpicture}[line cap=round,line join=round,>=triangle 45,x=1cm,y=1cm]
\clip(-0.4815975434601234,0.48263426607501353) rectangle (6.687880545753719,6.195416277091688);
\draw [line width=2pt] (2.6598,1.86966)-- (3.35648,2.01533);
\begin{scriptsize}
\draw [fill=ududff] (0.96,3.66) circle (2.5pt);
\draw [fill=ududff] (2.51413,4.15604) circle (2.5pt);
\draw [fill=ududff] (3.74,3.86) circle (2.5pt);
\draw [fill=ududff] (4.75617,3.7317) circle (2.5pt);
\draw [fill=ududff] (1.71611,3.07302) circle (2.5pt);
\draw [fill=ududff] (4.17983,2.95902) circle (2.5pt);
\draw [fill=ududff] (1.38,2.3) circle (2.5pt);
\draw [fill=ududff] (2.6598,1.86966) circle (2.5pt);
\draw [fill=ududff] (3.35648,2.01533) circle (2.5pt);
\end{scriptsize}
\end{tikzpicture}%
	}
    } \newline
    \resizebox{\linewidth}{!}{
	\subfloat[$\chi = 9-4 = 5$\label{fig:VR2}]{
		\definecolor{ududff}{rgb}{0.30196078431372547,0.30196078431372547,1}
\begin{tikzpicture}[line cap=round,line join=round,>=triangle 45,x=1cm,y=1cm]
\clip(-0.4815975434601234,0.48263426607501353) rectangle (6.687880545753719,6.195416277091688);
\draw [line width=2pt] (0.96,3.66)-- (1.71611,3.07302);
\draw [line width=2pt] (4.75617,3.7317)-- (4.17983,2.95902);
\draw [line width=2pt] (1.71611,3.07302)-- (1.38,2.3);
\draw [line width=2pt] (2.6598,1.86966)-- (3.35648,2.01533);
\begin{scriptsize}
\draw [fill=ududff] (0.96,3.66) circle (2.5pt);
\draw [fill=ududff] (2.51413,4.15604) circle (2.5pt);
\draw [fill=ududff] (3.74,3.86) circle (2.5pt);
\draw [fill=ududff] (4.75617,3.7317) circle (2.5pt);
\draw [fill=ududff] (1.71611,3.07302) circle (2.5pt);
\draw [fill=ududff] (4.17983,2.95902) circle (2.5pt);
\draw [fill=ududff] (1.38,2.3) circle (2.5pt);
\draw [fill=ududff] (2.6598,1.86966) circle (2.5pt);
\draw [fill=ududff] (3.35648,2.01533) circle (2.5pt);
\end{scriptsize}
\end{tikzpicture}%
	}
	\subfloat[$\chi = 9- 6 + 1 = 4$\label{fig:VR3}]{
		\definecolor{zzttqq}{rgb}{0.6,0.2,0}
\definecolor{ududff}{rgb}{0.30196078431372547,0.30196078431372547,1}
\begin{tikzpicture}[line cap=round,line join=round,>=triangle 45,x=1cm,y=1cm]
\clip(-0.4815975434601234,0.48263426607501353) rectangle (6.687880545753719,6.195416277091688);
\fill[line width=2pt,color=zzttqq,fill=zzttqq,fill opacity=0.10000000149011612] (3.74,3.86) -- (4.17983,2.95902) -- (4.75617,3.7317) -- cycle;
\draw [line width=2pt] (3.74,3.86)-- (4.75617,3.7317);
\draw [line width=2pt] (0.96,3.66)-- (1.71611,3.07302);
\draw [line width=2pt] (3.74,3.86)-- (4.17983,2.95902);
\draw [line width=2pt] (4.75617,3.7317)-- (4.17983,2.95902);
\draw [line width=2pt] (1.71611,3.07302)-- (1.38,2.3);
\draw [line width=2pt] (2.6598,1.86966)-- (3.35648,2.01533);
\begin{scriptsize}
\draw [fill=ududff] (0.96,3.66) circle (2.5pt);
\draw [fill=ududff] (2.51413,4.15604) circle (2.5pt);
\draw [fill=ududff] (3.74,3.86) circle (2.5pt);
\draw [fill=ududff] (4.75617,3.7317) circle (2.5pt);
\draw [fill=ududff] (1.71611,3.07302) circle (2.5pt);
\draw [fill=ududff] (4.17983,2.95902) circle (2.5pt);
\draw [fill=ududff] (1.38,2.3) circle (2.5pt);
\draw [fill=ududff] (2.6598,1.86966) circle (2.5pt);
\draw [fill=ududff] (3.35648,2.01533) circle (2.5pt);
\end{scriptsize}
\end{tikzpicture}%
	}
	}
	\caption{With increasing scale parameter, we draw in edges and triangles. 
	We keep track of the number of components, which is here $\#\text{points} - \#\text{edges} + \#\text{triangles}$.}
\end{figure}

To start with an example, let us consider the set $X$ of nine points in $\R^2$ (Figure \ref{fig:PointCloud}).
The most elementary way of assigning a numeric quantity to them is to simply count them.
This is a topological invariant, the number of connected components.
\red{Now if two points coincide, they should not be regarded as separate.
If they are very close together, say less than some given $\eps>0$ apart, we can also connect them.
}
So let us draw an edge between them (Figure \ref{fig:VR1}).
The number of connected components is now one less, suggesting we should subtract the number of edges from the number of points.
In order to formalize what we mean by points that are close to each other, we introduce a scale parameter $r\in \R^{\geq 0}$.
Then we draw edges between pairs of points whose distance is at most $r$.
Letting $r=0$ initially and increasing it, we draw more and more edges, thereby reducing the number of connected components (Figure \ref{fig:VR2}).
Once three points are within distance $r$ of each other, according to our intuition they should be considered as one connected component.
But we have three points and three edges, which yield a difference of zero.
To correct this mismatch with our intuition, we add the number of triangles (Figure \ref{fig:VR3}).
This procedure continues to higher dimensions: Once $k$ points are within distance $r$ of each other, we add\ $(-1)^{k\red{-1}}$.

These ideas will now be formalized.
For a textbook reference on these topics, we refer the reader to \cite{edelsbrunnerComputationalTopologyIntroduction2010}.
\begin{definition}
	An \textit{abstract simplicial complex} $K$ is a collection of nonempty sets which are closed under the subset operation:
	\[
		\tau \in K \text{ and } \sigma \subseteq \tau \Rightarrow \sigma \in K.
	\]
	The elements of $K$ are called \textit{simplices}.
	If $\sigma \subsetneq \tau \in K$, we say that $\sigma$ is a \textit{face} of $\tau$.
	The \textit{dimension of a simplex} $\sigma \in K$ is $\dim(\sigma) = \vert \sigma \vert -1$, where $\vert\cdot\vert$ denotes the cardinality of a set.
	The \textit{dimension of} $K$ is the the maximal dimension of any of its simplices.
\end{definition}

The construction of drawing edges, triangles etc. between points which are close to each other can be formalized in slightly different flavours.
Perhaps the simplest is the Vietoris-Rips construction:

\begin{definition}
	For a finite subset $X\subseteq \R^d$ and $r\ge 0$ define the \textit{Vietoris-Rips complex} at scale $r$ to be the abstract simplicial complex
	\[
	\mathcal{R}_r(X) = \left\{ \sigma \subseteq X \colon \diam(\sigma)\leq 2r \right\}
	\]
    \red{where $\diam$ is the diameter of the simplex $\diam(\sigma) = \max\{d(x,x')\colon x,x' \in \sigma, x\neq x'\}$.}
\end{definition}

A closely related notion is the \v{C}ech complex:
\begin{definition}
For a finite subset $X\subseteq \R^d$ and $r\ge 0$ define the \textit{\v{C}ech complex} at scale $r$ to be the abstract simplicial complex
\[
    \mathcal{C}_r(X) = \left\{ \sigma \subseteq X \colon \bigcap_{x\in\sigma} B_r(x) \neq \emptyset \right\},
\]
\red{w}here $B_r(x)$ is the closed ball of radius $r$ centered at $x$.
\end{definition}

Finally, the Alpha complex (which is the most useful in practice and used in our implementations), requires the following notion from computational geometry:

\begin{definition}
	Let $X\subseteq \R^d$ be a finite set.
	The \textit{Voronoi cell} of $x\in X$ is the subset of points in $\R^d$  that have $x$ as a closest point in $X$,
	\[
		V_X(x) = \{y\in \R^d \colon \forall x' \in X   \|y-x\| \leq \|y-x'\|\}.
	\]
\end{definition}
\begin{definition}
For a finite subset $X\subseteq \R^d$ and $r\ge 0$ define the \textit{Alpha complex} at scale $r$ to be the abstract simplicial complex
\[
\mathcal{A}_r(X) = \left\{ \sigma \subseteq X \colon \bigcap_{x\in\sigma} B_r(x)\cap V_X(x) \neq \emptyset \right\}.
\]
\end{definition}

For illustrations of the Alpha, \v{C}ech and Vietoris-Rips complex on a small sample\redsout{s}, consider 
Figures \ref{fig:Alpha}, \ref{fig:Cech} and \ref{fig:VR}, respectively.
We refer to $r$ as the \textit{scale parameter} or the \textit{filtration value}.
The latter name comes from the fact that for $r<r'$, the complex at scale $r$ is a subcomplex of the one at scale $r'$.

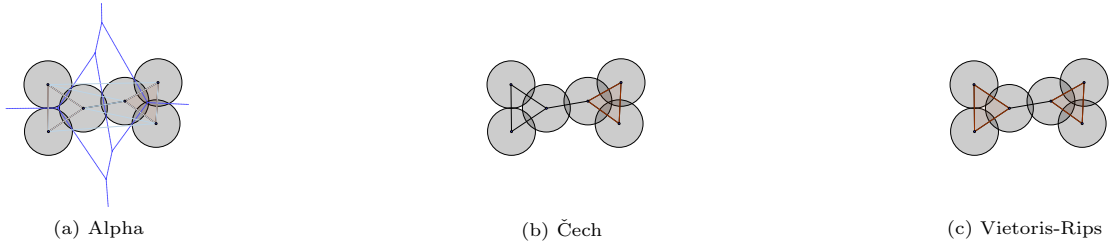
\begin{figure*}
	\definecolor{zzttqq}{rgb}{0.6,0.2,0}
	\definecolor{ududff}{rgb}{0.30196078431372547,0.30196078431372547,1}
	\definecolor{bcduew}{rgb}{0.7372549019607844,0.8313725490196079,0.9019607843137255}
	\definecolor{aqaqaq}{rgb}{0.6274509803921569,0.6274509803921569,0.6274509803921569}
	\subfloat[Alpha\label{fig:Alpha}]{\scalebox{0.15}{\begin{tikzpicture}[line cap=round,line join=round,>=triangle 45,x=1cm,y=1cm]
\clip(-12.616640864077574,-9.17579630531208) rectangle (12.698871070317466,8.967939674756556);
\draw [line width=2pt,fill=black,fill opacity=0.2] (-4.3455,1.80544) circle (2.1cm);
\draw [line width=2pt,fill=black,fill opacity=0.2] (-4.2993772823371055,-2.3460119162021873) circle (2.1cm);
\draw [line width=2pt,fill=black,fill opacity=0.2] (-1.2396040745723542,-0.28566211800883773) circle (2.1cm);
\draw [line width=2pt,fill=black,fill opacity=0.2] (2.404447434172601,0.34474341725927676) circle (2.1cm);
\draw [line width=2pt,fill=black,fill opacity=0.2] (5.310463194310983,1.989948106861429) circle (2.1cm);
\draw [line width=2pt,fill=black,fill opacity=0.2] (5.141330001921974,-1.6387276571208862) circle (2.1cm);
\fill[line width=2pt,color=zzttqq,fill=zzttqq,fill opacity=0.10000000149011612] (5.310463194310983,1.989948106861429) -- (5.141330001921974,-1.6387276571208862) -- (2.404447434172601,0.34474341725927676) -- cycle;
\draw [line width=2pt] (-4.3455,1.80544)-- (-4.2993772823371055,-2.3460119162021873);
\draw [line width=2pt] (-4.2993772823371055,-2.3460119162021873)-- (-1.2396040745723542,-0.28566211800883773);
\draw [line width=2pt] (-4.3455,1.80544)-- (-1.2396040745723542,-0.28566211800883773);
\draw [line width=2pt] (-1.2396040745723542,-0.28566211800883773)-- (2.404447434172601,0.34474341725927676);
\draw [line width=2pt] (2.404447434172601,0.34474341725927676)-- (5.310463194310983,1.989948106861429);
\draw [line width=2pt] (5.310463194310983,1.989948106861429)-- (5.141330001921974,-1.6387276571208862);
\draw [line width=2pt] (5.141330001921974,-1.6387276571208862)-- (2.404447434172601,0.34474341725927676);
\draw [line width=2pt,color=zzttqq] (5.310463194310983,1.989948106861429)-- (5.141330001921974,-1.6387276571208862);
\draw [line width=2pt,color=zzttqq] (5.141330001921974,-1.6387276571208862)-- (2.404447434172601,0.34474341725927676);
\draw [line width=2pt,color=zzttqq] (2.404447434172601,0.34474341725927676)-- (5.310463194310983,1.989948106861429);
\draw [line width=2pt,color=zzttqq] (-4.3455,1.80544)-- (-1.2396040745723542,-0.28566211800883773);
\draw [line width=2pt,color=zzttqq] (-1.2396040745723542,-0.28566211800883773)-- (-4.2993772823371055,-2.3460119162021873);
\draw [line width=2pt,color=zzttqq] (-4.2993772823371055,-2.3460119162021873)-- (-4.3455,1.80544);
\draw[line width=2pt,dotted,color=ududff] (0.37903400667212594,7.311473118421726) -- (0.35,9.03)(0.7633546765311897,-6.56237601443535) -- (0.94,-8.97)(4.397042446532841,0.2142432573593399) -- (8,0.05)(-3.479833713277305,-0.26092459971974896) -- (-8.02,-0.31)(-0.20801902640147274,4.5986735123027405) -- (1.2917298645142747,-4.070606662015118)(4.397042446532841,0.2142432573593399) -- (0.37903400667212594,7.311473118421726)(-3.479833713277305,-0.26092459971974896) -- (0.7633546765311897,-6.56237601443535)(-0.20801902640147274,4.5986735123027405) -- (-3.479833713277305,-0.26092459971974896)(1.2917298645142747,-4.070606662015118) -- (4.397042446532841,0.2142432573593399)(-0.20801902640147274,4.5986735123027405) -- (0.37903400667212594,7.311473118421726)(0.7633546765311897,-6.56237601443535) -- (1.2917298645142747,-4.070606662015118);
\draw[line width=2pt,color=bcduew] (5.310463194310983,1.989948106861429) -- (-4.3455,1.80544)(-4.2993772823371055,-2.3460119162021873) -- (5.141330001921974,-1.6387276571208862)(2.404447434172601,0.34474341725927676) -- (-4.3455,1.80544)(-1.2396040745723542,-0.28566211800883773) -- (5.141330001921974,-1.6387276571208862)(-4.3455,1.80544) -- (-4.2993772823371055,-2.3460119162021873)(-4.3455,1.80544) -- (-1.2396040745723542,-0.28566211800883773)(2.404447434172601,0.34474341725927676) -- (-1.2396040745723542,-0.28566211800883773)(-1.2396040745723542,-0.28566211800883773) -- (-4.2993772823371055,-2.3460119162021873)(5.141330001921974,-1.6387276571208862) -- (5.310463194310983,1.989948106861429)(5.141330001921974,-1.6387276571208862) -- (2.404447434172601,0.34474341725927676)(5.310463194310983,1.989948106861429) -- (2.404447434172601,0.34474341725927676);
\begin{scriptsize}
\draw [fill=ududff] (-4.3455,1.80544) circle (2.5pt);
\draw [fill=ududff] (-4.2993772823371055,-2.3460119162021873) circle (2.5pt);
\draw [fill=ududff] (-1.2396040745723542,-0.28566211800883773) circle (2.5pt);
\draw [fill=ududff] (2.404447434172601,0.34474341725927676) circle (2.5pt);
\draw [fill=ududff] (5.310463194310983,1.989948106861429) circle (2.5pt);
\draw [fill=ududff] (5.141330001921974,-1.6387276571208862) circle (2.5pt);
\end{scriptsize}
\end{tikzpicture}}}
	\hfill
	\subfloat[\v{C}ech\label{fig:Cech}]{\scalebox{0.15}{\begin{tikzpicture}[line cap=round,line join=round,>=triangle 45,x=1cm,y=1cm]
\clip(-12.616640864077574,-9.17579630531208) rectangle (12.698871070317466,8.967939674756556);
\draw [line width=2pt,fill=black,fill opacity=0.2] (-4.3455,1.80544) circle (2.1cm);
\draw [line width=2pt,fill=black,fill opacity=0.2] (-4.2993772823371055,-2.3460119162021873) circle (2.1cm);
\draw [line width=2pt,fill=black,fill opacity=0.2] (-1.2396040745723542,-0.28566211800883773) circle (2.1cm);
\draw [line width=2pt,fill=black,fill opacity=0.2] (2.404447434172601,0.34474341725927676) circle (2.1cm);
\draw [line width=2pt,fill=black,fill opacity=0.2] (5.310463194310983,1.989948106861429) circle (2.1cm);
\draw [line width=2pt,fill=black,fill opacity=0.2] (5.141330001921974,-1.6387276571208862) circle (2.1cm);
\fill[line width=2pt,color=zzttqq,fill=zzttqq,fill opacity=0.10000000149011612] (5.310463194310983,1.989948106861429) -- (5.141330001921974,-1.6387276571208862) -- (2.404447434172601,0.34474341725927676) -- cycle;
\draw [line width=2pt] (-4.3455,1.80544)-- (-4.2993772823371055,-2.3460119162021873);
\draw [line width=2pt] (-4.2993772823371055,-2.3460119162021873)-- (-1.2396040745723542,-0.28566211800883773);
\draw [line width=2pt] (-4.3455,1.80544)-- (-1.2396040745723542,-0.28566211800883773);
\draw [line width=2pt] (-1.2396040745723542,-0.28566211800883773)-- (2.404447434172601,0.34474341725927676);
\draw [line width=2pt] (2.404447434172601,0.34474341725927676)-- (5.310463194310983,1.989948106861429);
\draw [line width=2pt] (5.310463194310983,1.989948106861429)-- (5.141330001921974,-1.6387276571208862);
\draw [line width=2pt] (5.141330001921974,-1.6387276571208862)-- (2.404447434172601,0.34474341725927676);
\draw [line width=2pt,color=zzttqq] (5.310463194310983,1.989948106861429)-- (5.141330001921974,-1.6387276571208862);
\draw [line width=2pt,color=zzttqq] (5.141330001921974,-1.6387276571208862)-- (2.404447434172601,0.34474341725927676);
\draw [line width=2pt,color=zzttqq] (2.404447434172601,0.34474341725927676)-- (5.310463194310983,1.989948106861429);
\begin{scriptsize}
\draw [fill=ududff] (-4.3455,1.80544) circle (2.5pt);
\draw [fill=ududff] (-4.2993772823371055,-2.3460119162021873) circle (2.5pt);
\draw [fill=ududff] (-1.2396040745723542,-0.28566211800883773) circle (2.5pt);
\draw [fill=ududff] (2.404447434172601,0.34474341725927676) circle (2.5pt);
\draw [fill=ududff] (5.310463194310983,1.989948106861429) circle (2.5pt);
\draw [fill=ududff] (5.141330001921974,-1.6387276571208862) circle (2.5pt);
\end{scriptsize}
\end{tikzpicture}}}
	\hfill
	\subfloat[Vietoris-Rips\label{fig:VR}]{\scalebox{0.15}{\begin{tikzpicture}[line cap=round,line join=round,>=triangle 45,x=1cm,y=1cm]
\clip(-12.616640864077574,-9.17579630531208) rectangle (12.698871070317466,8.967939674756556);
\draw [line width=2pt,fill=black,fill opacity=0.2] (-4.3455,1.80544) circle (2.1cm);
\draw [line width=2pt,fill=black,fill opacity=0.2] (-4.2993772823371055,-2.3460119162021873) circle (2.1cm);
\draw [line width=2pt,fill=black,fill opacity=0.2] (-1.2396040745723542,-0.28566211800883773) circle (2.1cm);
\draw [line width=2pt,fill=black,fill opacity=0.2] (2.404447434172601,0.34474341725927676) circle (2.1cm);
\draw [line width=2pt,fill=black,fill opacity=0.2] (5.310463194310983,1.989948106861429) circle (2.1cm);
\draw [line width=2pt,fill=black,fill opacity=0.2] (5.141330001921974,-1.6387276571208862) circle (2.1cm);
\fill[line width=2pt,color=zzttqq,fill=zzttqq,fill opacity=0.10000000149011612] (5.310463194310983,1.989948106861429) -- (5.141330001921974,-1.6387276571208862) -- (2.404447434172601,0.34474341725927676) -- cycle;
\fill[line width=2pt,color=zzttqq,fill=zzttqq,fill opacity=0.10000000149011612] (-4.3455,1.80544) -- (-1.2396040745723542,-0.28566211800883773) -- (-4.2993772823371055,-2.3460119162021873) -- cycle;
\draw [line width=2pt] (-4.3455,1.80544)-- (-4.2993772823371055,-2.3460119162021873);
\draw [line width=2pt] (-4.2993772823371055,-2.3460119162021873)-- (-1.2396040745723542,-0.28566211800883773);
\draw [line width=2pt] (-4.3455,1.80544)-- (-1.2396040745723542,-0.28566211800883773);
\draw [line width=2pt] (-1.2396040745723542,-0.28566211800883773)-- (2.404447434172601,0.34474341725927676);
\draw [line width=2pt] (2.404447434172601,0.34474341725927676)-- (5.310463194310983,1.989948106861429);
\draw [line width=2pt] (5.310463194310983,1.989948106861429)-- (5.141330001921974,-1.6387276571208862);
\draw [line width=2pt] (5.141330001921974,-1.6387276571208862)-- (2.404447434172601,0.34474341725927676);
\draw [line width=2pt,color=zzttqq] (5.310463194310983,1.989948106861429)-- (5.141330001921974,-1.6387276571208862);
\draw [line width=2pt,color=zzttqq] (5.141330001921974,-1.6387276571208862)-- (2.404447434172601,0.34474341725927676);
\draw [line width=2pt,color=zzttqq] (2.404447434172601,0.34474341725927676)-- (5.310463194310983,1.989948106861429);
\draw [line width=2pt,color=zzttqq] (-4.3455,1.80544)-- (-1.2396040745723542,-0.28566211800883773);
\draw [line width=2pt,color=zzttqq] (-1.2396040745723542,-0.28566211800883773)-- (-4.2993772823371055,-2.3460119162021873);
\draw [line width=2pt,color=zzttqq] (-4.2993772823371055,-2.3460119162021873)-- (-4.3455,1.80544);
\begin{scriptsize}
\draw [fill=ududff] (-4.3455,1.80544) circle (2.5pt);
\draw [fill=ududff] (-4.2993772823371055,-2.3460119162021873) circle (2.5pt);
\draw [fill=ududff] (-1.2396040745723542,-0.28566211800883773) circle (2.5pt);
\draw [fill=ududff] (2.404447434172601,0.34474341725927676) circle (2.5pt);
\draw [fill=ududff] (5.310463194310983,1.989948106861429) circle (2.5pt);
\draw [fill=ududff] (5.141330001921974,-1.6387276571208862) circle (2.5pt);
\end{scriptsize}
\end{tikzpicture}}}
	
	\caption{
	We consider three different constructions of filtered simplicial complexes with a fixed sample as vertex set.}
\end{figure*}

The main advantage of the Alpha complex is its small size in low dimensions~\cite{debergDelaunayTriangulations2008}; namely the Alpha complex on a random sample scales exponentially with the dimension of the sample and linearly with the sample size, see~\cite{edelsbrunnerExpectedSizesPoisson2017} for a further discussion. This is acceptable for low dimension, but impractical for higher ones. The Vietoris-Rips complex does not scale with the dimension but it scales exponentially with the sample size. For small samples in high dimensions, this construction should be preferred.

Counting the simplices with a sign yields the Euler characteristic, a fundamental topological invariant.
\begin{definition}
Let $K$ be a finite abstract simplicial complex.
Its \textit{Euler characteristic} is
\[
    \chi(K) = \sum\limits_{\sigma \in K}(-1)^{\dim(\sigma)}.
\]
\end{definition}

In the following we use the \v{C}ech construction in the theoretical part. Due to its sparse nature the Alpha construction is used in the implementation. They are topologically equivalent by the nerve lemma \cite[III.2]{edelsbrunnerComputationalTopologyIntroduction2010}, hence they give the same ECC.

It should be noted that, for a given sample $X$, the Euler characteristics of its Vietoris-Rips complex, $\chi(\mathcal{R}_r(X))$, may be different from $\chi(\mathcal{A}_r(X))$ and $\chi(\mathcal{C}_r(X))$. An example can be found in the sample presented in Figure~\ref{fig:VR} in which the 2-simplex (triangle) on the left is filled in \redsout{in }the Vietoris-Rips complex, but empty for the \v{C}ech and Alpha complex. 

Keeping track of how the Euler characteristic changes with the scale parameter yields the main tool of our interest:
\begin{definition}
    Given a finite subset $X\subseteq \R^d$, define its \textit{Euler characteristic curve (ECC)} as
    \[
        \chi(X)\colon [0,\infty) \to \Z, \; r \mapsto \chi(\mathcal{A}_r(X)).
        \]
\end{definition}
The ECC of the sample from Figure \ref{fig:PointCloud} is displayed in Figure \ref{fig:ECC_example_pointcloud}.
\begin{figure}
    \includegraphics[width = \linewidth]{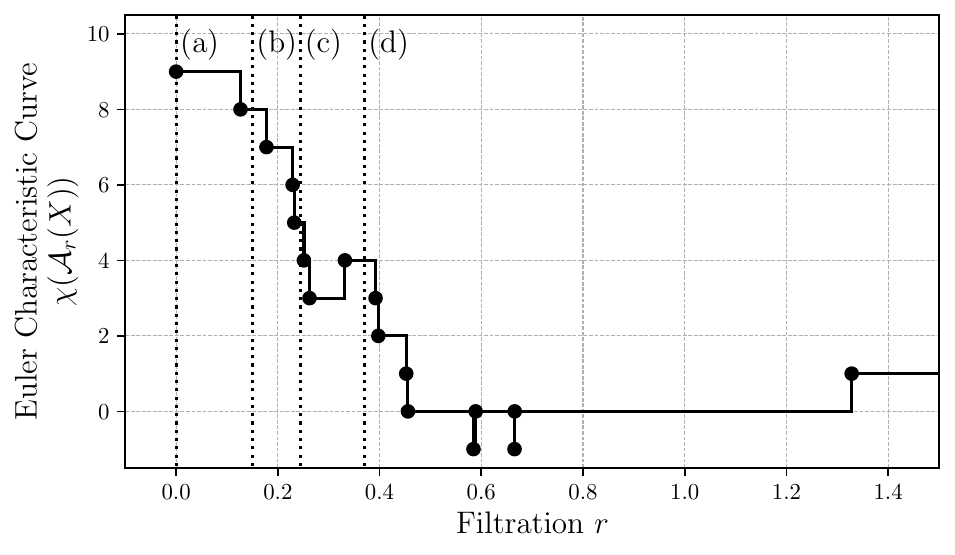}
    \caption{The ECC of the sample from Figure \ref{fig:PointCloud}. The filtration values (a)-(d) correspond to the complexes in Figure \ref{fig:PointCloud}-\ref{fig:VR3}.\label{fig:ECC_example_pointcloud}}
\end{figure}
\red{First applications of the ECC date to back to work of Worsley on astrophysics and medical imaging \cite{Worsley1996Geometry}.}

\subsubsection*{Topology of Random Geometric Complexes}
In the considered setting, the vertex set from which we build simplicial complexes \red{is} sampled from some unknown distribution.
The literature distinguishes two approaches, \textit{Poisson} and \textit{Bernoulli} sampling; see \cite{bobrowskiTopologyRandomGeometric2017} for a survey.
In the first setting, the samples are assumed to be generated by a spatial Poisson process.
We focus on the \textit{Bernoulli} sampling scheme in this paper. 
This means that we consider samples of $n$ points sampled i.i.d. from some $d$ dimensional distribution.
Furthermore, there are three regimes to be considered when the sample size goes to infinity \red{\cite[Section 1.4]{penroseRandomGeometricGraphs2003}}. We consider the geometric complex at scale $r\red{_n}$ \red{for a sequence $r_n\to0$ whose topology} is determined by the behaviour whether
\[
	n\cdot r\red{_n}^d \to \begin{cases}
		\infty,\\
		\lambda\in (0,\infty) \text{ constant},\\
		0.
	\end{cases}
\]
In the \textit{supercritical regime}, $n \cdot r\red{_n}^d \to \infty$, so that the domain gets densely sampled the geometric complex is highly connected. Intuitively, this regime maintains only global topological information and forgets about local density.
In the \textit{subcritical regime}, $n \cdot r\red{_n}^d \to 0$, so that the domain gets sparsely sampled and the geometric complex is, informally speaking, disconnected (consult~\cite{bobrowskiTopologyRandomGeometric2017} for details).
In this paper, we focus on the \textit{thermodynamic regime}, i.e. we keep the quantity $n \cdot r\red{_n}^d\red{ = \lambda}$ constant.
\red{
Up to a constant factor, the quantity $n\cdot r_n^d$ is the average number of points in a ball of radius $r_n$ \cite[Section 1]{bobrowskiTopologyRandomGeometric2017}.
This value neither goes to zero nor to infinity as $n\to \infty$ in the thermodynamic regime, leading to complex topology; see for instance \cite[Chapter 9]{penroseRandomGeometricGraphs2003}.
Now it is straightforward to observe that a subset of our sample $\sigma\subseteq X$ forms a simplex in the \v{C}ech complex at scale $r_n$ iff
\[
    \bigcap_{x\in\sigma} B_{r_n}(x) \neq \emptyset \Leftrightarrow \bigcap_{x\in n^{1/d}\sigma} B_\lambda(x) \neq \emptyset.
\]
This is because for any $x\in X, x'\in\R^d$, we have
\begin{align*}
    \|x'-x\| \leq r_n 
    &\Leftrightarrow n^{1/d}\|x'-x\|\leq n^{1/d}r_n \\
    &\Leftrightarrow \|n^{1/d}x'-n^{1/d}x\|\leq \lambda^{1/d}
\end{align*}
This observation motivates us to scale a sample of size $n$ by $n^{1/d}$.
In fact, this setup aligns with the approach of \cite{krebsApproximationTheoremsEuler2021}.
}
Due to this scaling, the average number of points in a ball of radius $r\red{=\lambda^{1/d}}$ stays the same \red{as we increase $n\to \infty$}.
Therefore, it makes sense to compare ECCs at fixed radius $r\red{=\lambda^{1/d}}$ for samples of different sizes\red{.
Visually speaking, we can compare (expected) ECCs from samples of different sizes in a common coordinate system using the $r$-axis scaled in this way.
In particular, one can study the point-wise limit of the expected ECC; that is, when the sample size approaches infinity for a fixed $r$.}
Moreover, this rescaling allows us to conduct two sample tests with samples of different sizes, cf. Section~\ref{sec:two_sample_test}.

\subsection{Previous Work}\label{subsec:PreviousWork}
Let us briefly review some related work on the intersection of topology and statistics.
 The most popular tool of TDA is \textit{persistent homology}.
Its key property is \textit{stability} \cite{cohen-steinerStabilityPersistenceDiagrams2007}; informally speaking, a small perturbation of the input yields a small change in the output.
However, persistent homology is a complicated setting for statistics; for example, there are no unique means \cite{turnerFrechetMeansDistributions2014}.

For a survey on the topology of random geometric complexes see \cite{bobrowskiTopologyRandomGeometric2017}\red{. A} text book for the case of one-dimensional complexes, i.e. graphs, is \cite{penroseRandomGeometricGraphs2003}.
The Euler characteristic of random geometric complexes has been studied in \cite{bobrowskiDistanceFunctionsCritical2014, bobrowskiTopologyProbabilityDistributions2013}.
Notably, in \cite{bobrowskiTopologyProbabilityDistributions2013}, the limiting ECC in the thermodynamic regime is computed for the uniform distribution on $[0,1]^3$.
More recently, \cite{thomasFunctionalLimitTheorems2020b} provided a functional central limit theorem for ECCs, which was subsequently generalized by \cite{krebsApproximationTheoremsEuler2021}.
The Euler characteristic has been studied in the context of random fields \cite{adler_random_2007} \red{ by Adler and Taylor.} Adler suggested to use it for model selection purposes and normality testing \cite[Section 7]{adler_new_2008}.
Building on this work, such a normality test has been extensively studied in \cite{bernardino_test_2017}. 
Using topological summaries for statistical testing has moreover been suggested by \cite{ciprianiTopologybasedGoodnessoffitTests2022a} for persistence vineyards, \cite{biscioTestingGoodnessFit2019} for persistent Betti numbers and \cite{botnanConsistencyAsymptoticNormality2021} for multiparameter persistent Betti numbers.
Mukherjee and Vejdemo-Johansson \cite{vejdemo-johanssonMultipleTestingPersistent2022} describe a framework for multiple hypothesis testing for persistent homology.
Very recently, Vishwanath et al. \cite{vishwanath_limits_2022} provided criteria to check the injectivity of topological summary statistics including ECCs.

\subsection{Our Contributions}
In this paper, to the best of our knowledge, we present the first mathematically rigorous approach 
using the Euler characteristic curves to perform general goodness-of-fit testing. Our procedure is theoretically justified by Theorem 
\ref{thm:TypeIIErrorAsymptotic}.
The concentration inequality for Gaussian processes \red{(Lemma \ref{lemma:lemma_bound})} might be of independent interest.

Simulations conducted in Section~\ref{sec:NumericalExperiments_one_sample} and~\ref{sec:NumericalExperiments2sample} indicate that TopoTest outperforms 
the Kol\-mo\-go\-rov\--Smirnov \red{test }we used as a baseline in arbitrary dimension both in terms of the test power
but also in terms of computational time for moderate sample sizes and dimensions. 

The implementation of TopoTest is publicly available at
\url{https://github.com/dioscuri-tda/topotests}.

\section{Method}
\label{sec:method}
\subsection{One-sample test}
While topological descriptors are computable and have a strong theory underlying them, they are not complete invariants of the underlying distributions, as recently pointed out in \cite{vishwanath_limits_2022}.
Hence the statement of the null hypothesis and the alternative require some care.
\begin{definition}\label{def:EulerEquivalent}
    We say two distributions $F,G$ are \textit{Euler equivalent}, denoted $F \chieq G$, if $\chi_F(t) \overset{D}{=} \chi_G(t)$ for all $t>0$.
\end{definition}
For instance, if $G$ arises from $F$ via translations, rotations or reflections, $F \chieq G$.
For a more interesting instance of Euler equivalent distributions, see example~\ref{ex:counterexample} below.

We aim to solve the following:
Given a fixed null distribution $F$ and a sample $X$ following an unknown distribution $G$, we test
\begin{equation}\label{eqn:EulerOneSampleHypothesis}
    H_0 \colon G \chieq F \qquad\text{vs.}\qquad H_1 \colon G \not\chieq F.
\end{equation}
Compare this formulation to the problem stated in~(\ref{eqn:OneSampleHypothesis}).
As the ECC of the Alpha and \v{C}ech complexes are equal, we will use them interchangeably.
We write
\[
    \chi(n,r) = \chi ( \mathcal{C}_r(X)),
\]
where $n$ is the cardinality of $X$. Given some distribution $F$ on $\R^d$ against which we want to test, we are interested in the expected ECC  of the \v{C}ech complex of scale $r$ of $n$ i.i.d. points drawn according to $F$, denoted as $\E_F(\chi(n,r))$. 
The TopoTest employs the supremum distance between the ECC computed based on sample points, $\chi(\mathcal{C}_r(X))$, and the expected ECC, $\E_F(\chi(n,r))$, under $H_0$, i.e. the test statistic is  

\begin{equation}
\label{eqn:Delta}
    \Delta_n := n^{-1/2}\supnorm{r} | \chi(\mathcal{C}_r(X)) - \E_F(\chi(n,r))|,
\end{equation}
where $T \in \mathbb{R}^+$. Therefore, by using ECC as topological summary of 
the dataset we reduce the initial $d$-dimensional problem to a one-dimensional setting.
If $\Delta_n$ defined in (\ref{eqn:Delta}) is large enough the null hypothesis is rejected, while for small values of $\Delta_n$ the test fails to reject the $H_0$. More precisely: given the significance level $\alpha$ we consider a rejection region
$R_\alpha = [t_\alpha, \infty)$ such that

\begin{equation}
\label{eqn:threshold_definition}
\begin{split}
    &\P(\Delta_n \in R_\alpha \vert H_0) \\&= \P\left(\left.n^{-1/2}\supnorm{r} | \chi(\mathcal{C}_r(X)) - \E_F(\chi(n,r))| >t_\alpha \right\vert H_0 \right)\\&=\alpha.  
\end{split}
\end{equation}

\red{
The threshold value $t_\alpha$ depends on the significance level $\alpha$
and $F$ (and hence also on dimension $d$),
however the dependence on $F$ is dropped in the notation.
}
We prove that this test is consistent below in Section \ref{subsec:power}.
\begin{remark} 
The test statistic (\ref{eqn:Delta}) is based on the difference between sample ECC and ECC expected under $H_0$.
A natural, yet still open question, arises; how likely it is that two isometry-nonequivalent distributions will be Euler-equivalent and hence indistinguishable for test statistics (\ref{eqn:threshold_definition}).
In a naive search where we considered over 1000 different univariate probability distributions defined in $\mathbb{R}_+$ we could not find any such example. 
Therefore we believe that, the Euler-equivalence is not a practical limitation of our method. 
\end{remark}

\subsection{Two-sample test}
\label{sec:two_sample_test}
A test statistic based on the Euler characteristic curve can also be adapted to the two-sample problem.
Given two samples $X, Y \subset \R^d$ of possibly different sizes, following unknown distributions 
$X\sim F$ and $Y\sim G$, we are testing the null hypothesis $H_0\colon G \chieq F$.
The test statistic in this setting is the supremum distance between the normalized ECCs
\[
    \Delta(\chi(X), \chi(Y)) = \supnorm{r} \left\vert \frac{1}{|X|} \chi(\mathcal{A}_r(X)) - \frac{1}{|Y|} \chi(\mathcal{A}_r(Y))  \right\vert.
\]
Moreover, recall that we rescale the samples to have a fixed average number of points in a ball of radius $r$, independently of the sample size.
Since the null distribution is unknown, we fall back on a permutation test \cite[Section 16.3]{arias2022principles} to compute the $p$-value, see Algorithm \ref{algo:2SampleTesting} for the details. 

As for any permutation test, the procedure is computationally expensive as it requires computing ECCs 
for a variety of point sets resampled from the union of the two input datasets. The application of this approach 
is therefore limited to rather small sizes of input data set\red{s}. 
See Section~\ref{sec:NumericalExperiments2sample} for results of a simulation study in which the performance of this approach is compared with the two-sample Kolmogorov-Smirnov test. 

\subsection{Power of the one-sample test}\label{subsec:power}
\subsubsection*{Overview}
The TopoTest relies on the Functional Central Limit Theorem of Krebs et al. \cite[Theorem 3.4]{krebsApproximationTheoremsEuler2021}, hence it works under the following, rather technical, assumption
\begin{assumption}\label{assumption}
The null distribution has compact convex support inside $[0,1]^d$.
It admits a bounded density $\kappa$ that can be uniformly approximated by blocked functions $\kappa_n$.
\end{assumption}
Recall from \cite[equation 3.8]{krebsApproximationTheoremsEuler2021}, that the approximation by blocked functions means $\lim\limits_{n\to \infty} \|\kappa -\kappa_n\| = 0$, where each $\kappa_n$ is constant on grid elements of a partition of the unit hypercube $[0,1]^d$ into an equidistant grid of $m^d$ subcubes.
In particular, bounded measurable functions satisfy this assumption. 

We will show, for a fixed significance level $\alpha$, that the mean of the test statistic $\Delta_n$ does not grow with $n$ under the null hypothesis, while it grows at least like $\sqrt{n}$ under the alternative hypothesis. Moreover, in both cases $\Delta_n$ is concentrated around its mean allowing to control the type II error of the TopoTests.
\subsubsection*{Case $H_0$ true}
By \cite{thomasFunctionalLimitTheorems2020b} and \cite[Theorem 3.4]{krebsApproximationTheoremsEuler2021}), 
we have convergence of 
$\Delta_n$ in distribution in the Skorokhod $J_1$-topology to a centered Gaussian process $f_r$,
\begin{equation}
\label{eqn:GaussianProcess}
    n^{-1/2} \left( \chi(\mathcal{C}_r(X)) - \E_F(\chi(n,r)) \right) \xrightarrow[n \rightarrow \infty]{D} f_r.
\end{equation}
Here it is assumed that the sample is drawn from a distribution satisfying Assumption~\ref{assumption} and scaled by $n^{1/d}$. 
Let us denote
\[
Z_T = \supnorm{r} |f_r|.
\]
In the following we will approximate the finite-sample distribution of 
$n^{-1/2} (\chi(\mathcal{C}_r(X)) - \E_F(\chi(n,r)))$ 
by the limiting Gaussian process $f_r$. 
Therefore, \red{for} sufficiently large $n$ we assume that
\begin{equation}
\label{eqn:finite_sample_appoximation}
\Delta_n \stackrel{D}{=} Z_T.
\end{equation}
The quality of this approximation was studied numerically -- please refer to Figure \ref{fig:finite_sample_appoximation}.

For $Z_T$ we have the Borell-TIS inequality{\footnote{\red{The abbreviation stands for Tsirelson, Ibragimov, and Sudakov, who discovered the inequality independently of Borell.}}\cite[Section 2.1]{AdlerGaussianInequalities2007},
\begin{equation}
\label{eqn:borel_tis}
\begin{split}
    \P\left(Z_T > t \right) &= \P\left( \supnorm{r} |f_r| > t\right) \\
    &\le \exp\left(-\left[t-\E\left(\supnorm{r} \vert f_r\vert\right)\right]^2 / 2\sigma_T^2\right),
\end{split}
\end{equation}
where $\sigma_T^2 = \supnorm{r} \E(f_r^2)$.
\begin{figure*}
\label{fig:finite_sample_appoximation}
    \includegraphics[width = 0.32\linewidth]{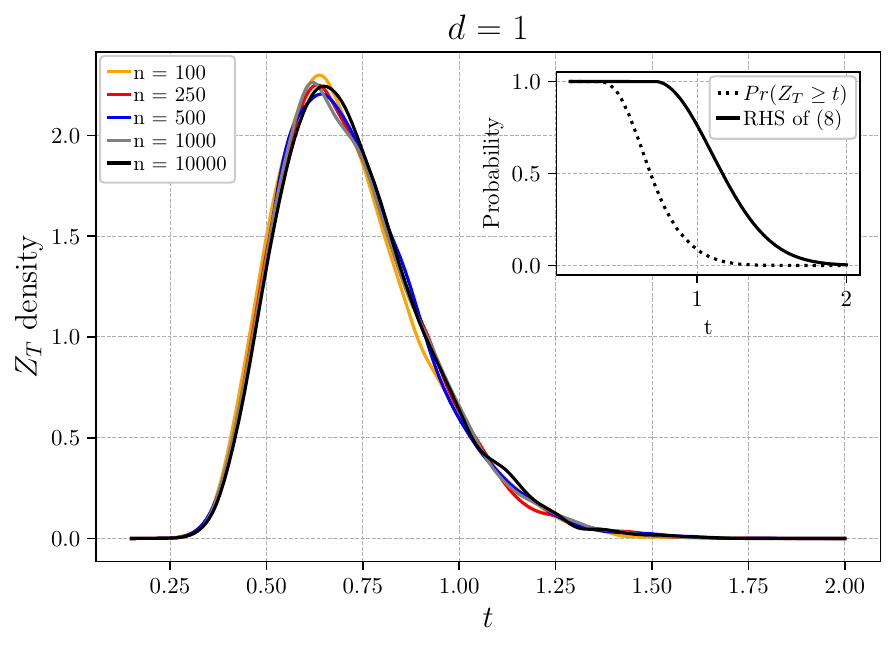}
    \includegraphics[width = 0.32\linewidth]{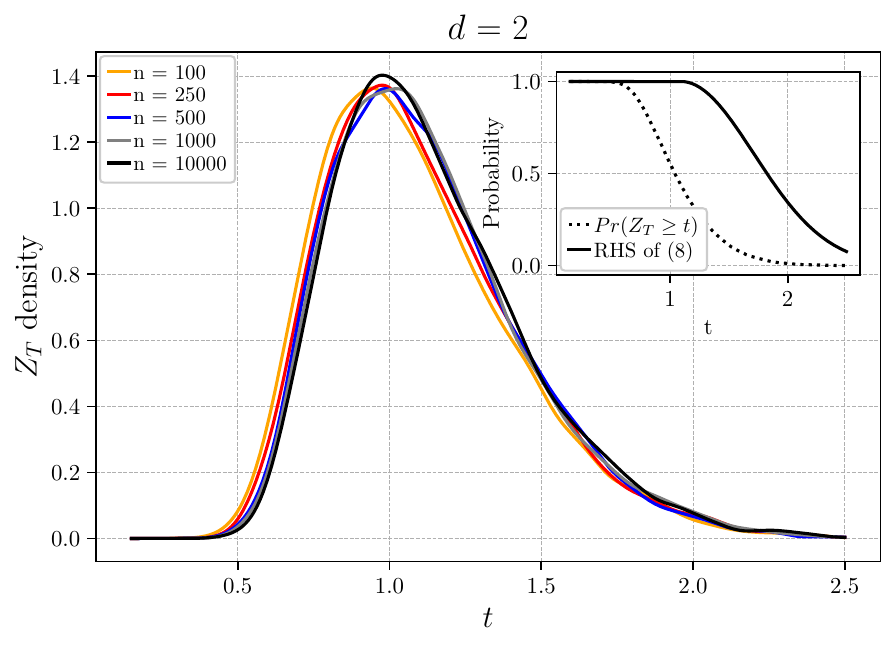}
    \includegraphics[width = 0.32\linewidth]{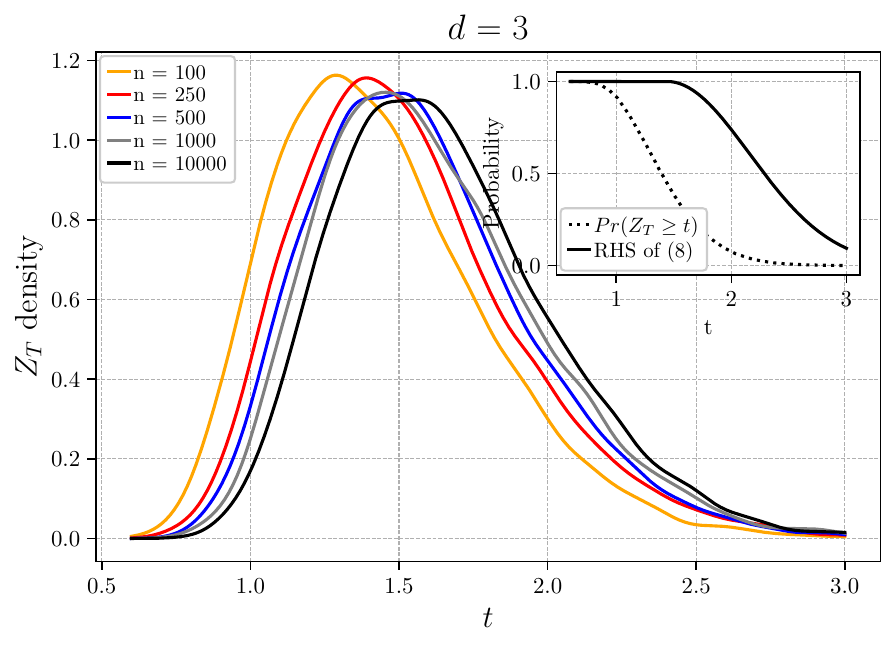}
    \caption{Numerical inspection of the quality of finite sample approximation (\ref{eqn:finite_sample_appoximation}). \red{The empirical} distribution of $Z_T$ converges with the increasing sample size. Even for three-dimensional case the distribution obtained for $n=100$ 
    is a reasonable approximation for large-sample empirical distribution.
    An inset in each plot shows \red{left- and right-hand side} of the inequality (\ref{eqn:borel_tis}) \red{--this provides another justification for approximation (\ref{eqn:finite_sample_appoximation}}).
    }
\end{figure*}

Therefore, for $n$ large enough,
\begin{align*}
& \P\left( \Delta_n > t |H_0\right) \\
 & =
        \P\left(\supnorm{r} \left|\frac{ \chi(\mathcal{C}_r(X)) - \E_F(\chi(n,r))}{\sqrt{n}}\right| >t\right)\\ \nonumber
        & \le \exp\left(-\left[t-\E\left(\supnorm{r} \frac{1}{\sqrt{n}} | \chi(\mathcal{C}_r(X)) \right.\right.\right.\\
        & \left.\left.\left.\phantom{\exp \frac{1}{\sqrt{n}}t-E-}- \E_F(\chi(n,r))|\right)\right]^2 / 2\sigma_T^2\right).
\end{align*}
Plugging in (\ref{eqn:threshold_definition}) yields 
\begin{equation*}
\begin{split}
    \alpha \le \exp\left(-\left[t_\alpha-\E\left(\supnorm{r} \frac{1}{\sqrt{n}}| \chi(\mathcal{C}_r(X)) \right.\right.\right.\\
    \left.\left.\left.\phantom{\frac{1}{\sqrt{N}}-t-E}- \E_F(\chi(n,r))|\right)\right]^2 / 2\sigma_T^2\right)
    \end{split}
\end{equation*}
which leads to
\begin{equation}
\label{eqn:threshold}
\begin{split}
        t_\alpha \le &\sqrt{-2 \sigma_T^2 \ln \alpha)} \\
        &+ \E\left(\supnorm{r} \frac{1}{\sqrt{n}} | \chi(\mathcal{C}_r(X)) - \E_F(\chi(n,r))|\right),
\end{split}
\end{equation}
i.e. $t_\alpha=O(1)$. 
\subsubsection*{Case $H_0$ false}
Now let us study the asymptotic size of 
\[
    \supnorm{r} | \chi(\mathcal{C}_r(Y)) - \E_F(\chi(n,r))|
\]
as $n \rightarrow \infty$ when $Y\sim G, G \not\chieq F$.

We have
\begin{align*}
 \E&\left(\supnorm{r} | \chi(\mathcal{C}_r(Y)) - \E_F(\chi(n,r))|\right)\\
    &\ge \supnorm{r} \E \left\vert \chi(\mathcal{C}_r(Y)) - \E_F(\chi(n,r))\right\vert \\
    &\ge  \supnorm{r} | \E_G(\chi(n,r)) - \E_F(\chi(n,r))|.
\end{align*}

Because the limiting distributions of the ECCs are different under the alternative hypothesis, this last expression diverges.
Due to \cite{bobrowskiTopologyProbabilityDistributions2013}, Corollary 4.5, $E_F(\chi(n,r)) \sim n$ with constant depending on $F$ and $d$. 
In our setting, we obtain
\begin{equation}
\label{eqn:E_omega_n}
    \E\left(\supnorm{r} n^{-1/2} | \chi(\mathcal{C}_r(Y)) - \E_F(\chi(n,r))|\right) = \Omega(\sqrt{n}).
\end{equation}

To complete the discussion, it is required to show that in the case of $H_0$ false, one also has a 
concentration around the mean, i.e. one needs to control

\begin{equation}
    \label{eqn:mean_concentration}
\begin{split}
    C_{F,G}(t) = \P\left(n^{-1/2}\left|\supnorm{r} | \chi(\mathcal{C}_r(Y)) - \E_F(\chi(n,r))|\right.\right.\\
    \left.\left. - \E\left(\supnorm{r} | \chi(\mathcal{C}_r(Y)) - \E_F(\chi(n,r))|\right)\right| > t\right).
\end{split}
\end{equation}

The lemma below provides a generalization of the Borell-TIS inequality to the case of non-centred Gaussian process.
\begin{lemma}
\label{lemma:lemma_bound}
Let $f_r$ be a centred Gaussian process and $g(r)$ some deterministic function. We have
\begin{equation}
\label{eqn:lemma_bound}
\begin{split}
    \P\left(\left| \sup_{r \in [0,T]} |f_r + g(r)| - \E\left(\sup_{r\in[0,T]} |f_r+g(r)|\right) \right| > t \right) \\ \le 2e^{-t^2/2\sigma^2},
\end{split}
\end{equation}
where $\sigma = \sup_{r\in[0,T]} \left( \E[f_r^2] \right)^{1/2}$.
\end{lemma}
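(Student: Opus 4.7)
The essential idea is to reduce the supremum of an absolute value to a plain supremum and then invoke Gaussian concentration in a form that does not require centered-ness. I would begin by rewriting
\[
    \sup_{r\in[0,T]}|f_r + g(r)| = \sup_{(r,\epsilon)\in[0,T]\times\{-1,+1\}}\epsilon\bigl(f_r + g(r)\bigr),
\]
and setting $Y_{(r,\epsilon)} := \epsilon f_r + \epsilon g(r)$. Then $\{Y_{(r,\epsilon)}\}$ is a Gaussian process indexed by $[0,T]\times\{-1,+1\}$ with mean $\epsilon g(r)$ and with variance $\mathrm{Var}(Y_{(r,\epsilon)}) = \E[f_r^2]$, so that $\sup_{(r,\epsilon)}\mathrm{Var}(Y_{(r,\epsilon)}) = \sup_{r\in[0,T]}\E[f_r^2] = \sigma^2$. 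The problem is now reduced to proving a concentration inequality for $\sup_{(r,\epsilon)} Y_{(r,\epsilon)}$ around its mean with Gaussian parameter $\sigma^2$.

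The cleanest way to obtain this is to note that the (non-centered) Borell--TIS inequality applies: for any separable Gaussian process $Y$ with $\sup_t\mathrm{Var}(Y_t)=\sigma^2<\infty$ and almost-surely finite supremum, one has $\P(|\sup_t Y_t - \E\sup_t Y_t|>u) \le 2e^{-u^2/(2\sigma^2)}$. Centering is not essential in the proof of Borell--TIS since translating a Gaussian vector by a deterministic element does not affect the isoperimetric/Lipschitz argument underlying the inequality. If one prefers a self-contained derivation, expand $f_r = \sum_k \xi_k\phi_k(r)$ via Karhunen--Lo\`eve with $\xi = (\xi_k)\sim\mathcal{N}(0,I)$, and view
\[
    F(\xi) \;=\; \sup_{r\in[0,T]}\Bigl|g(r)+\sum_k \xi_k\phi_k(r)\Bigr|
\]
as a functional on $\ell^2$. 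By the reverse triangle inequality and Cauchy--Schwarz,
\[
    |F(\xi) - F(\eta)| \le \sup_{r\in[0,T]}\Bigl|\sum_k(\xi_k-\eta_k)\phi_k(r)\Bigr| \le \|\xi-\eta\|_2 \cdot \sup_{r\in[0,T]}\sqrt{\sum_k \phi_k(r)^2} = \sigma\,\|\xi-\eta\|_2,
\]
so $F$ is $\sigma$-Lipschitz. The standard Gaussian concentration inequality for Lipschitz functionals of i.i.d.\ standard normals then yields $\P(|F(\xi)-\E F(\xi)|>t)\le 2e^{-t^2/(2\sigma^2)}$, which is exactly the claim. Both approaches bypass centered-ness because a deterministic drift $g$ leaves the covariance, hence the isoperimetric Lipschitz constant, unchanged.

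The only real obstacle is conceptual rather than technical, namely to justify that adding the deterministic drift $g$ does not enlarge the concentration parameter. This is seen most transparently in the Lipschitz formulation above: the gradient of $F$ with respect to $\xi$ is unaffected by the translation by $g$, so the Lipschitz constant, and therefore the sub-Gaussian concentration parameter, remains $\sigma = \sup_{r\in[0,T]} (\E[f_r^2])^{1/2}$, as required.
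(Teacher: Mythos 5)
Your proof is correct and takes essentially the same route as the paper's: both represent the process as a linear image of a standard Gaussian (the paper factors the covariance of finitely many marginals as $\Gamma=B^{T}B$ and passes to the limit, you use a Karhunen--Lo\`eve expansion), observe that the supremum functional is $\sigma$-Lipschitz and that the deterministic drift $g$ does not change the Lipschitz constant, and then apply Gaussian concentration for Lipschitz functionals of standard normals. The only difference is cosmetic; your Karhunen--Lo\`eve variant implicitly assumes the regularity (separability, continuity of the covariance) needed for that expansion, which the paper's finite-dimensional discretization sidesteps.
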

\begin{proof}
We follow the strategy of Ledoux \cite[Section 7.1]{ledouxConcentrationMeasurePhenomenon2005}.
Argument (2.35) in Ledoux \cite{ledouxConcentrationMeasurePhenomenon2005} yields that if $\gamma$ is a standard Gaussian measure on $\R^n$ then for every 
1-Lipschitz function $F$ on $\R^n$ and $t\ge0$ we have
\begin{equation}
\gamma\left(\left\{F \ge \int F d\gamma + t \right\}\right) \le e^{-t^2/2}.
\label{eqn:1lipschitz}    
\end{equation}
Let $r_1,\ldots,r_n$ be fixed in $[0,T]$ and consider \red{centered} Gaussian random vector $(f_{r_1}, \ldots, f_{r_n})$ in $\R^n$
with covariance matrix $\Gamma = B^TB$. Consequently, the law of $(f_{r_1}, \ldots, f_{r_n})$ is the same
as the law of $B\mathcal{N}$ where $\mathcal{N}=(N_1, \ldots, N_n)^T$ is distributed according to the standard Gaussian measure $\gamma$ on $\R^n$. 
Let $F: \R^n \rightarrow \R$ be defined as 
\[
F(x) = \max_{1 \le i \le n}\left| (Bx)_i + g(r_i) \right|, x\in\R^n.
\]
Although we have a different $F$ in our setting than \cite{ledouxConcentrationMeasurePhenomenon2005}, we can still bound the Lipschitz norm of $F$ to be at most the operator norm of $B\colon (\R^n, \|\cdot\|_2) \to (\R^n,\|\cdot\|_\infty) $.
Indeed, consider any $c>0$ such that $\|Bx\|_{\infty}\leq c \|x\|_2$ for all $x\neq 0$.
Using the triangle inequality, we estimate that for any $x\neq y \in \R^n$,
\begin{align*}
    \vert F(x) - F(y)\vert 
    & = \left\vert \max\limits_{1\leq i\leq n} |(Bx)_i+g(r_i)|\right. \\ &\phantom{\leq |}\left.- \max\limits_{1\leq i\leq n} |(By)_i+g(r_i)|\right\vert\\
    & \leq \max\limits_{1\leq i\leq n} \left\vert  (Bx)_i+g(r_i) - (By)_i-g(r_i)\right\vert\\
    & = \max\limits_{1\leq i\leq n} \left\vert  (B(x-y)_i\right\vert\\
    & \leq c \|x-y\|_2.
\end{align*}
Notice that $f_{r_i}=\sum_{j=1}^n B_{ij}N_j$ and by independence of $\{N_j\}_{1\leq j\leq n}$ we have $\E f_{(r_i)}^2 = \sum_{j=1}^n B_{ij}^2$. 
This allows us to bound the operator norm of $B$ as follows:
\begin{align*}
\|B\|_{op} &= \max_{1 \le i \le n} \left(\sum_{j=1}^n B_{ij}^2 \right)^{1/2}
    = \max_{1 \le i \le n}\left(\E(f_{(r_i)}^2)\right)^{1/2}\\
    &\le \sup_{r\in[0,T]} \left( (\E(f_{(r_i)}^2) \right)^{1/2}
    = \sigma.
\end{align*}

Consequently, $F/\sigma$ is 1-Lipschitz and by (\ref{eqn:1lipschitz}) we have
\[
\P\left(\frac{1}{\sigma}F(\mathcal{N}) - \E[\frac{1}{\sigma}F(\mathcal{N})] \ge \tilde{t} \right) \le e^{-\tilde{t}^2/2}
\]
Letting $t=\sigma\tilde{t}$ and by symmetry argument we obtain
\[
\P\left(|F(\mathcal{N}) - \E(F(\mathcal{N}))| \ge t \right) \le 2e^{-t^2/2\sigma^2}
\]
and
\begin{equation*}
\begin{split}
\P\left(\left|\sup_{1\le i \le n}|f_{r_i} + g(r_i)| - \E\left(\sup_{1 \le i \le n} |f_{r_i}+g(r_i)|\right) \right| \ge t \right) \\
\le 2e^{-t^2/2\sigma^2}.
\end{split}
\end{equation*}
The right hand side does not depend on $f(r_i)$, hence letting $n \rightarrow \infty$, inequality (\ref{eqn:lemma_bound}) is obtained. 
\end{proof}

Using the Lemma \ref{lemma:lemma_bound} we obtain following theorem
\begin{theorem}
\label{the:concentration}
Concentration around the mean $C_{F,G}(t)$, defined in (\ref{eqn:mean_concentration}), is exponentially bounded
\begin{equation}
    C_{F,G}(t) \le 2e^{-t^2/2\sigma_{G}^{2}}.
    \label{eqn:mean_concentraton_bound}
\end{equation}
\end{theorem}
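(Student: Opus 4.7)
The plan is to reduce Theorem \ref{the:concentration} to a direct application of Lemma \ref{lemma:lemma_bound}, once the quantity inside the outer supremum is written as a sum of a centered random field and a deterministic drift. Specifically, since $Y \sim G$, I would split
\begin{align*}
&n^{-1/2}\bigl(\chi(\mathcal{C}_r(Y)) - \E_F(\chi(n,r))\bigr) \\
&\qquad = n^{-1/2}\bigl(\chi(\mathcal{C}_r(Y)) - \E_G(\chi(n,r))\bigr) + n^{-1/2}\bigl(\E_G(\chi(n,r)) - \E_F(\chi(n,r))\bigr).
\end{align*}
The first summand on the right is centered, and the second summand is a deterministic function of $r$ -- precisely the role played by $g(r)$ in the statement of Lemma \ref{lemma:lemma_bound}.

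Next, because $Y \sim G$ and $G$ satisfies Assumption \ref{assumption}, the Functional CLT \cite[Theorem 3.4]{krebsApproximationTheoremsEuler2021} applied under $G$ yields that the first, centered summand converges in the Skorokhod $J_1$-topology to a centered Gaussian process $f_r^G$, in direct analogy with \eqref{eqn:GaussianProcess}. For $n$ large enough, I would then invoke the same finite-sample approximation as in \eqref{eqn:finite_sample_appoximation}, so that the scaled ECC deviation may be modeled as $f_r^G + g_n(r)$, where $g_n(r) := n^{-1/2}\bigl(\E_G(\chi(n,r)) - \E_F(\chi(n,r))\bigr)$ is deterministic and plays no role in the concentration bound.

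With this decomposition in hand, Lemma \ref{lemma:lemma_bound} applies verbatim with $f_r := f_r^G$, $g := g_n$, and $\sigma := \sigma_G = \sup_{r \in [0,T]} \bigl(\E[(f_r^G)^2]\bigr)^{1/2}$, yielding the desired bound $C_{F,G}(t) \le 2 e^{-t^2 / 2\sigma_G^2}$. The main subtlety is the passage from the finite-sample distribution of the ECC to its Gaussian limit, uniformly in $r \in [0,T]$, since Lemma \ref{lemma:lemma_bound} is phrased for a genuine Gaussian process rather than for its pre-limiting approximant; this is precisely what the Krebs et al. FCLT provides under Assumption \ref{assumption}, and once it is invoked, the rest of the argument is the same one-line application of the Lipschitz-Gaussian concentration that underlies the $H_0$ analysis. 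No new technical input beyond what has already been used in the excerpt is therefore required.
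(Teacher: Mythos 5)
Your proposal is correct and follows essentially the same route as the paper: the same add-and-subtract of $\E_G(\chi(n,r))$ to split the deviation into a centered part plus a deterministic drift, the same appeal to the FCLT under $G$ together with the finite-sample approximation of \eqref{eqn:finite_sample_appoximation}, and the same application of Lemma \ref{lemma:lemma_bound} with $\sigma = \sigma_G$. You also correctly flag the only real subtlety (replacing the pre-limit process by its Gaussian limit), which the paper handles in exactly the same informal way.
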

\begin{proof}
Subtracting and adding $\E_G(\chi(n,r))$ in (\ref{eqn:mean_concentration}) yields

\begin{align*}
& C_{F,G}(t) \\
&= \P\left(n^{-1/2}\left|\supnorm{r} | \chi(\mathcal{C}_r(Y)) - \E_F(\chi(n,r))| \right.\right.\\
&\left.\left.\phantom{=P(}- \E\left(\supnorm{r} | \chi(\mathcal{C}_r(Y)) - \E_F(\chi(n,r))|\right)\right| > t\right) \\
      & =  \P\left(n^{-1/2}\left|\supnorm{r} | \chi(\mathcal{C}_r(Y)) - \E_G(\chi(n,r))\right.\right. \\
      &  \left.\left.\phantom{=P(n^{-1/2}|\sup\limits_{T} }+ \E_G(\chi(n,r)) - \E_F(\chi(n,r))| \right.\right.\\
      &  \left.\left. \phantom{=P(}- \E\left(\supnorm{r} | \chi(\mathcal{C}_r(Y)) - \E_G(\chi(n,r)) \right.\right.\right.\\
      & \left.\left.\left.\phantom{=P(-E\sup\limits_{T}}+ \E_G(\chi(n,r))- \E_F(\chi(n,r))|\right)\right| > t\right) \\
      & =  \P\left(\left|\supnorm{r} |g_r + h(r)| \right.\right.\\
      &\left.\left. \phantom{=P(|}- \E\left(\supnorm{r}|g_r+h(r)|) \right)\right| > t\right),
\end{align*}
where the notation
\begin{align*}
 g_r & = \left(\chi(\mathcal{C}_r(Y)) - \E_G(\chi(n,r))\right)/\sqrt{n},\\
 h(r) & = \left(\E_G(\chi(n,r)) - \E_F(\chi(n,r))\right)/\sqrt{n}
\end{align*}
was introduced.
Note that by (\ref{eqn:GaussianProcess}) applied for distribution $G$ the $g_r$ converges to 
a centred Gaussian process, whereas $h(r)$ is a deterministic function. 
Let $\sigma_G^2 = \supnorm{r} \E(g_r^2)$.
Therefore using the same argument as in (\ref{eqn:finite_sample_appoximation})
by Lemma \ref{lemma:lemma_bound} bound (\ref{eqn:mean_concentraton_bound}) is obtained.
\end{proof}

The rate of type I error is controlled by the significance level $\alpha$. 
An asymptotic upper bound for type II error is given by the following theorem.

\begin{figure}
    \centering
    \resizebox{\linewidth}{!}{
    \begin{tikzpicture}[domain=0:16]
        \draw[->] (0,0) -- (16.3,0);
        \draw[->] (0,0) -- (0,4);
        \node[above, color=blue] at (13,3) {pdf of $\Delta_n$ under $H_1$};
        \fill[fill=blue!20, domain=8:11] (8,0) -- plot (\x,{3*exp((-(\x-13)^2)/4 )}) -- (11,0) -- cycle;
        \draw[color=blue, samples=100]   plot (\x,{3*exp((-(\x-13)^2)/4 )});    
        \draw[color=red] (0,0) .. controls (0.5,8) and (0.5,0) .. (16,0.05);
        \node[above, color=red] at (4,3) {pdf of $\Delta_n$ under $H_0$};        
        \draw[dashed] (13,3) -- (13,-0.1) node[below] {$\E = \Omega(\sqrt{n})$};
        \draw (11,0.35) -- (11,-0.1);
        \draw[dotted] (11,1.1) -- (11,-0.1) node[below] {$t_\alpha = O(1)$};
        \draw [<->] (11,1) -- (13,1);
        \node[above] at (12,1) {$ t_{\alpha,n}^{*}$};        
    \end{tikzpicture}
    }
    \caption{The area of shaded blue region is the probability of a type II error occuring.
    As $n \to \infty$, it goes to zero.}
    \label{fig:ProbabilityType2Error}
\end{figure}
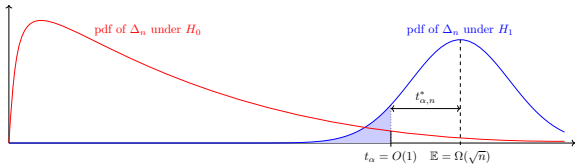
\begin{theorem}\label{thm:TypeIIErrorAsymptotic}
    For fixed $\alpha$, the probability of a type II error goes to 0 exponentially as $n \rightarrow \infty$.
\end{theorem}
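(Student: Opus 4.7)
The plan is to combine the three ingredients already established in the excerpt: the $O(1)$ bound on the critical value $t_\alpha$ from \eqref{eqn:threshold}, the $\Omega(\sqrt n)$ lower bound on the mean of the test statistic under $H_1$ from \eqref{eqn:E_omega_n}, and the sub-Gaussian concentration around that mean from Theorem~\ref{the:concentration}. Together they say that under $H_1$ the distribution of $\Delta_n$ sits around a mean of order $\sqrt n$ with Gaussian-type tails whose scale $\sigma_G$ does not grow with $n$, while the rejection threshold stays of order~$1$. This is exactly the picture drawn in Figure~\ref{fig:ProbabilityType2Error}, and the proof merely has to quantify the shaded area.

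Concretely, writing $Y\sim G$ with $G\not\chieq F$, I would set
\[
\mu_n \;=\; \E\Bigl(n^{-1/2}\supnorm{r}\bigl|\chi(\mathcal{C}_r(Y))-\E_F(\chi(n,r))\bigr|\Bigr),
\]
so that \eqref{eqn:E_omega_n} gives a constant $c>0$ with $\mu_n\ge c\sqrt n$ for all $n$ large enough, and \eqref{eqn:threshold} gives a constant $M$ with $t_\alpha\le M$. Next I would bound the type II error probability by the concentration inequality:
\begin{align*}
\P(\Delta_n\le t_\alpha\mid H_1)
&\le \P\bigl(\Delta_n - \mu_n \le t_\alpha-\mu_n\mid H_1\bigr)\\
&\le \P\bigl(|\Delta_n-\mu_n|\ge \mu_n-t_\alpha\mid H_1\bigr).
\end{align*}
For $n$ large enough that $\mu_n-t_\alpha\ge c\sqrt n - M > 0$, Theorem~\ref{the:concentration} applied with $t=\mu_n-t_\alpha$ yields
\[
\P(\Delta_n\le t_\alpha\mid H_1)\;\le\;2\exp\!\left(-\frac{(\mu_n-t_\alpha)^2}{2\sigma_G^2}\right)\;\le\;2\exp\!\left(-\frac{(c\sqrt n-M)^2}{2\sigma_G^2}\right),
\]
which is $O(e^{-\gamma n})$ for some $\gamma>0$, the claimed exponential decay.

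The main conceptual obstacle — and the one already absorbed in the preparatory material — is justifying that $\sigma_G$ is a finite constant independent of $n$; this is implicit once \eqref{eqn:GaussianProcess} applied to $G$ gives a well-defined limiting centered Gaussian process whose variance is bounded on $[0,T]$, which is the regime in which Lemma~\ref{lemma:lemma_bound} is invoked inside Theorem~\ref{the:concentration}. A minor technical point is to state the result in the asymptotic sense, namely that the bound holds for all sufficiently large $n$; one should note that we implicitly use the finite-sample approximation \eqref{eqn:finite_sample_appoximation} for both $F$ and $G$, exactly as in the proof of Theorem~\ref{the:concentration}. With these remarks in place the argument is essentially a one-line application of the two complementary bounds on $t_\alpha$ and $\mu_n$ feeding into the Gaussian concentration inequality.
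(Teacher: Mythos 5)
Your proposal is correct and follows essentially the same route as the paper: it combines the $O(1)$ bound on $t_\alpha$, the $\Omega(\sqrt n)$ growth of the mean under the alternative, and the concentration inequality of Theorem~\ref{the:concentration} applied at $t=\mu_n-t_\alpha$ (the paper's $t_{\alpha,n}^{*}$). If anything, your inclusion $\{\Delta_n\le t_\alpha\}\subseteq\{|\Delta_n-\mu_n|\ge \mu_n-t_\alpha\}$ is stated more cleanly than the paper's version of the same step.
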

\begin{proof}
We will use the threshold $t_\alpha$ defined in (\ref{eqn:threshold_definition}) 
and the concentration inequality of Theorem \ref{the:concentration}.
The idea is illustrated in Figure~\ref{fig:ProbabilityType2Error}.
Introduce 
\[
t_{\alpha,n}^{*} = \E\left(\supnorm{r}n^{-1/2}\left|\chi(\mathcal{C}_r(Y)) - \E_F(\chi(n,r))\right|\right) - t_\alpha.
\]

Due to equation (\ref{eqn:E_omega_n}) first term above is $\Omega(\sqrt{n})$ while second term is $O(1)$, therefore $t_{\alpha,n}^{*} = \Omega(\sqrt{n})$ and is positive for sufficiently large $n$. Hence we can estimate
\begin{align*}
    \P&(\text{type II error}) \\
    &\le  \P \left(\supnorm{r}n^{-1/2}\left|\chi(\mathcal{C}_r(Y)) - \E_F(\chi(n,r))\right| < t_\alpha \right) \\
& = \frac{1}{2}\P\left[n^{-1/2}\left|\E\left(\supnorm{r}\left|\chi(\mathcal{C}_r(Y)) - \E_F(\chi(n,r))\right|\right) \right.\right. \\
& \left.\left. \phantom{=\frac{1}{2} P ( }- \supnorm{r}\left|\chi(\mathcal{C}_r(Y)) - \E_F(\chi(n,r))\right|\right| > t_{\alpha,n}^{*} \right] \\
& \le  \exp\left (\frac{-{t_{\alpha,n}^{*}}^2}{2\sigma^2}\right) \sim e^{-n} \rightarrow 0.
\end{align*}
\end{proof}

\subsection{Properties of the TopoTests}
\label{sec:properties}
TopoTests rely on the Euler characteristics curve which is computed based on the Alpha complex of the input sample. The Alpha complex captures distance pattern between all data points in the samples.
\red{Therefore, TopoTest is not capable to discriminate distributions that are isometry equivalent, e.g. differ only by  translation, reflection or rotation.}
As a consequence TopoTest, contrary to Kolmogorov-Smirnov,
is not able to distinguish between 
e.g. $\mathcal{N}\left((0, 0),  \begin{bmatrix} 1 & 0 \\ 0 & 1 \end{bmatrix} )\right)$ from 
$\mathcal{N}\left((\mu_1, \mu_2),  \begin{bmatrix} 1 & \alpha \\ \alpha & 1 \end{bmatrix} )\right), \alpha \in [-1, 0) \cup (0, 1]$ as those distributions are equivalent up to translation and rotation. 
As a consequence, the alternative hypotheses in Kolmogorov-Smirnov and TopoTest 
are in fact slightly different: in the former we have $H_1: G \ne F$ while in later the inequality is understood
only up to Euler equivalence, cf. Equation~(\ref{eqn:EulerOneSampleHypothesis}). 
The same discussion also applies to the null hypothesis.
Hence, such pairs of distributions were excluded from the forthcoming numerical study.

\subsection{Non-compactly supported distributions}
\label{sec:what_to_do_in_non_compact_case}

The results on the asymptotic convergence presented in Section~\ref{subsec:power} work for compactly supported distributions. However, most of the distributions considered in practice, starting from normal distributions, are defined on non--compact support and the presented results \red{do} not apply to them directly. There are a number of ways we can adjust such a distribution so that the presented methodology applies. In what follows we discuss three possible strategies, starting from the one we consider the most practical one

\begin{enumerate}
\item \emph{Restricting a distribution to a compact subset;}\newline
In this case, the given distribution is restricted to a compact rectangle. In our case we choose a symmetric rectangle $[-a,a]^d$ for $a$ being the maximal representable double precision number. This ensures that every sample that can be analyzed in a computer is automatically coming from such a restricted distribution. We note that, formally, such a restricted distribution need to be rescaled to become a probability distribution. However, in all practically relevant cases we are aware of, such a restricted distribution will be infinitesimally close, on its domain, to the original one, defined on an unbounded domain. Therefore, we argue that in practice, the presented methods can be applied even to distributions with no compact support. Additionally, the simulations performed provide strong evidence for this claim.
\item \emph{Rescaling a distribution to a compact subset;}\newline
Here a transformation, $\arctan(\gamma x) : \mathbb{R} \rightarrow [-\frac{\pi}{2},\frac{\pi}{2}]$ is applied separately to each coordinate to map the unbounded domain to a compact region\red{.}

We observe that for $x \in [-2,2]$, or for any similar interval centered around zero, $\arctan(x)$ is close to a linear function, hence the distance between points before and after applying the map, should be proportional to each other regardless of the points. To keep such a distortion of distances between points before and after rescaling, the scaling parameter $\gamma$ is used. 
For instance, we may choose it in the way that $10$ standard deviations in our data, after divided by $\gamma$, have values in the interval $[-2,2]$. For multivariate distributions the scaling can be applied separately in each dimension.
Such a rescaling does not have any major impact on the powers of the tests as discussed in Sections~\ref{sec:NumericalExperiments_one_sample} and~\ref{sec:NumericalExperiments2sample}. At the same time, it allows to map any unbounded distribution to a compact domain. One should note, however, that a bounded distribution, transformed by $\arctan$ may be, in some pathological cases, unbounded. Hence, before using this transformation, the \red{boundedness} of the output distribution needs to be verified.

\emph{Transforming into copula;} \newline
The marginals $F_1, \ldots, F_d$ of the distribution $F$ are continuous, hence one can apply the probability integral transform~\cite{castella2002} to each component of the random vector $X$ sampled form a distribution $F$. Then the random vector 
\begin{equation}
\label{eqn:copula}
(U_1, \ldots, U_d) = (F_1(X_1), \ldots, F_d(X_d))
\end{equation}
is supported on a unit cube $[0, 1]^d$ and has uniformly distributed marginals. The joint distribution function of $(U_1, \ldots, U_d)$ forms a copula. 
Since the null distribution $F$ is given, the marginal distributions $F_1, \ldots, F_d$ can be derived. The transformation (\ref{eqn:copula}) must be applied to both the sample and null distribution $F$.
Transformation (\ref{eqn:copula}) preserves the correlation structure and transforms the initial distribution $F$ onto a compact support fulfilling the Assumption~\ref{assumption}.  Although such transformation is easy to compute and quite general, simulation studies showed that the power of resulting test is significantly reduced.
\end{enumerate}

\section{Algorithms}
\label{sec:algorithms}

\subsection{One-sample test}
The test statistic for one-sample TopoTest, $\Delta$ defined in (\ref{eqn:Delta}), involves $\E_F(\chi(n,r))$ being 
the ECC expected under $H_0$. There is no compact formula that can be applied to compute $\E_F(\chi(n,r))$ for an arbitrary
distribution function $F$ in arbitrary dimension $d$ although some formulas are available in case of the multivariate 
uniform distribution~\cite{bobrowskiTopologyProbabilityDistributions2013}. 
However one can use the approximation of $\E_F(\chi(n,r))$ based on average ECC computed on \red{a} collection of randomly generated ECCs. 
Notice that $\chi(\mathcal{C}_r(X))$ can only take on finitely many values because the underlying sample 
is finite. Therefore, $\E_F(\chi(n,r))$ is finite. 
The strong law of large numbers applies and we can approximate this expectation empirically, \red{i.e.} let $Y_1,\ldots Y_M$ be i.i.d. samples each consisting of $n$ points drawn i.i.d. from $F$, then
\begin{equation}
\label{eqn:estimated_E}
    \widehat{\E}_F(\chi(n ,r)) := \sum\limits_{i=1}^{M} \frac{\chi(\mathcal{C}_r(Y_i))}{M} \xrightarrow[M\to \infty]{a.s.} \E_F(\chi(n,r)).
\end{equation}
Due to the continuous mapping theorem, the above point-wise convergence result allows us to use an empirical estimate $\widehat{\E}_F(\chi(n ,r))$ 
instead of $\E_F(\chi(n,r))$ in practice when computing the statistic $\Delta_n$  leading to statistic
\begin{equation}
\label{eqn:Delta_hat}
\begin{split}
    \widehat{\Delta}_n &:= \widehat{\Delta}(\chi(\mathcal{C}(X), \widehat{E}_F(\chi(n,r))))\\
    &:=  \supnorm{r} \frac{1}{\sqrt{n}} | \chi(\mathcal{C}_r(X)) - \widehat{\E}_F(\chi(n,r))|,
\end{split}
\end{equation}
that was actually used in simulations. 
It should be mentioned that the estimator $\widehat{\E}_F(\chi(n ,r))$ does not depend on the sample being tested
and by increasing $M$ can be arbitrary close to $\E_F(\chi(n,r))$.

The algorithm for computing the TopoTest for one sample can be divided into two steps.
Firstly, in the \textit{preparation step} an average ECC for given null distribution $F$ is computed.
Then the critical value of the test statistic is estimated empirically by drawing a set of random samples from $F$ 
and computing the distance between ECCs corresponding to those samples and the average ECC computed previously.
Secondly, in the \textit{testing step}, the distance of the ECC of the given sample to the averaged 
ECC for the considered distribution is computed and compared to the critical values obtained in the first step.
This procedure is provided in details by Algorithm \ref{algo:1SampleTesting}.
\begin{algorithm}
    \caption{Algorithm for one-sample testing}\label{algo:1SampleTesting}
    \SetAlgoLined
    \KwInput{point sample $X\in \R^d$, null distribution $F$, significance level $\alpha$, \red{$M$:} number of samples draw from $F$ to estimate average ECC, \red{$m$:} number of samples draw \red{from} $F$ to
    estimate the threshold value.}
    \KwOutput{\red{Rejecting or failure to reject} of null hypothesis, $p$-value}
    Let $n = |X|$\\
        \tcc{"Preparation", i.e. determine the threshold $t_\alpha$ for \red{rejecting} the null hypothesis}
    \For{$i \gets 1,\ldots, M$}
    {
        $Y_i \gets$ i.i.d. sample of $n$ points from $F$\\
        Compute the ECC $\chi(\mathcal{A}(Y_i))$\\
    }
        Compute the average ECC $\overline{\chi}(t) \gets \frac{1}{M}\sum\limits_{i=1}^{M} \chi(\mathcal{A}_t(Y_i))$\\
    \For{$i \gets 1,\ldots, m$}
    {
        $Y'_i \gets$ i.i.d. sample of $n$ points from $F$\\
        Compute the ECC $\chi(\mathcal{A}(Y'_i))$\\
        Compute the deviation from average $ \Delta_i \gets \sup\limits_{t} \frac{1}{\sqrt{n}}\vert \chi(\mathcal{A}_t(Y'_i)) - \overline{\chi}(t)\vert$\\
    }   
    Let $t_{\alpha}\in\R$ such that $\#\{\Delta_i > t_{\alpha}\} <\alpha m$\\

    \tcc{"Testing", i.e. compare the threshold value with sample distance}
    Compute the ECC $\chi(\mathcal{A}(X))$\\
    $\Delta(\chi(\mathcal{A}(X)), \overline{\chi}) \gets \sup\limits_{t} \frac{1}{\sqrt{n}}\vert \chi(\mathcal{A}_t(X)) - \overline{\chi}(t)\vert$\\
    $pv\gets \frac{1}{M} \#\{\Delta_ i > \Delta(\chi(\mathcal{A}(X)), \overline{\chi})\}$\\
    \Return{$\Delta(\chi(\mathcal{A}(X)), \overline{\chi}) < t_\alpha$, pv}
\end{algorithm}

\begin{remark}
The \textit{preparation step} in Algorithm \ref{algo:1SampleTesting} depends only on sample size $n$ and
null distribution $F$ but is independent of actual sample $X$. Hence needs to \red{be} performed only once if several data samples \red{of size} $n$ are considered.
\end{remark}

\begin{remark}
The threshold value $t_{\alpha}$ used in the TopoTest is obtained from a numerical Monte Carlo simulation performed for a family of finite samples of a size $n$ and does not explicitly employ asymptotic bounds from Section~\ref{sec:method}.
\end{remark}

\begin{remark}
The Monte Carlo parameters $M$ and $m$ should be sufficiently large to obtain an accurate resulting test. For the distributions considered in this paper, values $M=m=1000$ were selected.
\end{remark}

\red{
\begin{remark}
The need to utilize the Monte Carlo approach to determine threshold value $t_\alpha$ stems from the fact that the distribution of the test statistic (\ref{eqn:Delta}) depends on the distribution of $F$ and the size of the samples for which TopoTest was built. In general, this distribution is unknown. 
The simulations showed that employing an asymptotic distribution, approximated numerically by using a large sample size $n$ in the preparation step, provided incorrect empirical significance levels
in case of samples much smaller than $n$.
\end{remark}
}

\begin{example}
 \begin{figure}
 	\includegraphics[width=\linewidth]{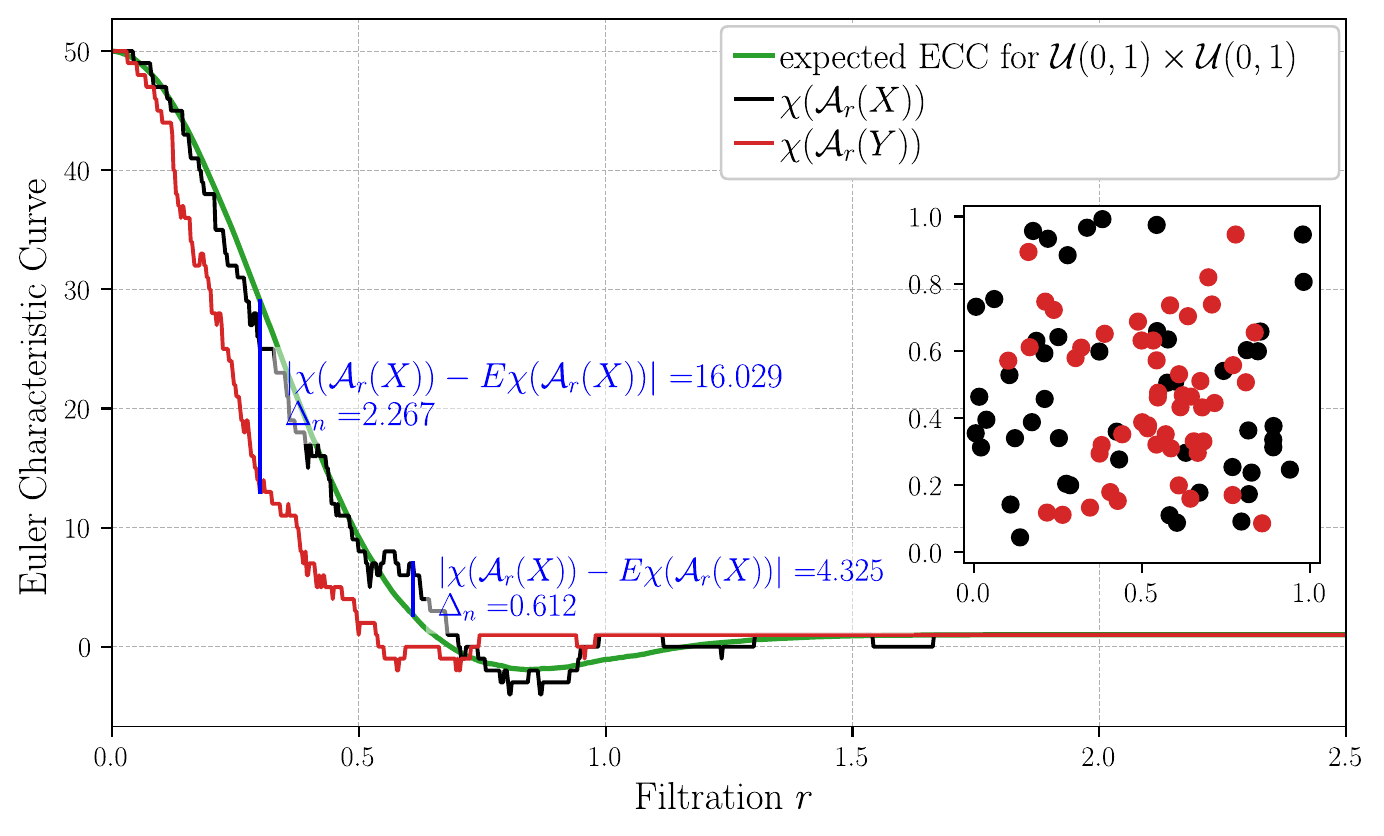}%
 	\caption{
  Euler characteristic curves of two samples of a size $50$; $X \sim \mathcal{U}(0,1) \times \mathcal{U}(0,1)$ (in black) and $Y \sim \beta(3,3) \times \beta(3,3)$ (in red). The green curve represents the expected ECC for $\mathcal{U}(0,1) \times \mathcal{U}(0,1)$.
  Samples are shown in the inset.
  }
  \label{fig:UniformSquareECCDistance}
 \end{figure}
Consider the samples $X, Y\subseteq [0,1]^2$ consisting of the $50$ black and $50$ red points
as shown in the inset in Figure \ref{fig:UniformSquareECCDistance}.
Let us look at the two samples separately, for each of them we perform the one-sample test against the uniform distribution.
We want to test, at significance level $\alpha=0.05$, whether they follow (up to an isometry of $\R^2$) the uniform distribution.
The ECC of $X$ is shown in black and the one of $Y$ in red in Figure \ref{fig:UniformSquareECCDistance}.
The green curve represents the expected ECC under the null hypothesis, estimated via $M=1000$ Monte Carlo iterations using (\ref{eqn:estimated_E}).
We find the test statistic (\ref{eqn:Delta_hat}) computed between 
the $\chi(\mathcal{A}_r(X))$ and the average curve is $\widehat{\Delta}_n = 0.612$.
Comparing this with the computed threshold of $t_\alpha = 1.318$, we conclude that we do not have evidence to reject the null hypothesis. The $p$-value is $0.916$.
In contrast, test statistics computed for $\chi(\mathcal{A}_r(Y))$ is much larger and equals $\widehat{\Delta}_n = 2.267$. Again using $\alpha = 0.05$, the test provides evidence to reject the null hypothesis with $p$-value computed to be $0.00$. And indeed, we generated $X$ from the bivariate uniform distribution (i.e. null distribution) whereas $Y$ was sampled from $\beta(3,3) \times \beta(3,3)$
\red{, i.e.  Cartesian product of two independent univariate $\beta(3,3)$ distributions. }.
\end{example}

\begin{example}\label{ex:counterexample}
\begin{figure}
 	\includegraphics[width=\linewidth]{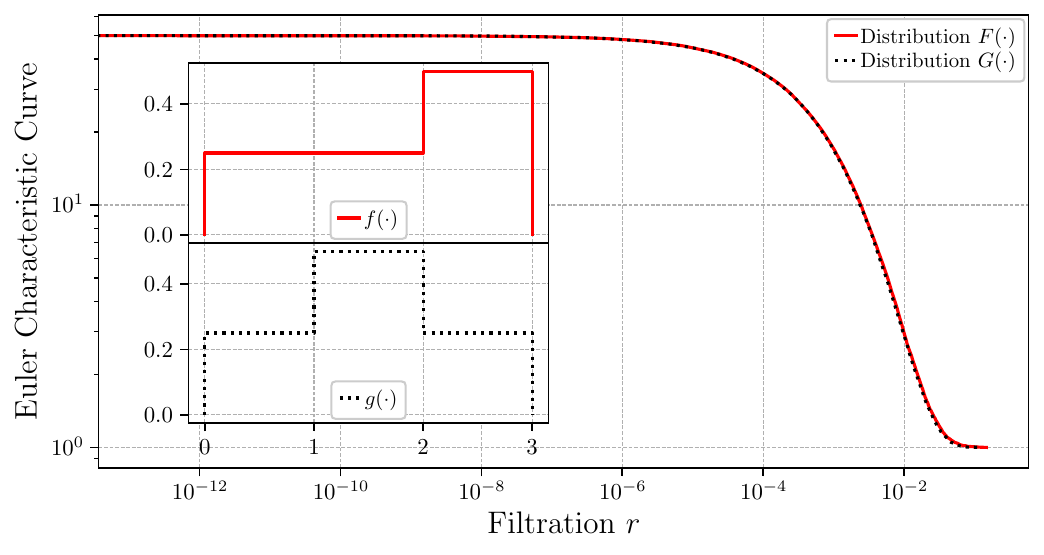}%
    \caption{Expected ECCs of distributions $F$ and $G$ for $n=50$. The inset shows the corresponding densities $f$ and $g$}
    \label{fig:sameECC}
\end{figure}
Consider the distributions $F$ and $G$ with densities
\begin{align*}
        f(x) &= \frac{1}{2}\mathbb{I}_{(0, 2)}(x) + \frac{1}{2}\mathbb{I}_{(2, 3)}(x), \\
         g(x) &= \frac{1}{4}\mathbb{I}_{(0,1)}(x) + \frac{1}{2}\mathbb{I}_{(1,2)}(x) 
        + \frac{1}{4}\mathbb{I}_{(2,3)}(x)
\end{align*}
    Observe that for each $t>0$,
    \begin{equation}
    \label{eqn:beta_condition}
    \int\limits_{f\geq t} f(x) \text{d}x = \int\limits_{g\geq t} g(x) \text{d}x = \begin{cases}
        1  & \text{if }  t\leq 1/4,\\
        1/2  & \text{if }  1/4 < t \leq 1/2,\\
        0  & \text{if }  t > 1/2.
    \end{cases}
    \end{equation}
    Hence by Lemma 5.1 of \cite{vishwanath_limits_2022}, the ECCs of $F$ and $G$ in the thermodynamic limit follow the same distribution. The limiting ECCs for $F$ and $G$ are shown in Figure \ref{fig:sameECC}. Note that distributions $F$ and $G$ are not isometric-equivalent and yet the corresponding ECCs are the same as the distributions are $\beta$-equivalent, hence also Euler equivalent.
    $F$ and $G$ therefore form an example of distributions that are indistinguishable by TopoTest.
    Indeed, the power of one-sample Kolmogorov-Smirnov test, when $F$ is used as a null distribution and 50 elements samples are drawn from $G$, is $0.91$ and only $0.05$, i.e. $\alpha$, for TopoTest.  

\end{example}

\subsection{Two-sample test}
In Section~\ref{sec:two_sample_test} a related approach to the two-sample problem was presented.
This idea is formally provided by the Algorithm \ref{algo:2SampleTesting} while a particular realization is presented in the example below.
\begin{algorithm}
    \caption{Two-sample testing}\label{algo:2SampleTesting}
    \SetAlgoLined
    \KwInput{two sample points $X=\{x_1, \ldots, x_m\}, Y = \{y_1,\ldots,y_n\}$  both in $\R^d$, number $K$ of Monte Carlo iterations, significance level $\alpha$.}
    \KwOutput{\red{Rejecting or failure to reject} of null hypothesis, $p$-value}
    
    Compute the distance $D$ between normalized ECCs build on top of $X$ and $Y$
    $$D \gets \sup_r \left\vert \frac{1}{m}\chi(\mathcal{A}_r(X)) - \frac{1}{n}\chi(\mathcal{A}_r(Y)) \right\vert $$
    Pool the data points $Z \gets X \cup Y$
    
    \For{$p \gets 1,\ldots, K$}
    {
        $Z_{(p)}^{\#} \gets permute(Z)$
        
        Split $Z_{(p)}$ into two samples of size $m$ and $n$
        
        $X_{(p)} \gets \{Z_{(p),1}, Z_{(p),2}, \ldots, Z_{(p),m}\}$

        $Y_{(p)} \gets \{ Z_{(p),m+1}, Z_{(p),m+2}, \ldots, Z_{(p),m+n}\}$
 
        Compute the distance between ECCs build on top of  $X_{(p)}$ and  $Y_{(p)}$
        
        $d_{(p)} \gets \sup_r \left\vert \frac{1}{m}\chi\left(\mathcal{A}_r\left(X_{(p)}\right)\right) - \frac{1}{n}\chi\left(\mathcal{A}_r\left(Y_{(p)}\right)\right) \right\vert$
    }
    $pv \gets \frac{1}{K} \#\{  d_{(p)} > D\}$
    
    \Return{\red{$pv < \alpha$, $pv$}}
\end{algorithm}
Let us begin with the situation in which the null hypothesis is not rejected.
\begin{example}
Consider both $X$ and $Y$ sampled from $\mathcal{U}(0,1)^2$ with $\vert X\vert = 30$, $\vert Y\vert=50$, shown in the inset of Figure \ref{fig:2sample-H0}.
\begin{figure}
 		\includegraphics[width=\linewidth]{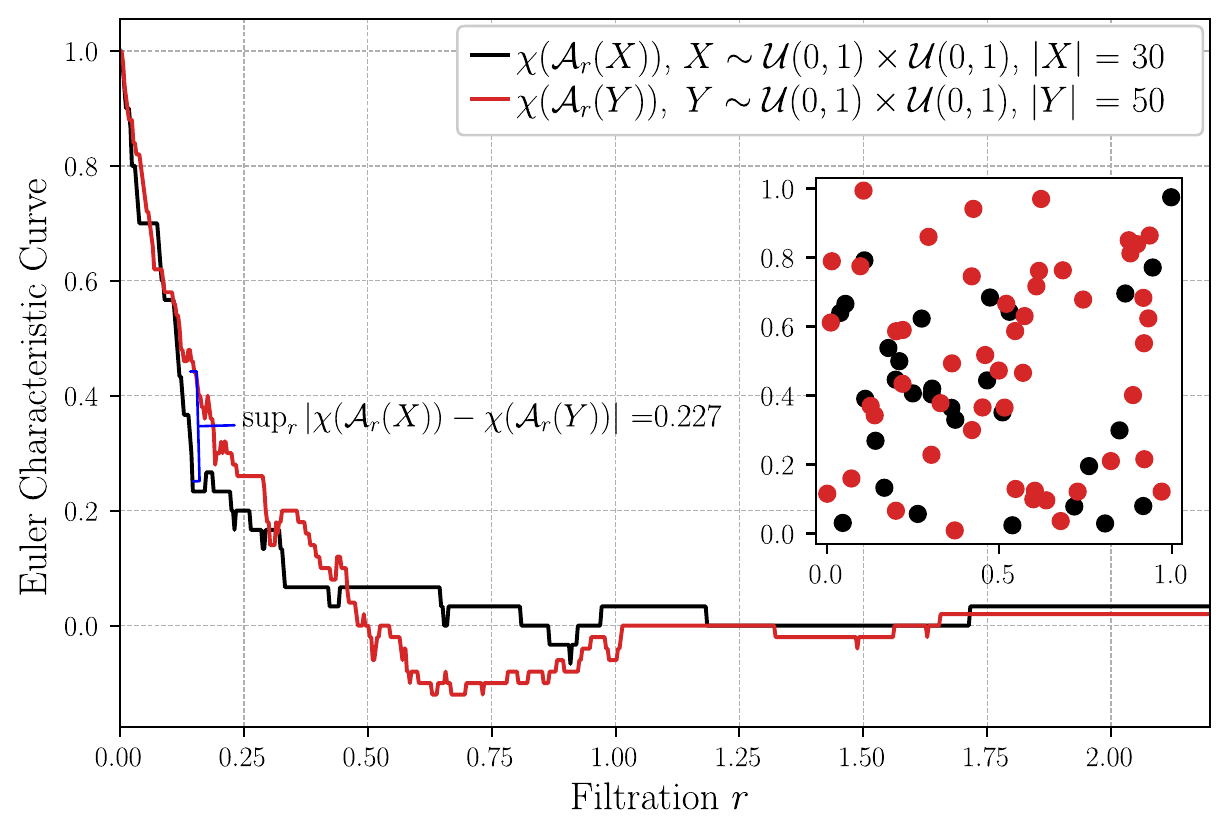}%
    \caption{Normalized Euler Characteristic Curves of two samples of size $30$ and $50$ drawn from bivariate uniform distribution, $\mathcal{U}(0,1) \times \mathcal{U}(0,1)$. 
    Samples are shown in the inset.} 
    \label{fig:2sample-H0}
\end{figure}
We compute the supremum distance between the normalized ECCs to be $D=0.227$, as illustrated in Figure \ref{fig:2sample-H0}. Using $K=1000$ Monte Carlo iterations we find that a distance between ECCs at least as extreme as $D$ happens roughly $73\%$ of the time.
We conclude that we do not have evidence to reject the null hypothesis at significance level $\alpha=0.05$.
\end{example}
Now let us turn to an example in which the null hypothesis is rejected.
\begin{example}
In the Figure \ref{fig:2sample-H1}, we have sampled $X$ as $30$ points from the bivariate uniform distribution on the unit square $\mathcal{U}(0,1)^2$, whereas $Y$ consists of $50$ points sampled from $\beta(3,3)\times \mathcal{U}(0,1)$.
We compute the distance between corresponding normalized ECCs to be $D=0.453$.
In $K=1000$ Monte Carlo iterations, we find that an ECC distance at least as extreme as $D$ never happens, hence using $\alpha = 0.05$ this establishes evidence to reject the null hypothesis.
\begin{figure}
    \centering
 		\includegraphics[width=\linewidth]{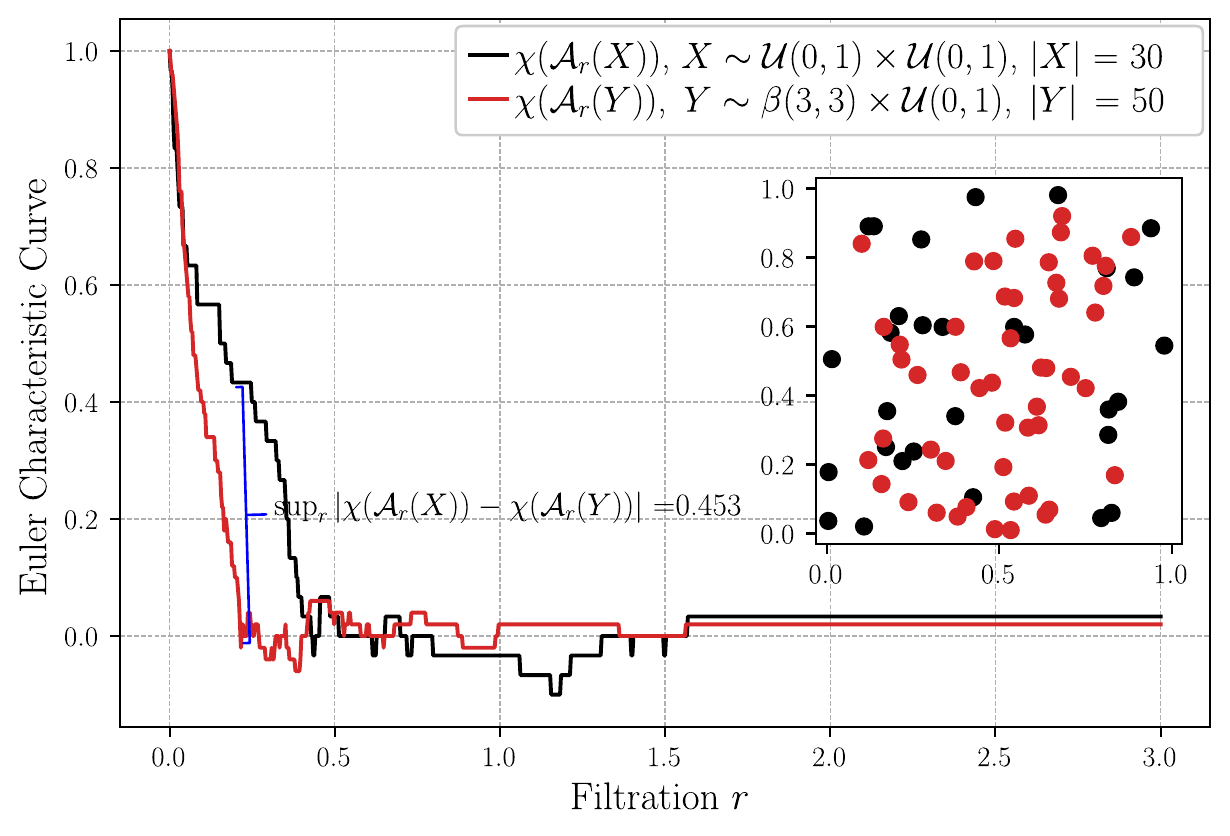}%
    \caption{Normalized Euler Characteristic Curves of two samples of size 30 and 50 drawn from
    different distributions: $X \sim \mathcal{U}(0,1) \times \mathcal{U}(0,1)$ and $Y \sim \beta(3, 3) \times \mathcal{U}(0,1)$.}
    \label{fig:2sample-H1}
\end{figure}

\end{example}

\section{Numerical Experiments, one-sample problem}
\label{sec:NumericalExperiments_one_sample}

In this study, Monte Carlo simulations were used to evaluate \red{the} power of TopoTests and compare it with the power of corresponding Kolmogorov-Smirnov tests.
In case of univariate distributions, Cram\'er-von Mises was considered as well for completeness.
To obtain more detailed insight into performance of TopoTests, samples of various sizes ranging from $n=30$ up to $n=1000$, were examined.
In the following subsections three types of experiments are presented:

\begin{enumerate}
\item Fixing the null distribution to be standard normal and test samples drawn from a vast variety of alternative distributions with different parameters; Laplace, uniform, t-distribution, as well as Cauchy, logistic distributions and mixture of Gaussians. This set of experiments allowed to assess how well TopoTests performs to recognize standard normal distributions.
\item Fixing a family of distributions, and treat each of them as null distribution while all others are considered as alternative distribution. For each such a pair of distributions, the empirical power of the test, i.e. 1 minus probability of type II error, was computed using Monte Carlo methods.
The result was visualized in a form of heat-maps. 
\item In addition, for various dimension\red{s}, a relation between power of the test and number $n$ of data points in the sample was examined. As expected, the power of the test increases monotonically with the sample size. 
\end{enumerate}

In this section both simulations satisfying Assumption~\ref{assumption} and those that do not satisfy it (for instance multivariate normal) were considered.
To theoretically underpin this approach, several ideas were suggested in Section~\ref{sec:what_to_do_in_non_compact_case}.
In practice, the fact that the Assumption~\ref{assumption} was not satisfied in some cases did not affect the test powers.

\remark{
In this section we benchmark TopoTest by comparing its power with the power of Kolmogorov-Smirnov test, i.e. the probability that the test correctly rejects null hypothesis when the alternative distribution is different than null distribution. Since TopoTests is not able to distinguish different but Euler-equivalent distributions, which Kolmogorov-Smirnov can distinguish, the setting under which it operates (\ref{eqn:EulerOneSampleHypothesis}) is different from the Kolmogorov-Smirnov setting  (\ref{eqn:OneSampleHypothesis}), and hence the reported power of TopoTest might be overestimated. To mediate this effect a vast collection of distributions was considered.
}

\subsection{Compactly supported distributions}
As a first example a collection of distributions supported on three-dimensional unit cube $[0, 1]^3$
was considered. The collection consisted of a number of three-fold Cartesian products of independent beta, cosine (rescaled to fit unit interval) and uniform univarite distributions.  
In such setup the Assumption~\ref{assumption} is fulfilled and developed theory can be applied straightforwardly. 
In Figure \ref{fig:power_matrix_compact} the power of TopoTest was compared with power of 
Kolmogorov-Smirnov test for a collection of trivariate
distributions on compact domain. Several sample sizes were considered but here only results obtained for $n=100$ are reported as similar conclusions can be drawn for different values of $n$.
\begin{figure*}
    \caption{
        Average power of TopoTest (left panel) and Kolmogorov-Smirnov test for selected trivariate on compact
        support on $[0, 1]^3$. Average power, at significance level $\alpha=0.05$, is estimated based on 
        $K=1000$ Monte Carlo realizations for sample size $n=100$.
    }
    \label{fig:power_matrix_compact}
    \includegraphics[width=0.49\linewidth]{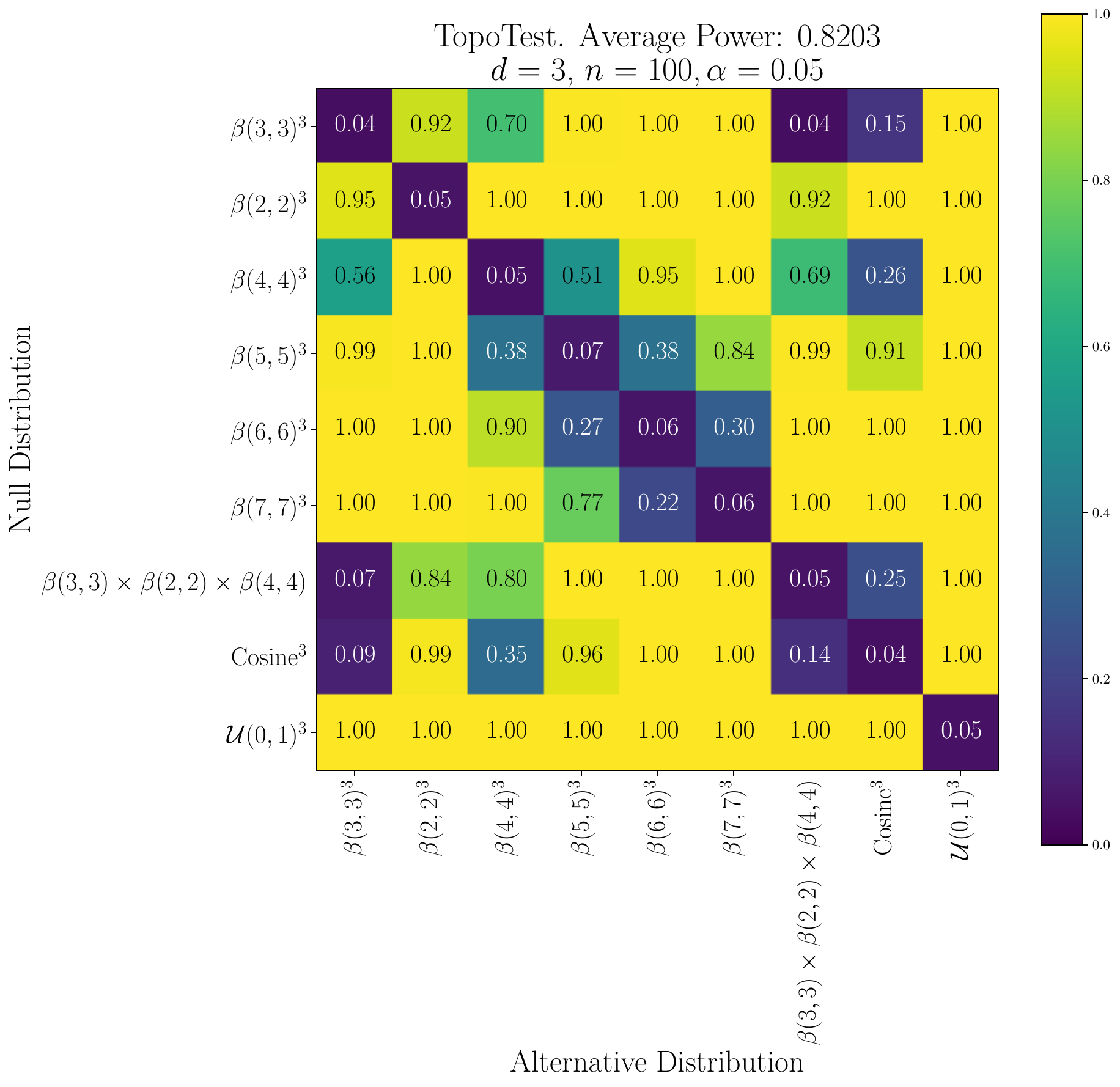}
    \includegraphics[width=0.49\linewidth]{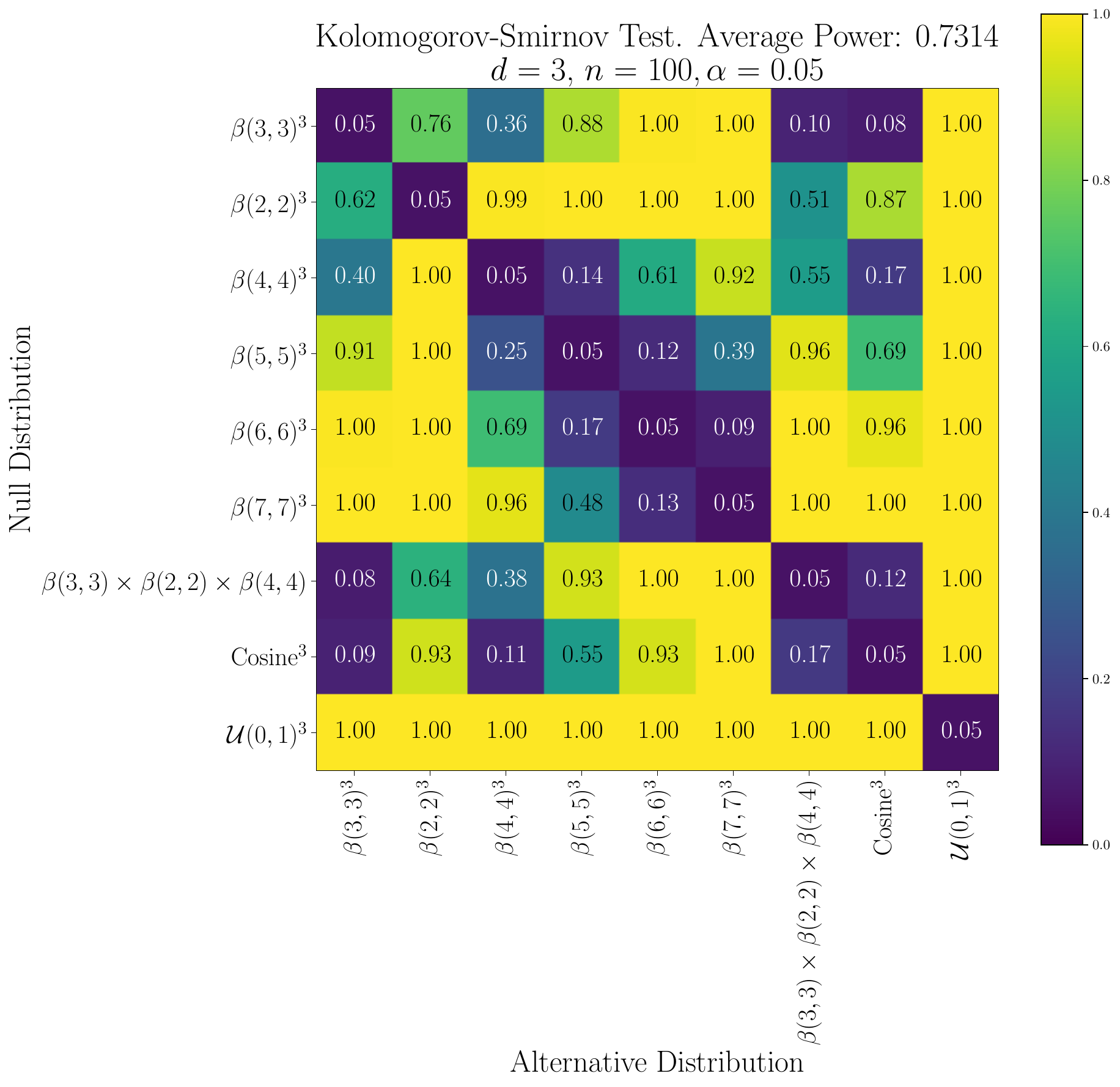}
\end{figure*} 
The TopoTest provided higher power for vast majority of considered pairs of null and alternative distributions resulting in average power, at significance level $\alpha=0.05$, for this collection of distributions to be $0.82$ for TopoTest and $0.73$ for Kolmogorov-Smirnov. 
In fact, for collection of distributions considered in Figure \ref{fig:power_matrix_compact} in only one, out of 72, comparisons the power of Kolmogorov-Smirnov test was higher than the one for TopoTest, and the difference was slim ($0.07$ vs. $0.08$).

\subsection{Univariate unbounded distributions}
In this section we consider a vast collection of univariate unbounded distribution represented on a computer (hence, restricted to a representable range of double precision numbers). The collection include normal distributions $\mathcal{N}(0, \sigma^2)$ with different values of $\sigma$, Cauchy, Laplace, Logistic distributions, Student's t-distributions with increasing number of degrees of freedom $\nu$ 
as well as
Gaussian mixtures defined as $GM(p, \mu, \sigma) = p\mathcal{N}(0,1) + (1-p)\mathcal{N}(\mu, \sigma)$, for $p\in\{0.1, 0.3, 0.5, 0.7, 0.9\}$,
$\mu\in\{0, 1\}$ and $\sigma\in\{\frac{1}{2}, 1, 2\}$. 
For completeness some distributions defined on compact support are considered as well.

Table~\ref{tab:onesample_1d} provides the empirical power of TopoTests, assessed based on $K=5000$ Monte Carlo simulations, in distinguishing a standard normal $\mathcal{N}(0,1)$ from a number of alternative distributions at significance level $\alpha=0.05$.

\begin{table*}
    \caption{Empirical powers of the one-sample TopoTest for different alternative distributions and sample sizes $n$ -- 
    the null distribution was standard normal $\mathcal{N}(0,1)$. 
    Corresponding powers of Kolmogorov-Smirnov tests are given in parenthesis for comparison -- higher result is given in bold for easier comparison. 
    Results for the significance level $\alpha=0.05$. Empirical powers estimated based on $K=5000$ Monte Carlo simulations.}
    \label{tab:onesample_1d}
    \begin{tabular}{lrrrrr}
    \hline
& \multicolumn{5}{c}{Sample size $n$} \\
\cmidrule{2-6}
Alternative Distribution & \red{30}&                     50  &                     100 &                     250 &                     500 \\ \hline
$\mathcal{N}(0,  0.50)$                           &  \textbf{0.953} (0.417) &  \textbf{0.997} (0.820) &           1.000 (1.000) &           1.000 (1.000) &           1.000 (1.000) \\
$\mathcal{N}(0, 0.75)$                            &  \textbf{0.278} (0.061) &  \textbf{0.369} (0.097) &  \textbf{0.705} (0.247) &  \textbf{0.995} (0.734) &  \textbf{1.000} (0.998) \\
$\mathcal{N}(0, 1.25)$                            &  \textbf{0.222} (0.096) &  \textbf{0.291} (0.123) &  \textbf{0.477} (0.211) &  \textbf{0.879} (0.459) &  \textbf{0.998} (0.899) \\
$\mathcal{N}(0, 1.5)$                             &  \textbf{0.519} (0.228) &  \textbf{0.670} (0.327) &  \textbf{0.956} (0.688) &  \textbf{1.000} (0.990) &           1.000 (1.000) \\
Laplace$(0, 1)$                                   &  \textbf{0.224} (0.055) &  \textbf{0.309} (0.058) &  \textbf{0.544} (0.084) &  \textbf{0.918} (0.145) &  \textbf{1.000} (0.534) \\
$\mathcal{U}(-\sqrt{3}, \sqrt{3})$                &  0.037 (\textbf{0.110}) &  0.041 (\textbf{0.141}) &  0.099 (\textbf{0.249}) &  \textbf{0.840} (0.558) &  \textbf{1.000} (0.930) \\
$\mathcal{U}(0, 1)$                               &           1.000 (1.000) &           1.000 (1.000) &           1.000 (1.000) &           1.000 (1.000) &           1.000 (1.000) \\
$t(3)$                                            &  \textbf{0.280} (0.070) &  \textbf{0.400} (0.066) &  \textbf{0.674} (0.122) &  \textbf{0.966} (0.267) &  \textbf{1.000} (0.700) \\
$t(5)$                                            &  \textbf{0.151} (0.054) &  \textbf{0.169} (0.054) &  \textbf{0.306} (0.068) &  \textbf{0.636} (0.080) &  \textbf{0.918} (0.176) \\
$t(10)$                                           &  \textbf{0.084} (0.049) &  \textbf{0.080} (0.043) &  \textbf{0.111} (0.051) &  \textbf{0.246} (0.053) &  \textbf{0.346} (0.074) \\
$t(25)$                                           &  \textbf{0.059} (0.052) &  \textbf{0.054} (0.041) &  \textbf{0.066} (0.060) &  \textbf{0.072} (0.045) &  \textbf{0.081} (0.053) \\
Cauchy$(0, 1)$                                    &  \textbf{0.907} (0.281) &  \textbf{0.971} (0.456) &  \textbf{1.000} (0.850) &           1.000 (1.000) &           1.000 (1.000) \\
Logistic$(0, 1)$                                  &  \textbf{0.760} (0.322) &  \textbf{0.903} (0.511) &  \textbf{0.996} (0.885) &           1.000 (1.000) &           1.000 (1.000) \\
0.9$\mathcal{N}(0, 1)$ + 0.1$\mathcal{N}(0, 0.5)$ &  \textbf{0.065} (0.042) &  \textbf{0.048} (0.038) &  \textbf{0.073} (0.072) &  \textbf{0.090} (0.059) &  \textbf{0.137} (0.093) \\
0.7$\mathcal{N}(0, 1)$ + 0.3$\mathcal{N}(0, 0.5)$ &  \textbf{0.124} (0.052) &  \textbf{0.136} (0.078) &  \textbf{0.248} (0.136) &  \textbf{0.542} (0.337) &  \textbf{0.816} (0.784) \\
0.5$\mathcal{N}(0, 1)$ + 0.5$\mathcal{N}(0, 0.5)$ &  \textbf{0.292} (0.088) &  \textbf{0.375} (0.152) &  \textbf{0.637} (0.404) &  \textbf{0.978} (0.912) &  0.999 (\textbf{1.000}) \\
0.3$\mathcal{N}(0, 1)$ + 0.7$\mathcal{N}(0, 0.5)$ &  \textbf{0.544} (0.159) &  \textbf{0.746} (0.329) &  \textbf{0.961} (0.855) &           1.000 (1.000) &           1.000 (1.000) \\
0.1$\mathcal{N}(0, 1)$ + 0.9$\mathcal{N}(0, 0.5)$ &  \textbf{0.852} (0.304) &  \textbf{0.977} (0.672) &  \textbf{1.000} (0.995) &           1.000 (1.000) &           1.000 (1.000) \\
0.9$\mathcal{N}(0, 1)$ + 0.1$\mathcal{N}(0, 2)$   &  \textbf{0.092} (0.052) &  \textbf{0.077} (0.050) &  \textbf{0.143} (0.064) &  \textbf{0.229} (0.056) &  \textbf{0.413} (0.087) \\
0.7$\mathcal{N}(0, 1)$ + 0.3$\mathcal{N}(0, 2)$   &  \textbf{0.256} (0.085) &  \textbf{0.350} (0.098) &  \textbf{0.627} (0.140) &  \textbf{0.943} (0.315) &  \textbf{1.000} (0.778) \\
0.5$\mathcal{N}(0, 1)$ + 0.5$\mathcal{N}(0, 2)$   &  \textbf{0.514} (0.152) &  \textbf{0.683} (0.212) &  \textbf{0.952} (0.449) &  \textbf{1.000} (0.933) &           1.000 (1.000) \\
0.3$\mathcal{N}(0, 1)$ + 0.7$\mathcal{N}(0, 2)$   &  \textbf{0.733} (0.291) &  \textbf{0.898} (0.450) &  \textbf{0.997} (0.858) &  \textbf{1.000} (0.999) &           1.000 (1.000) \\
0.1$\mathcal{N}(0, 1)$ + 0.9$\mathcal{N}(0, 2)$   &  \textbf{0.875} (0.491) &  \textbf{0.968} (0.750) &  \textbf{1.000} (0.984) &           1.000 (1.000) &           1.000 (1.000) \\
0.9$\mathcal{N}(0, 1)$ + 0.1$\mathcal{N}(1, 2)$   &  \textbf{0.096} (0.068) &  \textbf{0.111} (0.063) &  \textbf{0.171} (0.092) &  \textbf{0.319} (0.135) &  \textbf{0.548} (0.280) \\
0.7$\mathcal{N}(0, 1)$ + 0.3$\mathcal{N}(1, 2)$   &  \textbf{0.318} (0.182) &  \textbf{0.464} (0.249) &  \textbf{0.747} (0.508) &  \textbf{0.985} (0.932) &           1.000 (1.000) \\
0.5$\mathcal{N}(0, 1)$ + 0.5$\mathcal{N}(1, 2)$   &  \textbf{0.588} (0.453) &  \textbf{0.760} (0.665) &  \textbf{0.971} (0.948) &           1.000 (1.000) &           1.000 (1.000) \\
0.3$\mathcal{N}(0, 1)$ + 0.7$\mathcal{N}(1, 2)$   &  \textbf{0.778} (0.747) &  0.927 (\textbf{0.930}) &           0.999 (0.999) &           1.000 (1.000) &           1.000 (1.000) \\
0.1$\mathcal{N}(0, 1)$ + 0.9$\mathcal{N}(1, 2)$   &  0.889 (\textbf{0.921}) &  0.987 (\textbf{0.990}) &           1.000 (1.000) &           1.000 (1.000) &           1.000 (1.000) \\
\hline
Average Power                                     &  \textbf{0.446} (0.246) &  \textbf{0.527} (0.338) &  \textbf{0.659} (0.501) &  \textbf{0.808} (0.643) &  \textbf{0.866} (0.764) \\
\hline
    \end{tabular}    
\end{table*}

As we can observe in Table~\ref{tab:onesample_1d}, TopoTest outperformed the Kolmogorov-Smirnov test when distinguishing between the standard normal distribution from the normal distribution 
with variance different from $1$, regardless of the sample size. The power of the TopoTest is also greater when the alternative distribution is Student's t-distribution: the difference compared to the Kolmogorov-Smirnov test was particularly pronounced when the number of degrees of freedom $\nu$ was small. When $\nu$ was 10 or more, the power of both tests is much lower, as expected, but still TopoTest outperformed the Kolmogorov-Smirnov test.
Similar conclusion can be drawn for heavier tail alternative distributions such as Cauchy, Laplace or Logistic distribution: 
the empirical probability of type II error was always lower for TopoTest than for Kolmogorov-Smirnov counterpart.
On the other hand, when Gaussian mixtures were considered, it was the Kolmogorov-Smirnov test that performs better, regardless of the value of mixing coefficient $p$.  
\subsection{Two and three dimensional unbounded distributions}
In Table \ref{tab:onesample_2d} result for collection of bivariate distributions are shown. 
The $MG(a)$ denotes a multivariate normal distribution with non-diagonal covariance matrix, i.e.
\begin{equation}
    \label{eqn:MG}
    MG(a) = \mathcal{N}\left(0, \begin{bmatrix}
        1 & a & a & \dots  & a \\
        a & 1 & a & \dots  & a \\
        \vdots & \vdots & \vdots & \ddots & \vdots \\
        a & a & a & \dots  & 1
    \end{bmatrix} \right),
\end{equation}
where \red{the} value of the parameter $a$ varies from $0$ to $1$ to reflect increasing correlation of components.

\begin{table*}
    \caption{The same as Table \ref{tab:onesample_1d} but for two dimensional distributions. 
    Null distribution is $N_0 \sim \mathcal{N}(0, I_{2})$, 
    where $I_2$ is a $2\times2$ identity matrix.  Empirical powers, based on $K=1000$ Monte Carlo simulations. 
    Alternative distributions include Gaussian mixtures of $N_0$, 
    $N_1 \sim \mathcal{N}( (1, 1), 3I_2), N_2 \sim \mathcal{N}( (0, 0), 3I_2)$ and $N_3 \sim \mathcal{N}( (-1, -1), 3I_2)$.
    }
    \label{tab:onesample_2d}
        \begin{tabular}{lrrrrr}
    \hline
& \multicolumn{5}{c}{Sample size $n$} \\
\cmidrule{2-6}
Alternative Distribution & \red{30} &                    50  &                     100 &                     250 &                     500 \\ \hline
$MG(0.05)$                                         &  0.036 (\textbf{0.052}) &           0.050 (0.050) &  \textbf{0.049} (0.038) &  0.061 (\textbf{0.070}) &  \textbf{0.059} (0.048) \\
$MG(0.1)$                                          &  0.042 (\textbf{0.044}) &  0.041 (\textbf{0.056}) &  \textbf{0.048} (0.042) &  0.052 (\textbf{0.074}) &  0.065 (\textbf{0.096}) \\
$MG(0.2)$                                          &  0.040 (\textbf{0.073}) &  0.064 (\textbf{0.114}) &  0.060 (\textbf{0.106}) &  0.062 (\textbf{0.170}) &  0.062 (\textbf{0.298}) \\
$MG(0.3)$                                          &  0.046 (\textbf{0.072}) &  0.064 (\textbf{0.130}) &  0.071 (\textbf{0.134}) &  0.090 (\textbf{0.368}) &  0.121 (\textbf{0.702}) \\
$MG(0.5)$                                          &  0.093 (\textbf{0.124}) &  0.115 (\textbf{0.258}) &  0.200 (\textbf{0.478}) &  0.369 (\textbf{0.952}) &  0.652 (\textbf{1.000}) \\
$MG(0.7)$                                          &  \textbf{0.232} (0.229) &  0.381 (\textbf{0.578}) &  0.688 (\textbf{0.902}) &  0.966 (\textbf{1.000}) &           1.000 (1.000) \\
$\mathcal{U}(-\sqrt{3},\sqrt{3}) \times \mathcal{U}(-\sqrt{3},\sqrt{3}$ &  0.044 (\textbf{0.157}) &  0.082 (\textbf{0.292}) &  \textbf{0.487} (0.468) &  \textbf{1.000} (0.942) &           1.000 (1.000) \\
$\mathcal{U}(0,1) \times \mathcal{U}(0,1)$         &           1.000 (1.000) &           1.000 (1.000) &           1.000 (1.000) &           1.000 (1.000) &           1.000 (1.000) \\
$t(3) \times t(3)$                                 &  \textbf{0.399} (0.121) &  \textbf{0.673} (0.176) &  \textbf{0.956} (0.308) &  \textbf{1.000} (0.838) &  \textbf{1.000} (0.996) \\
$t(5) \times t(5)$                                 &  \textbf{0.152} (0.073) &  \textbf{0.305} (0.104) &  \textbf{0.609} (0.124) &  \textbf{0.960} (0.304) &  \textbf{0.999} (0.660) \\
$t(10) \times t(10)$                               &  0.045 (\textbf{0.064}) &  \textbf{0.094} (0.088) &  \textbf{0.191} (0.078) &  \textbf{0.470} (0.094) &  \textbf{0.782} (0.100) \\
$t(25) \times t(25)$                               &  0.039 (\textbf{0.047}) &  0.066 (\textbf{0.068}) &  \textbf{0.067} (0.044) &  \textbf{0.096} (0.052) &  \textbf{0.165} (0.058) \\
$\mathcal{N}(0, 1) \times t(3)$                    &  \textbf{0.096} (0.064) &  \textbf{0.235} (0.086) &  \textbf{0.466} (0.102) &  \textbf{0.882} (0.244) &  \textbf{0.993} (0.422) \\
$\mathcal{N}(0, 1) \times t(5)$                    &  0.059 (\textbf{0.062}) &  \textbf{0.086} (0.068) &  \textbf{0.196} (0.086) &  \textbf{0.472} (0.122) &  \textbf{0.787} (0.116) \\
$\mathcal{N}(0, 1) \times t(10)$                   &  0.041 (\textbf{0.043}) &  0.052 (\textbf{0.060}) &  \textbf{0.068} (0.060) &  \textbf{0.141} (0.066) &  \textbf{0.270} (0.072) \\
$0.9N_0 + 0.1N_1$                                  &  0.051 (\textbf{0.074}) &  0.092 (\textbf{0.096}) &  \textbf{0.184} (0.102) &  \textbf{0.448} (0.238) &  \textbf{0.701} (0.406) \\
$0.7N_0 + 0.3N_1$                                  &  \textbf{0.284} (0.257) &  \textbf{0.519} (0.452) &  \textbf{0.842} (0.782) &  \textbf{0.998} (0.996) &           1.000 (1.000) \\
$0.5N_0 + 0.5N_1$                                  &  0.600 (\textbf{0.637}) &  \textbf{0.908} (0.902) &           0.998 (0.998) &           1.000 (1.000) &           1.000 (1.000) \\
$0.3N_0 + 0.7N_1$                                  &  0.843 (\textbf{0.917}) &  0.982 (\textbf{0.988}) &           1.000 (1.000) &           1.000 (1.000) &           1.000 (1.000) \\
$0.1N_0 + 0.9N_1$                                  &  0.943 (\textbf{0.995}) &  0.998 (\textbf{1.000}) &           1.000 (1.000) &           1.000 (1.000) &           1.000 (1.000) \\
$0.9N_0 + 0.1N_2$                                  &  0.050 (\textbf{0.064}) &  0.064 (\textbf{0.074}) &  \textbf{0.128} (0.052) &  \textbf{0.281} (0.080) &  \textbf{0.511} (0.114) \\
$0.7N_0 + 0.3N_2$                                  &  \textbf{0.185} (0.110) &  \textbf{0.369} (0.170) &  \textbf{0.679} (0.236) &  \textbf{0.982} (0.596) &  \textbf{1.000} (0.900) \\
$0.5N_0 + 0.5N_2$                                  &  \textbf{0.487} (0.237) &  \textbf{0.777} (0.422) &  \textbf{0.982} (0.678) &  \textbf{1.000} (0.984) &           1.000 (1.000) \\
$0.3N_0 + 0.7N_2$                                  &  \textbf{0.746} (0.433) &  \textbf{0.956} (0.702) &  \textbf{0.999} (0.956) &           1.000 (1.000) &           1.000 (1.000) \\
$0.1N_0 + 0.9N_2$                                  &  \textbf{0.902} (0.665) &  \textbf{0.996} (0.930) &           1.000 (1.000) &           1.000 (1.000) &           1.000 (1.000) \\
$0.9N_0 + 0.05N_1 + 0.05N_3$                       &  0.055 (\textbf{0.059}) &  \textbf{0.080} (0.078) &  \textbf{0.207} (0.080) &  \textbf{0.453} (0.128) &  \textbf{0.750} (0.178) \\
$0.7N_0 + 0.15N_1 + 0.15N_3$                       &  \textbf{0.308} (0.137) &  \textbf{0.566} (0.232) &  \textbf{0.879} (0.384) &  \textbf{0.998} (0.878) &  \textbf{1.000} (0.996) \\
$0.5N_0 + 0.25N_1 + 0.25N_3$ &  \textbf{0.679} (0.371) &  \textbf{0.918} (0.634) &  \textbf{0.998} (0.858) &           1.000 (1.000) &           1.000 (1.000) \\ \hline
Average Power                                      &  \textbf{0.303} (0.256) &  \textbf{0.412} (0.350) &  \textbf{0.538} (0.432) &  \textbf{0.671} (0.578) &  \textbf{0.747} (0.649) \\ \hline
\end{tabular}
\end{table*}   
Similarly to the univariate case, TopoTests provided lower type II errors in case of alternative distributions being 
products involving a Student's t-distribution. This conclusion holds also when one of \red{the} marginal distribution was a $\mathcal{N}(0,1)$ 
and second being Student's t-distribution.
A similar result is true for bivariate distributions being a Cartesian product involving  
Logistic or Laplace distribution.
We notice that TopoTest usually provided higher 
efficiency in case of Gaussian mixtures. 
On the other hand, TopoTest is significantly weaker than Kolmogorov-Smirnov when considering correlated multivariate normal distributions MG. 
All of these conclusions can be generalized to three dimensional distributions as initiated by results in Table \ref{tab:onesample_3d}.

\begin{table*}
    \caption{The same as Table \ref{tab:onesample_1d} but for three dimensional distributions. Null distribution is $N_0 \sim \mathcal{N}(0, I_{3})$, 
    where $I_{3}$ is a $3\times3$ identity matrix.  Empirical powers, based on $K=250$ Monte Carlo simulations. 
    Alternative distributions include Gaussian mixtures of $N_0, N_1 \sim \mathcal{N}( (1, 1, 1), 3I_3)$.
    }
    \label{tab:onesample_3d}
    \begin{tabular}{lrrrrr}
    \hline
& \multicolumn{5}{c}{Sample size $n$} \\
\cmidrule{2-6}
Alternative Distribution &    \red{30}&                 50  &                     100 &                     250 &                     500 \\ \hline 
$MG(0.05)$                               &  \textbf{0.052} (0.028) &  0.048 (\textbf{0.052}) &  0.064 (\textbf{0.068}) &  \textbf{0.062} (0.056) &  \textbf{0.056} (0.044) \\
$MG(0.1)$                                &  \textbf{0.056} (0.052) &  0.062 (\textbf{0.112}) &  \textbf{0.076} (0.068) &  0.038 (\textbf{0.104}) &  0.054 (\textbf{0.104}) \\
$MG(0.2)$                                &  \textbf{0.084} (0.076) &  0.062 (\textbf{0.120}) &  0.086 (\textbf{0.128}) &  0.074 (\textbf{0.328}) &  0.084 (\textbf{0.592}) \\
$MG(0.3)$                                &  0.084 (\textbf{0.104}) &  0.080 (\textbf{0.216}) &  0.134 (\textbf{0.252}) &  0.168 (\textbf{0.776}) &  0.276 (\textbf{0.992}) \\
$MG(0.5)$                                &  0.204 (\textbf{0.212}) &  0.252 (\textbf{0.576}) &  0.524 (\textbf{0.852}) &  0.854 (\textbf{1.000}) &  0.994 (\textbf{1.000}) \\
$\mathcal{U}(-{\sqrt{3}}, {\sqrt{3}})^3$ &  0.048 (\textbf{0.176}) &  0.156 (\textbf{0.408}) &  \textbf{0.632} (0.568) &  \textbf{1.000} (0.968) &           1.000 (1.000) \\
$\mathcal{U}(0,1)^3$                     &           1.000 (1.000) &           1.000 (1.000) &           1.000 (1.000) &           1.000 (1.000) &           1.000 (1.000) \\
$t(3)^3$                                 &  \textbf{0.624} (0.168) &  \textbf{0.836} (0.388) &  \textbf{0.998} (0.524) &  \textbf{1.000} (0.996) &           1.000 (1.000) \\
$t(5)^3$                                 &  \textbf{0.268} (0.056) &  \textbf{0.402} (0.196) &  \textbf{0.806} (0.240) &  \textbf{0.992} (0.560) &  \textbf{1.000} (0.936) \\
$t(10)^3$                                &  0.048 (\textbf{0.064}) &           0.108 (0.108) &  \textbf{0.266} (0.080) &  \textbf{0.624} (0.176) &  \textbf{0.906} (0.276) \\
$Logistic(0, 1)^3$                       &  \textbf{0.988} (0.904) &  \textbf{1.000} (0.996) &           1.000 (1.000) &           1.000 (1.000) &           1.000 (1.000) \\
$Laplace(0, 1)^3$                        &  \textbf{0.496} (0.116) &  \textbf{0.774} (0.220) &  \textbf{0.990} (0.332) &  \textbf{1.000} (0.924) &           1.000 (1.000) \\
$N_0 \times t(5) \times t(5)$            &  \textbf{0.140} (0.052) &  \textbf{0.238} (0.128) &  \textbf{0.520} (0.120) &  \textbf{0.824} (0.224) &  \textbf{0.996} (0.480) \\
$N_0 \times N_0 \times t(5)$             &  \textbf{0.056} (0.028) &  \textbf{0.082} (0.076) &  \textbf{0.154} (0.064) &  \textbf{0.304} (0.080) &  \textbf{0.586} (0.116) \\
$0.9N_0 + 0.1N_1$                        &  \textbf{0.100} (0.052) &  0.110 (\textbf{0.132}) &  \textbf{0.228} (0.116) &  \textbf{0.502} (0.304) &  \textbf{0.772} (0.500) \\
$0.5N_0 + 0.5N_1$                        &  \textbf{0.792} (0.748) &  \textbf{0.954} (0.944) &           1.000 (1.000) &           1.000 (1.000) &           1.000 (1.000) \\
$0.1N_0 + 0.9N_1$                        &  0.996 (\textbf{1.000}) &           1.000 (1.000) &           1.000 (1.000) &           1.000 (1.000) &           1.000 (1.000) \\ \hline
Average Power                            &  \textbf{0.355} (0.284) &  \textbf{0.421} (0.392) &  \textbf{0.558} (0.436) &  \textbf{0.673} (0.617) &  \textbf{0.748} (0.708) \\\hline
    \end{tabular}

\end{table*}
The last row of Tables \ref{tab:onesample_1d}, \ref{tab:onesample_2d} and \ref{tab:onesample_3d} 
show the average powers of TopoTest and Kolmogorov-Smirnov test for the considered set of alternative distributions. The average power of TopoTest is greater than that of Kolmogorov-Smirnov test for all studied sample sizes.
\subsection{All-to-all tests}

Results presented in Tables \ref{tab:onesample_1d},\ref{tab:onesample_2d},\ref{tab:onesample_3d} focused on the ability 
to discriminate the standard normal distribution
from a set of different distributions. However in TopoTest one can choose arbitrary continuous distributions as null and alternative. 
Hence below we present power matrices where all possible pairs of null and alternative distributions formed 
from the previous set were considered -- results are presented in Figures \ref{fig:power_matrix_1d},\ref{fig:power_matrix_2d},\ref{fig:power_matrix_3d}. 
For easier evaluation of the effectiveness of the TopoTest in comparison to Kolmogorov-Smirnov, the difference in 
power was shown in the figures. Hence, the blue region corresponds to combinations of
null and alternative distribution for which the TopoTest yielded higher power while red regions reflect the combinations for which TopoTest 
was outperformed by Kolmogorov-Smirnov.
red{White} color stands for combinations for which both tests performed similar.

\begin{figure*}
    \caption{Comparison of the power of TopoTest and Kolmogorov-Smirnov one-sample tests 
    in case of univariate probability distributions.
    In each matrix element a difference between power of TopoTest and Kolmogorov-Smirnov test was given. 
    The difference in power was estimated based on $K=1000$ Monte Carlo realizations. Left and right panels shows tests powers for sample sizes $n=100$ and $n=250$, respectively.
    The average power (excluding diagonal elements) of TopoTest is $0.722$ $(0.832)$ and $0.634$ $(0.794)$ for Kolmogorov-Smirnov for $n=100$ ($n=250$).
    }
    \label{fig:power_matrix_1d}
    \includegraphics[width=0.49\linewidth]{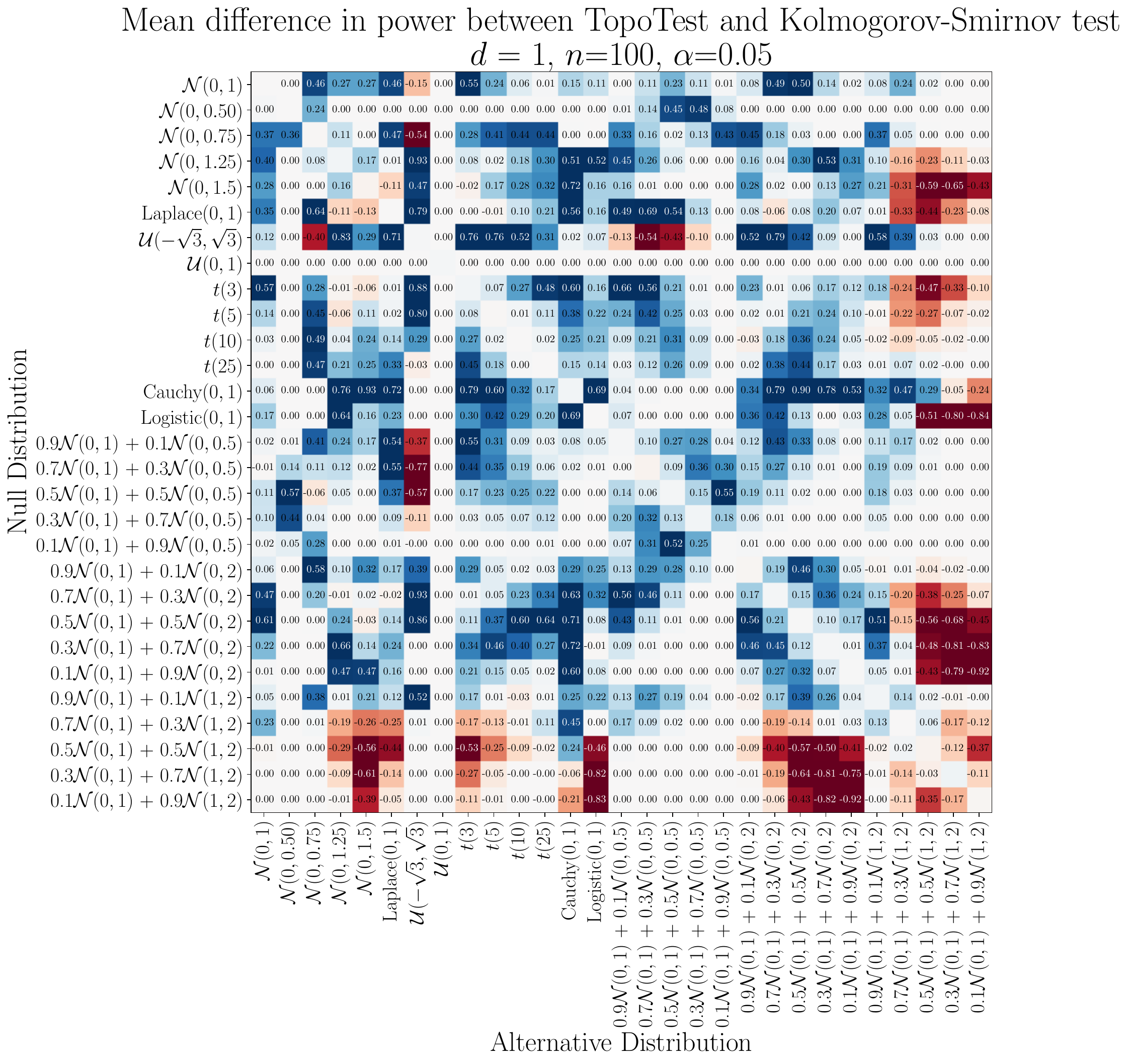}
    \includegraphics[width=0.49\linewidth]{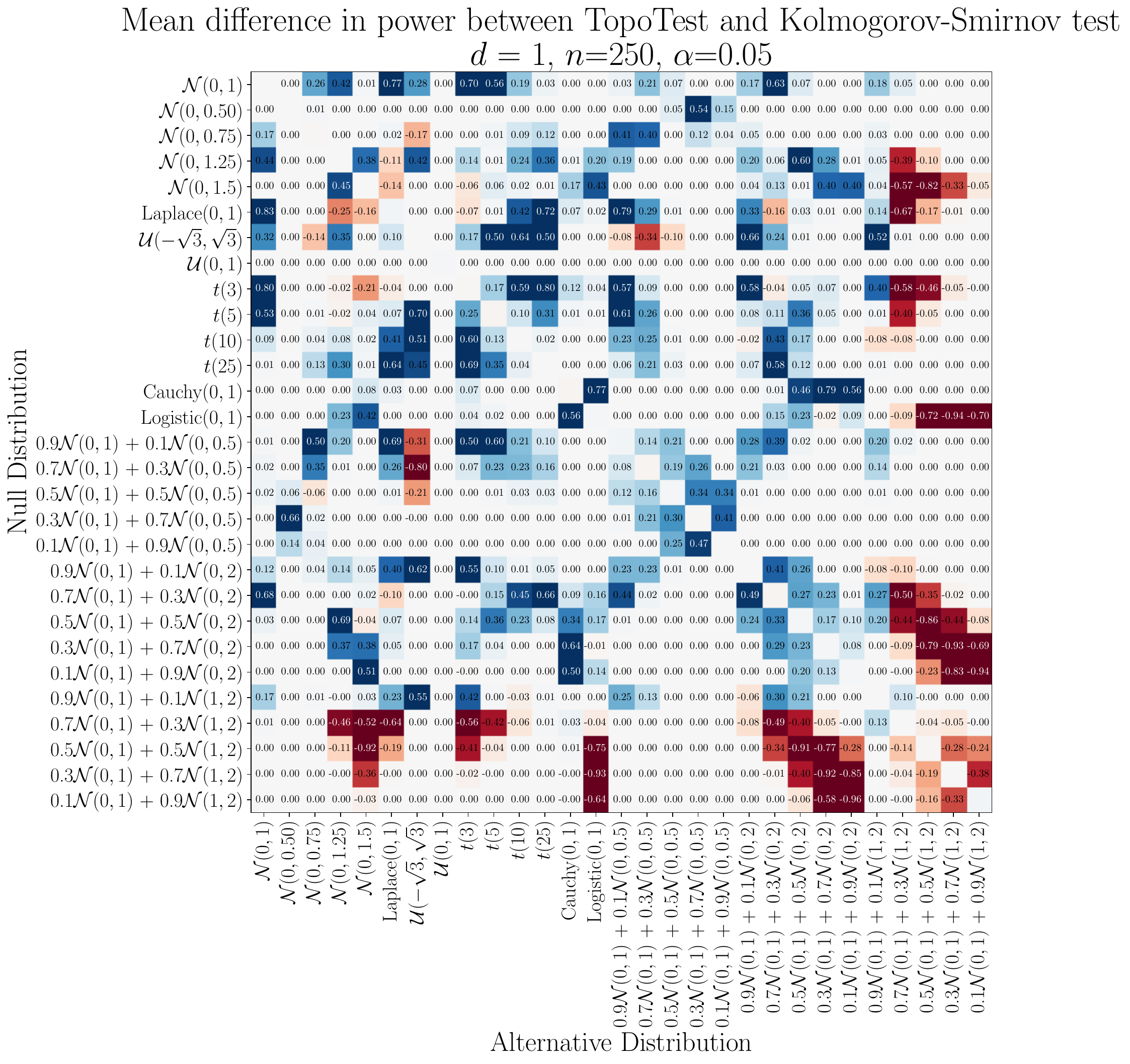}
\end{figure*}  

\begin{figure*}
    \caption{The same as Figure \ref{fig:power_matrix_1d} but for bivariate distributions. 
    Results based on $K=1000$ Monte Carlo realizations. 
    Average power is $0.642$ $(0.772)$ for TopoTest and $0.560$ $(0.720)$ for Kolmogorov-Smirnov
    for $n=100$ ($n=250$).
    }
    \label{fig:power_matrix_2d}
    \includegraphics[width=0.49\linewidth]{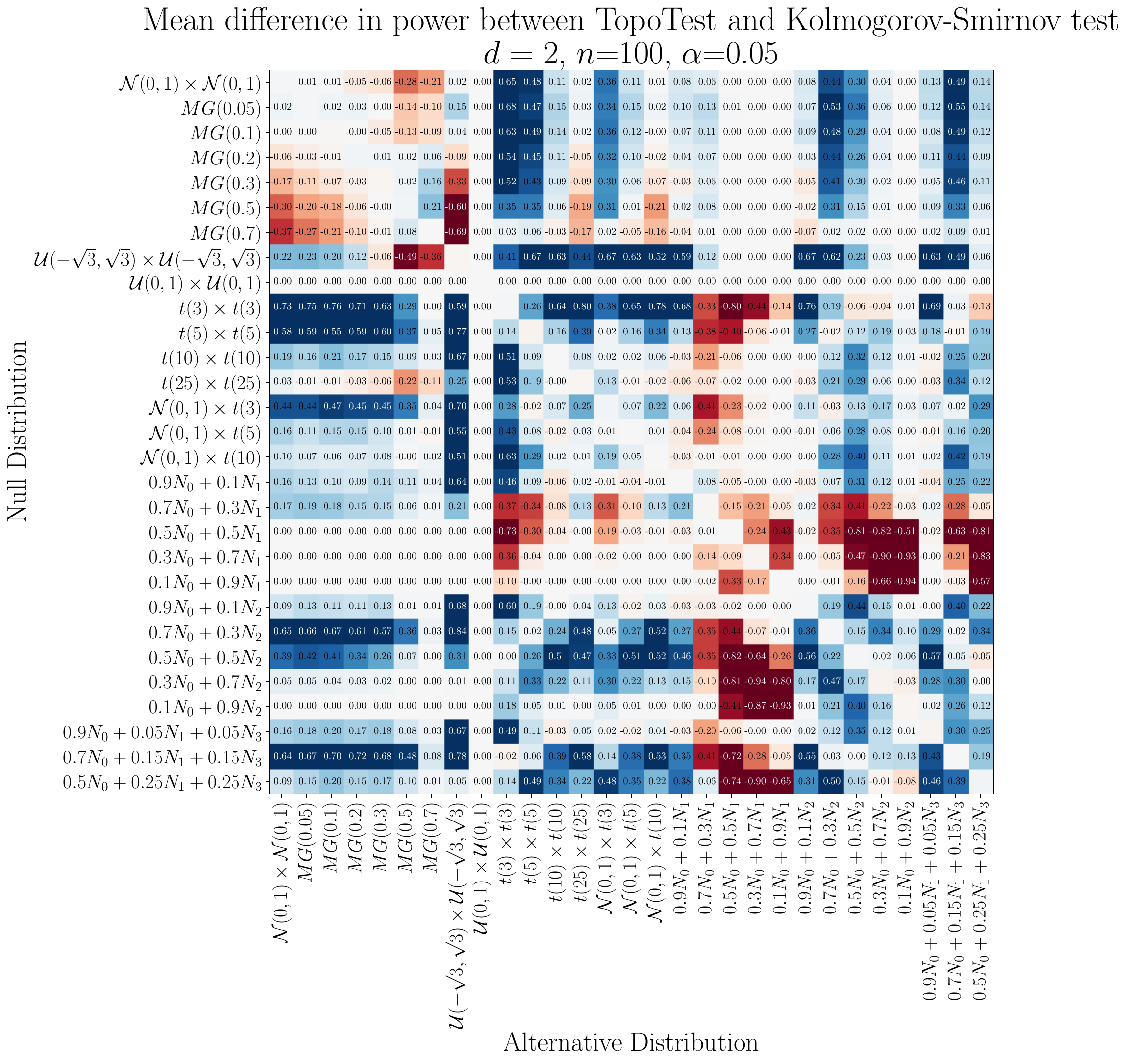}
    \includegraphics[width=0.49\linewidth]{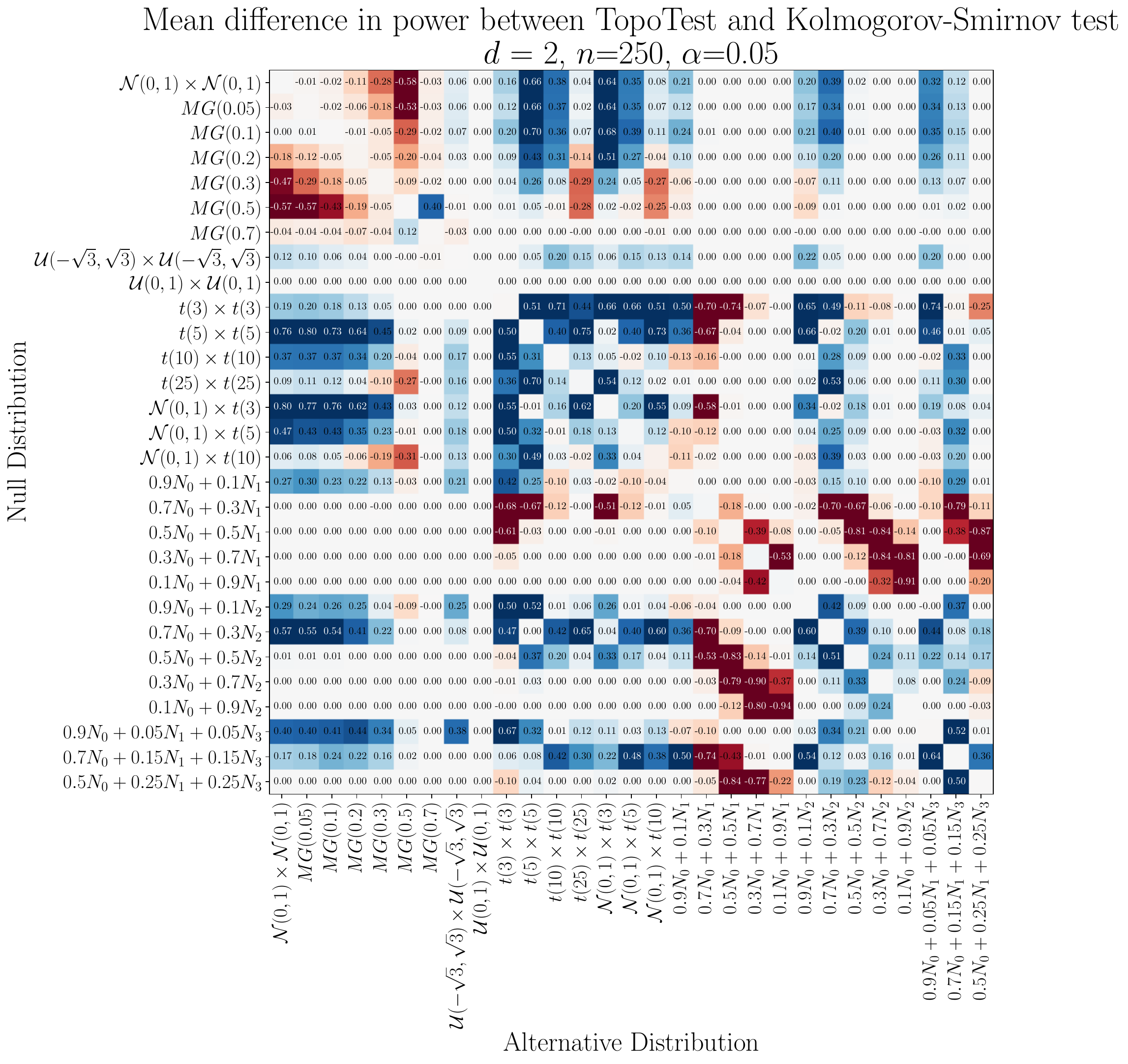}
\end{figure*}  

\begin{figure*}
    \caption{The same as Figure \ref{fig:power_matrix_1d} but for three-dimensional distributions. 
    Results based on $K=250$ Monte Carlo realizations. 
    Average power is $0.708$ $(0.824)$ for TopoTest and $0.602$ $(0.763)$ 
    for Kolmogorov-Smirnov for $n=100$ ($n=250$).
    }
    \label{fig:power_matrix_3d}
    \includegraphics[width=0.49\linewidth]{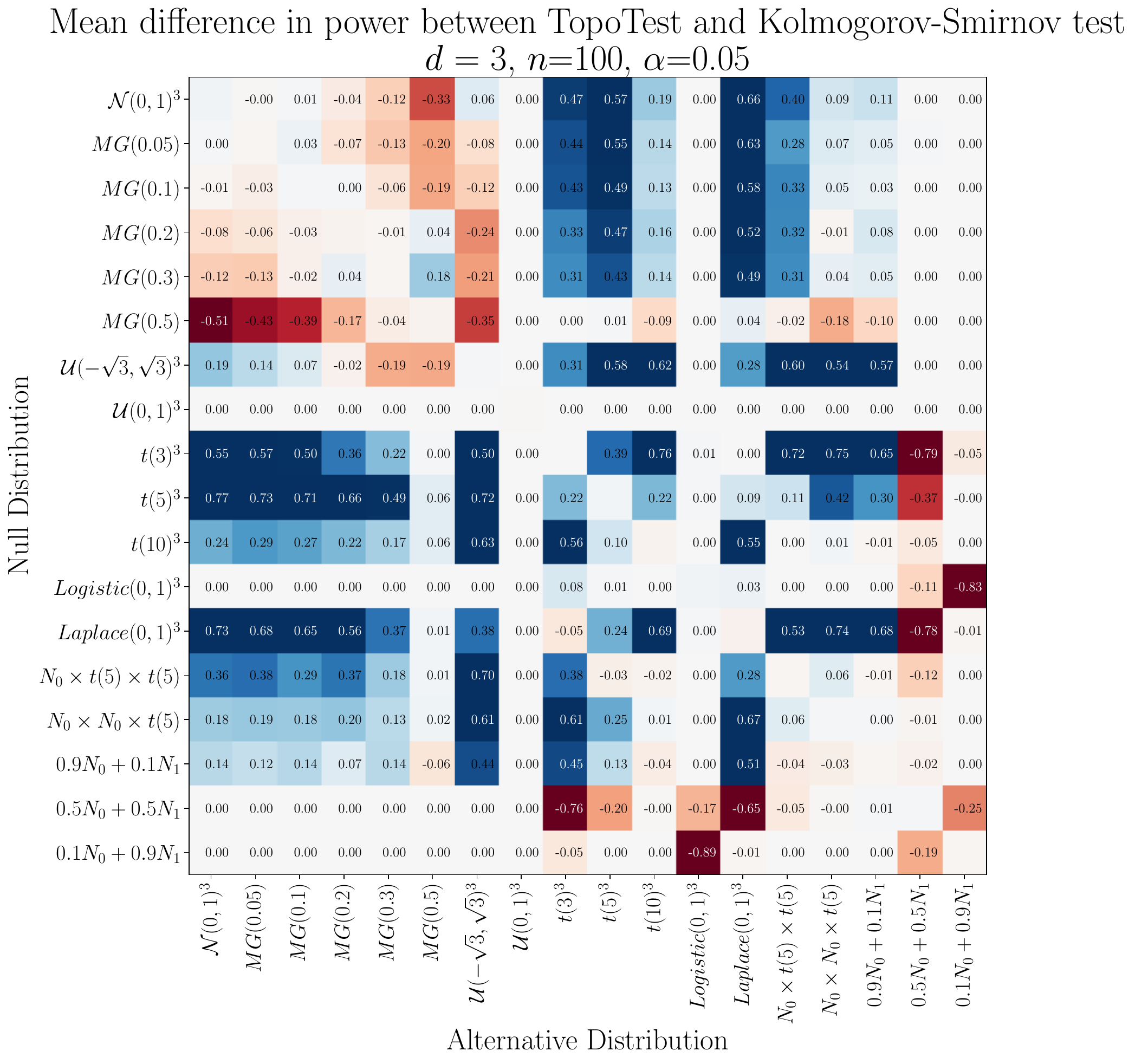}
    \includegraphics[width=0.49\linewidth]{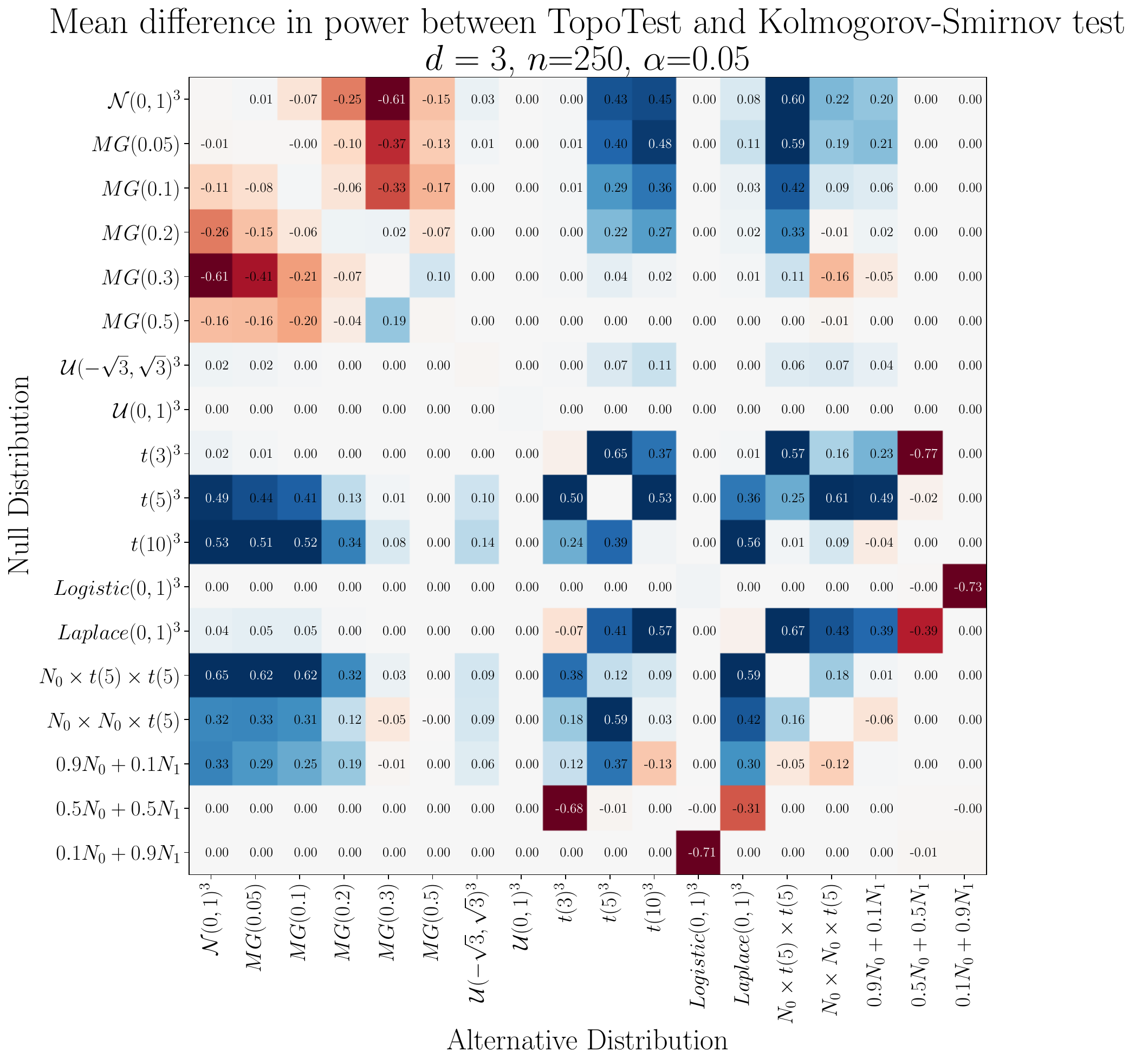}
\end{figure*}  

The analysis was conducted also dimension $d=5$ as can be seen in Figure \ref{fig:power_matrix_5d}.
For $d>3$ the Kolmogorov-Smirnov test was not preformed due to too long computation time, hence results for
TopoTest are presented only as this method provided feasible computational complexity.

As can be seen the TopoTest stayed sensitive enough to differentiate between multivariate normal distribution and Cartesian products of involving Student's t-distribution and standard normal 
as marginals, especially given
that considered samples sizes are low for such high dimensional spaces.

The heatmap presented in Figure \ref{fig:power_matrix_2d} reveals several prominent red-blocks, i.e. combinations of null and alternative distributions for which the power of the TopoTest is significantly lower than the power of KS test: e.g. the combination $G=p\mathcal{N}(0, 1) + (1-p)\mathcal{N}(0, 2)$ and $F=p\mathcal{N}(0, 1) + (1-p)\mathcal{N}(\mu, 2), \mu=1$. This observation is related to the Lemma 5.1 by Vishwanath \textit{et al.} \cite{vishwanath_limits_2022} (c.f. Example~\ref{ex:counterexample}) regarding equivalence in expected ECCs. Although the distributions $F$ and $G$ are not Euler equivalent and the condition (\ref{eqn:beta_condition}) is not met but only approximately, the expected ECCs are quite similar for small values of $\mu$ making \red{them} hard to distinguish by the TopoTest test statistic (\ref{eqn:threshold_definition}). Similar situations holds for trivariate distributions as shown in Figure \ref{fig:power_matrix_3d}. 

\begin{figure*}
    \caption{
        Average power of TopoTest for five dimension distributions,
        for sample sizes $n=250$ and $n=500$.
        Results based on $K=1000$ Monte Carlo realizations.
        }
    \label{fig:power_matrix_5d}
    \includegraphics[width=0.49\linewidth]{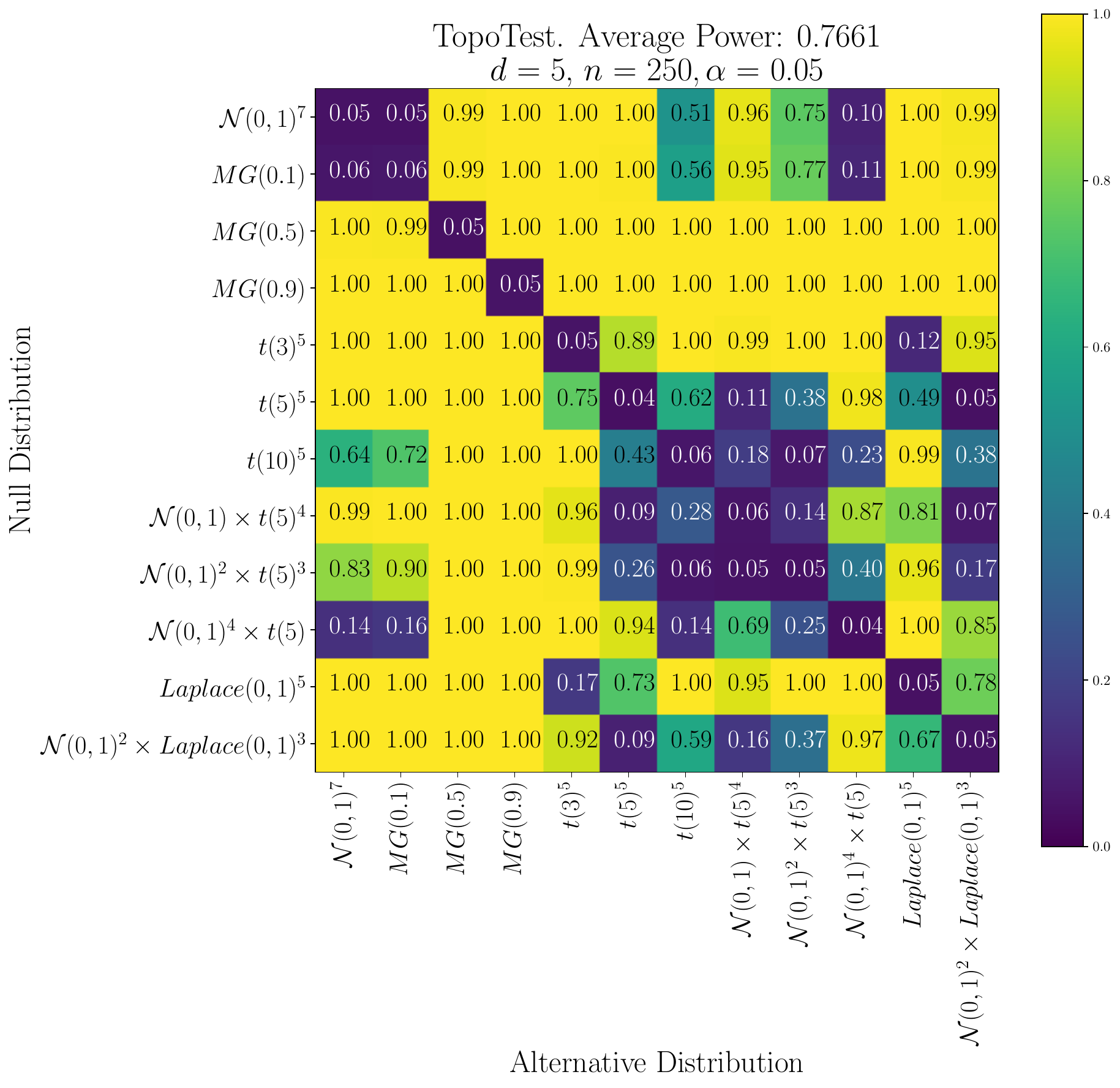}
    \includegraphics[width=0.49\linewidth]{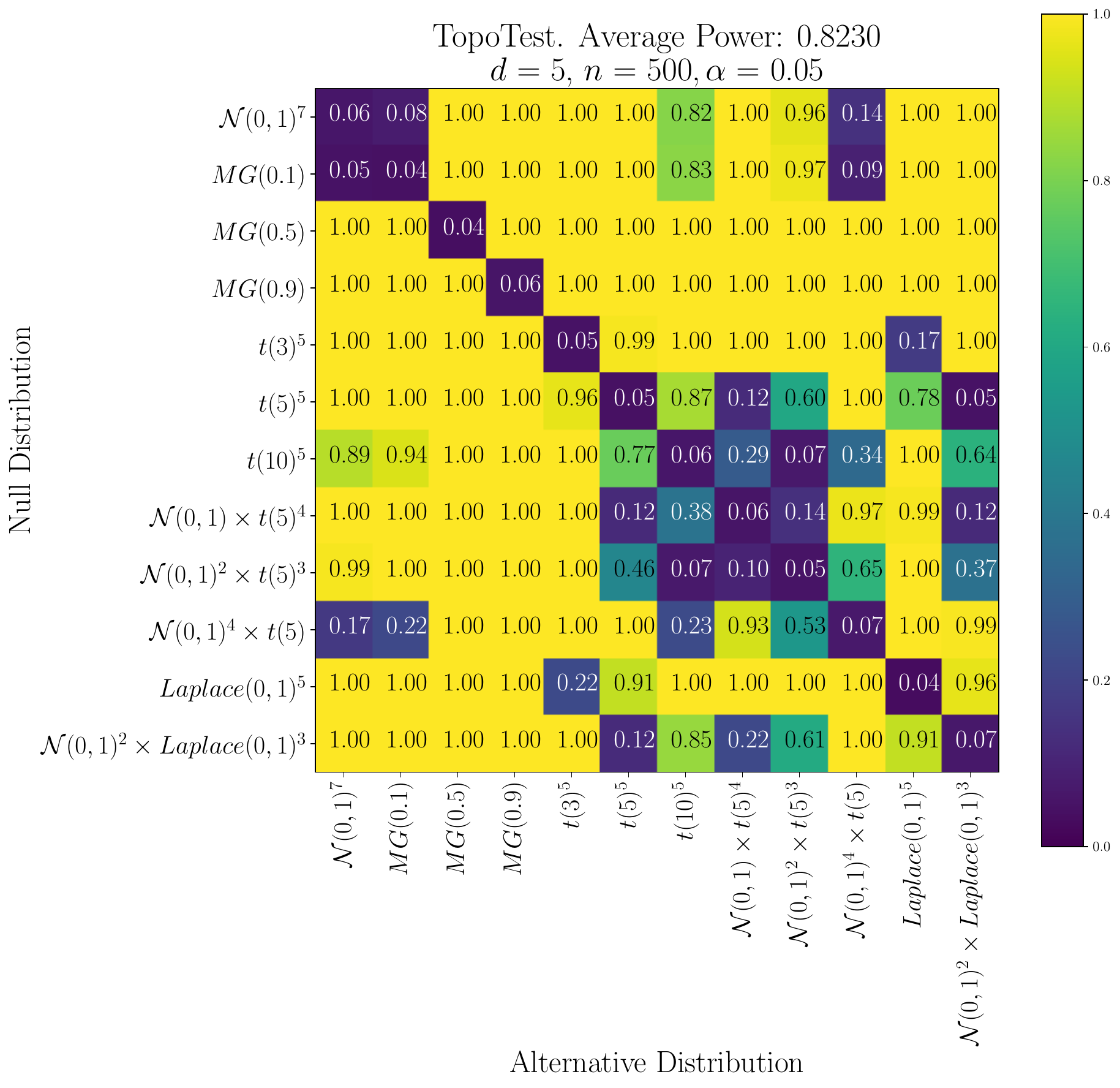}
\end{figure*} 
\subsection{Dependence of the test power on sample size}
The dependence of the power of TopoTest and Kolmogorov-Smirnov tests on the sample size $n$ 
is shown in Figure \ref{fig:power_sample_size} for random samples in dimensions $d=1, 2, 3$. To compute average power, all combinations of null and alternative distributions,
as considered in Figures \ref{fig:power_matrix_1d}, \ref{fig:power_matrix_2d} and \ref{fig:power_matrix_3d}, were taken into account,
except alternative being the same as null distribution. 
In all cases, the average power increased with sample size as expected. 
In case of univariate distribution (leftmost panel in Figure \ref{fig:power_sample_size}) the results obtained using Cram\'er-von Mises test
were added for completeness. The overall performance of this test is similar to Kolmogorov-Smirnov, hence detailed analysis was omitted. 
The TopoTest however provides higher average power for all sample sizes regardless of the data dimension. 
It should be noted that powers presented in Figure \ref{fig:power_sample_size} should not be directly compared across different dimensions as the actual value depends on the list of considered distributions
which is different for each dimension.  

\begin{figure*}
    \caption{Average power of the TopoTest (black curve) and Kolmogorov-Smirnov (red curve) as a function of sample size $n$ for dimensions $d=1, 2, 3$.
    In case of $d=1$ the average power of Cram\'er-von Mises (green curve) test was shown as well. To guide an eye the data points are connect by lines.}
    \label{fig:power_sample_size}
    \includegraphics[width=0.32\linewidth]{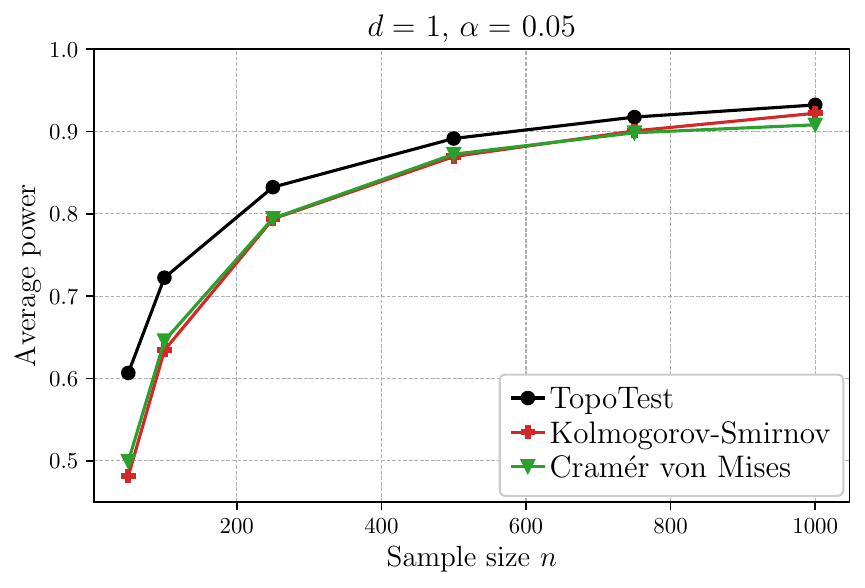}
    \includegraphics[width=0.32\linewidth]{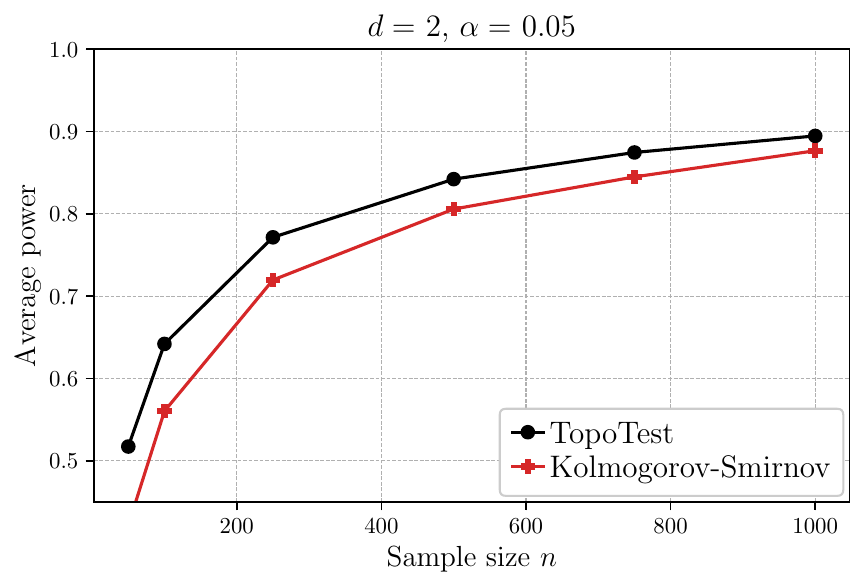}
    \includegraphics[width=0.32\linewidth]{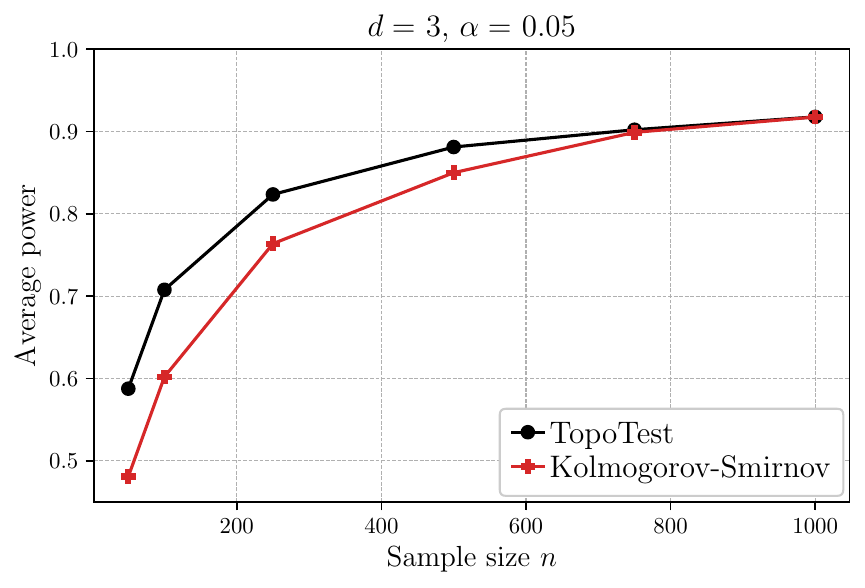}
\end{figure*} 
\section{Numerical experiments, two-sample problem}
\label{sec:NumericalExperiments2sample}
A numerical study was conducted also for two-sample problems, in which Algorithm \ref{algo:2SampleTesting} was applied.
The two-sample problem was considered for completeness purpose as practical application is limited by high computational costs,
therefore results presented here are restricted to comparison of empirical power of two-sample
TopoTest and Kolmogorov-Smirnov tests in $d=1$ (cf. Table \ref{tab:twosample_1d}) and $d=2$ (cf. Table \ref{tab:twosample_2d}).
Simulations showed that in both cases the TopoTest outperformed the Kolmogorov-Smirnov test: in the vast majority of examined cases
the power of the former is greater. Moreover, the average power for TopoTest is greater \red{than the} corresponding average power of Kolmogorov-Smirnov test for all sample sizes $n$. 
\begin{table*}
    \caption{Empirical powers of the two-sample TopoTest for different alternative distributions and sample sizes $n$ --
    the null distribution is standard normal $\mathcal{N}(0,1)$.
    Corresponding powers of Kolmogorov-Smirnov tests are given in parenthesis for comparison -- higher result is given in bold for easier comparison. 
    Results for the significance level $\alpha=0.05$ 
    Empirical powers estimated based on $K=500$ Monte Carlo realizations.}
    \label{tab:twosample_1d}
\begin{tabular}{lrrrrr}
\hline
& \multicolumn{5}{c}{Sample size $n$} \\
\cmidrule{2-6}
Second Sample Distribution &      \red{30}&               50  &                     100 &                     250 &                     500 \\ \hline 
$\mathcal{N}(0,  0.50)$                           &  \textbf{0.694} (0.218) &  \textbf{0.890} (0.358) &  \textbf{0.996} (0.816) &           1.000 (1.000) &           1.000 (1.000) \\
$\mathcal{N}(0, 0.75)$                            &  \textbf{0.202} (0.054) &  \textbf{0.290} (0.070) &  \textbf{0.462} (0.114) &  \textbf{0.858} (0.376) &  \textbf{0.938} (0.790) \\
$\mathcal{N}(0, 1.25)$                            &  \textbf{0.188} (0.056) &  \textbf{0.166} (0.040) &  \textbf{0.300} (0.110) &  \textbf{0.682} (0.228) &  \textbf{0.822} (0.474) \\
$\mathcal{N}(0, 1.5)$                             &  \textbf{0.366} (0.084) &  \textbf{0.468} (0.124) &  \textbf{0.792} (0.240) &  \textbf{0.984} (0.782) &  0.984 (\textbf{0.994}) \\
Laplace$(0, 1)$                                   &  \textbf{0.154} (0.036) &  \textbf{0.204} (0.046) &  \textbf{0.458} (0.068) &  \textbf{0.892} (0.076) &  \textbf{0.992} (0.154) \\
$\mathcal{U}(-\sqrt{3}, \sqrt{3})$                          &  \textbf{0.092} (0.042) &  \textbf{0.094} (0.054) &  \textbf{0.204} (0.082) &  \textbf{0.756} (0.274) &  \textbf{0.998} (0.592) \\
$\mathcal{U}(0, 1)$                                         &  \textbf{1.000} (0.970) &           1.000 (1.000) &           1.000 (1.000) &           1.000 (1.000) &           1.000 (1.000) \\
$t(3)$                                            &  \textbf{0.230} (0.024) &  \textbf{0.276} (0.058) &  \textbf{0.564} (0.046) &  \textbf{0.930} (0.084) &  \textbf{0.956} (0.220) \\
$t(5)$                                            &  \textbf{0.116} (0.038) &  \textbf{0.124} (0.030) &  \textbf{0.238} (0.036) &  \textbf{0.568} (0.036) &  \textbf{0.844} (0.072) \\
$t(10)$                                           &  \textbf{0.088} (0.048) &  \textbf{0.082} (0.030) &  \textbf{0.098} (0.028) &  \textbf{0.204} (0.062) &  \textbf{0.370} (0.052) \\
$t(25)$                                           &  \textbf{0.102} (0.036) &  \textbf{0.062} (0.028) &  \textbf{0.064} (0.040) &  \textbf{0.094} (0.040) &  \textbf{0.110} (0.046) \\
Cauchy$(0, 1)$                                    &  \textbf{0.784} (0.060) &  \textbf{0.894} (0.118) &  \textbf{0.914} (0.350) &  0.906 (\textbf{0.956}) &  0.916 (\textbf{1.000}) \\
Logistic$(0, 1)$                                  &  \textbf{0.494} (0.096) &  \textbf{0.712} (0.164) &  \textbf{0.948} (0.392) &  \textbf{0.994} (0.942) &  0.998 (\textbf{1.000}) \\
0.9$\mathcal{N}(0, 1)$ + 0.1$\mathcal{N}(0, 0.5)$ &  \textbf{0.072} (0.036) &  \textbf{0.092} (0.038) &  \textbf{0.076} (0.048) &  \textbf{0.104} (0.078) &  0.082 (\textbf{0.086}) \\
0.7$\mathcal{N}(0, 1)$ + 0.3$\mathcal{N}(0, 0.5)$ &  \textbf{0.124} (0.048) &  \textbf{0.122} (0.068) &  \textbf{0.188} (0.098) &  \textbf{0.278} (0.206) &  0.266 (\textbf{0.430}) \\
0.5$\mathcal{N}(0, 1)$ + 0.5$\mathcal{N}(0, 0.5)$ &  \textbf{0.190} (0.072) &  \textbf{0.242} (0.096) &  \textbf{0.456} (0.178) &  \textbf{0.638} (0.550) &  0.610 (\textbf{0.938}) \\
0.3$\mathcal{N}(0, 1)$ + 0.7$\mathcal{N}(0, 0.5)$ &  \textbf{0.334} (0.088) &  \textbf{0.490} (0.176) &  \textbf{0.810} (0.380) &  \textbf{0.950} (0.922) &  0.822 (\textbf{1.000}) \\
0.1$\mathcal{N}(0, 1)$ + 0.9$\mathcal{N}(0, 0.5)$ &  \textbf{0.568} (0.172) &  \textbf{0.782} (0.282) &  \textbf{0.954} (0.674) &  0.992 (\textbf{0.998}) &  0.958 (\textbf{1.000}) \\
0.9$\mathcal{N}(0, 1)$ + 0.1$\mathcal{N}(0, 2)$   &  \textbf{0.114} (0.040) &  \textbf{0.102} (0.038) &  \textbf{0.090} (0.048) &  \textbf{0.220} (0.044) &  \textbf{0.402} (0.076) \\
0.7$\mathcal{N}(0, 1)$ + 0.3$\mathcal{N}(0, 2)$   &  \textbf{0.184} (0.030) &  \textbf{0.272} (0.048) &  \textbf{0.424} (0.058) &  \textbf{0.814} (0.146) &  \textbf{0.980} (0.338) \\
0.5$\mathcal{N}(0, 1)$ + 0.5$\mathcal{N}(0, 2)$   &  \textbf{0.284} (0.038) &  \textbf{0.502} (0.084) &  \textbf{0.758} (0.152) &  \textbf{0.992} (0.476) &  \textbf{0.996} (0.934) \\
0.3$\mathcal{N}(0, 1)$ + 0.7$\mathcal{N}(0, 2)$   &  \textbf{0.458} (0.100) &  \textbf{0.722} (0.126) &  \textbf{0.944} (0.344) &  \textbf{1.000} (0.906) &           1.000 (1.000) \\
0.1$\mathcal{N}(0, 1)$ + 0.9$\mathcal{N}(0, 2)$   &  \textbf{0.604} (0.118) &  \textbf{0.822} (0.276) &  \textbf{0.988} (0.630) &  0.998 (\textbf{1.000}) &  0.996 (\textbf{1.000}) \\
0.9$\mathcal{N}(0, 1)$ + 0.1$\mathcal{N}(1, 2)$   &  \textbf{0.086} (0.050) &  \textbf{0.120} (0.042) &  \textbf{0.128} (0.042) &  \textbf{0.286} (0.074) &  \textbf{0.548} (0.134) \\
0.7$\mathcal{N}(0, 1)$ + 0.3$\mathcal{N}(1, 2)$   &  \textbf{0.210} (0.064) &  \textbf{0.280} (0.108) &  \textbf{0.540} (0.190) &  \textbf{0.906} (0.630) &  \textbf{0.974} (0.958) \\
0.5$\mathcal{N}(0, 1)$ + 0.5$\mathcal{N}(1, 2)$   &  \textbf{0.354} (0.174) &  \textbf{0.552} (0.330) &  \textbf{0.814} (0.692) &  \textbf{1.000} (0.990) &  0.996 (\textbf{1.000}) \\
0.3$\mathcal{N}(0, 1)$ + 0.7$\mathcal{N}(1, 2)$   &  \textbf{0.556} (0.380) &  \textbf{0.744} (0.684) &  \textbf{0.972} (0.952) &           1.000 (1.000) &  0.998 (\textbf{1.000}) \\
0.1$\mathcal{N}(0, 1)$ + 0.9$\mathcal{N}(1, 2)$   &  \textbf{0.688} (0.616) &  0.888 (\textbf{0.892}) &  0.990 (\textbf{1.000}) &  0.998 (\textbf{1.000}) &           1.000 (1.000) \\ \hline
Average Power                                     &  \textbf{0.333} (0.135) &  \textbf{0.428} (0.193) &  \textbf{0.577} (0.315) &  \textbf{0.752} (0.531) &  \textbf{0.806} (0.653) \\ \hline
\end{tabular}
\end{table*}

\begin{table*}
    \caption{The same as Table \ref{tab:twosample_1d} but for $d=2$. Standard bivariate normal is used as a null distribution. The MG distribution is defined in (\ref{eqn:MG}).}
    \label{tab:twosample_2d}
\begin{tabular}{lrrrrr}
\hline
& \multicolumn{5}{c}{Sample size $n$} \\
\cmidrule{2-6}
Second Sample Distribution &    \red{30}&                 50  &                     100 &                     250 &                     500 \\ \hline 
$MG(0.05)$                                         &  \textbf{0.084} (0.058) &  0.052 (\textbf{0.066}) &  \textbf{0.080} (0.066) &  \textbf{0.066} (0.058) &  0.058 (\textbf{0.086}) \\
$MG(0.1)$                                          &  0.060 (\textbf{0.072}) &  \textbf{0.074} (0.064) &  \textbf{0.078} (0.066) &  0.036 (\textbf{0.076}) &  0.060 (\textbf{0.092}) \\
$MG(0.2)$                                          &  \textbf{0.078} (0.062) &  \textbf{0.080} (0.074) &  0.052 (\textbf{0.074}) &  0.060 (\textbf{0.124}) &  0.074 (\textbf{0.196}) \\
$MG(0.3)$                                          &  \textbf{0.082} (0.062) &  0.054 (\textbf{0.066}) &  0.064 (\textbf{0.114}) &  0.080 (\textbf{0.236}) &  0.100 (\textbf{0.472}) \\
$MG(0.5)$                                          &  0.086 (\textbf{0.092}) &  0.100 (\textbf{0.136}) &  0.136 (\textbf{0.264}) &  0.236 (\textbf{0.666}) &  0.368 (\textbf{0.976}) \\
$MG(0.7)$                                          &  \textbf{0.142} (0.132) &  0.226 (\textbf{0.254}) &  0.374 (\textbf{0.582}) &  0.764 (\textbf{0.986}) &  0.958 (\textbf{1.000}) \\
$\mathcal{U}(-\sqrt{3},\sqrt{3}) \times \mathcal{U}(-\sqrt{3},\sqrt{3})$ &  \textbf{0.096} (0.090) &  0.144 (\textbf{0.156}) &  \textbf{0.346} (0.244) &  \textbf{0.944} (0.584) &  \textbf{1.000} (0.930) \\
$\mathcal{U}(0, 1) \times \mathcal{U}(0, 1)$                           &           1.000 (1.000) &           1.000 (1.000) &           1.000 (1.000) &           1.000 (1.000) &           1.000 (1.000) \\
$t(3) \times t(3)$                                 &  \textbf{0.328} (0.042) &  \textbf{0.494} (0.068) &  \textbf{0.792} (0.140) &  \textbf{0.990} (0.412) &  \textbf{0.980} (0.868) \\
$t(5) \times t(5)$                                 &  \textbf{0.196} (0.050) &  \textbf{0.224} (0.066) &  \textbf{0.412} (0.086) &  \textbf{0.806} (0.150) &  \textbf{0.982} (0.298) \\
$t(10) \times t(10)$                               &  \textbf{0.120} (0.054) &  \textbf{0.110} (0.050) &  \textbf{0.144} (0.064) &  \textbf{0.274} (0.066) &  \textbf{0.546} (0.108) \\
$t(25) \times t(25)$                               &  \textbf{0.068} (0.040) &  \textbf{0.076} (0.054) &  \textbf{0.064} (0.058) &  \textbf{0.080} (0.078) &  \textbf{0.130} (0.052) \\
$\mathcal{N}(0, 1) \times t(3)$                    &  \textbf{0.156} (0.052) &  \textbf{0.160} (0.064) &  \textbf{0.288} (0.076) &  \textbf{0.598} (0.128) &  \textbf{0.866} (0.190) \\
$\mathcal{N}(0, 1) \times t(5)$                    &  \textbf{0.070} (0.052) &  \textbf{0.112} (0.064) &  \textbf{0.180} (0.052) &  \textbf{0.304} (0.056) &  \textbf{0.518} (0.118) \\
$\mathcal{N}(0, 1) \times t(10)$                   &  \textbf{0.082} (0.034) &  0.054 (\textbf{0.062}) &  \textbf{0.090} (0.054) &  \textbf{0.088} (0.062) &  \textbf{0.152} (0.086) \\
$0.9N_0 + 0.1N_1$                                  &  \textbf{0.098} (0.052) &  \textbf{0.102} (0.080) &  \textbf{0.136} (0.068) &  \textbf{0.296} (0.154) &  \textbf{0.454} (0.204) \\
$0.7N_0 + 0.3N_1$                                  &  \textbf{0.280} (0.120) &  \textbf{0.376} (0.178) &  \textbf{0.628} (0.414) &  \textbf{0.956} (0.900) &           0.998 (0.998) \\
$0.5N_0 + 0.5N_1$                                  &  \textbf{0.508} (0.292) &  \textbf{0.694} (0.588) &  0.914 (\textbf{0.922}) &           1.000 (1.000) &           1.000 (1.000) \\
$0.3N_0 + 0.7N_1$                                  &  \textbf{0.712} (0.636) &  0.892 (\textbf{0.900}) &           1.000 (1.000) &           1.000 (1.000) &           1.000 (1.000) \\
$0.1N_0 + 0.9N_1$                                  &  0.820 (\textbf{0.888}) &  0.972 (\textbf{0.996}) &           1.000 (1.000) &           1.000 (1.000) &           1.000 (1.000) \\
$0.9N_0 + 0.1N_2$                                  &  \textbf{0.108} (0.058) &  \textbf{0.096} (0.076) &  \textbf{0.118} (0.064) &  \textbf{0.190} (0.066) &  \textbf{0.332} (0.086) \\
$0.7N_0 + 0.3N_2$                                  &  \textbf{0.224} (0.064) &  \textbf{0.250} (0.090) &  \textbf{0.496} (0.106) &  \textbf{0.840} (0.278) &  \textbf{0.994} (0.658) \\
$0.5N_0 + 0.5N_2$                                  &  \textbf{0.352} (0.080) &  \textbf{0.582} (0.140) &  \textbf{0.858} (0.282) &  \textbf{1.000} (0.806) &  \textbf{1.000} (0.996) \\
$0.3N_0 + 0.7N_2$                                  &  \textbf{0.636} (0.140) &  \textbf{0.810} (0.318) &  \textbf{0.970} (0.664) &  \textbf{1.000} (0.998) &           1.000 (1.000) \\
$0.1N_0 + 0.9N_2$                                  &  \textbf{0.798} (0.246) &  \textbf{0.918} (0.574) &  \textbf{1.000} (0.914) &           1.000 (1.000) &           1.000 (1.000) \\
$0.9N_0 + 0.05N_1 + 0.05N_3$                       &  \textbf{0.088} (0.050) &  \textbf{0.094} (0.076) &  \textbf{0.142} (0.076) &  \textbf{0.304} (0.102) &  \textbf{0.476} (0.130) \\
$0.7N_0 + 0.15N_1 + 0.15N_3$                       &  \textbf{0.234} (0.084) &  \textbf{0.400} (0.108) &  \textbf{0.662} (0.188) &  \textbf{0.968} (0.502) &  \textbf{1.000} (0.904) \\
$0.5N_0 + 0.25N_1 + 0.25N_3$                       &  \textbf{0.588} (0.128) &  \textbf{0.786} (0.242) &  \textbf{0.966} (0.554) &  \textbf{1.000} (0.990) &           1.000 (1.000) \\ \hline
Average Power                                      &  \textbf{0.289} (0.169) &  \textbf{0.355} (0.236) &  \textbf{0.464} (0.328) &  \textbf{0.603} (0.481) &  \textbf{0.680} (0.587) \\
\hline
\end{tabular}
\end{table*}

As in the Section~\ref{sec:NumericalExperiments_one_sample}, the above collection of distribution is examined also in all-to-all settings. The difference in average power between TopoTest and Kolmogorov-Smirnov tests are shown are shown Figure~\ref{fig:2sample_diff_heatmap}. 
 
\begin{figure*}
    \caption{
        Difference in average power of two-sample TopoTest and two-sample Kolmogorov-Smirnov tests
        for univariate (left panel) and bivariate (right panel) distributions. In both cases sample sizes were $n=100$ and $K=500$ Monte Carlo realizations were performed to estimate the average power. Average power of TopoTest is $0.643$ ($0.537$) while for Kolmogorov-Smirnov it is $0.453$ ($0.437$) in $d=1$ ($d=2$).
    }
    \label{fig:2sample_diff_heatmap}
    \includegraphics[width=0.49\linewidth]{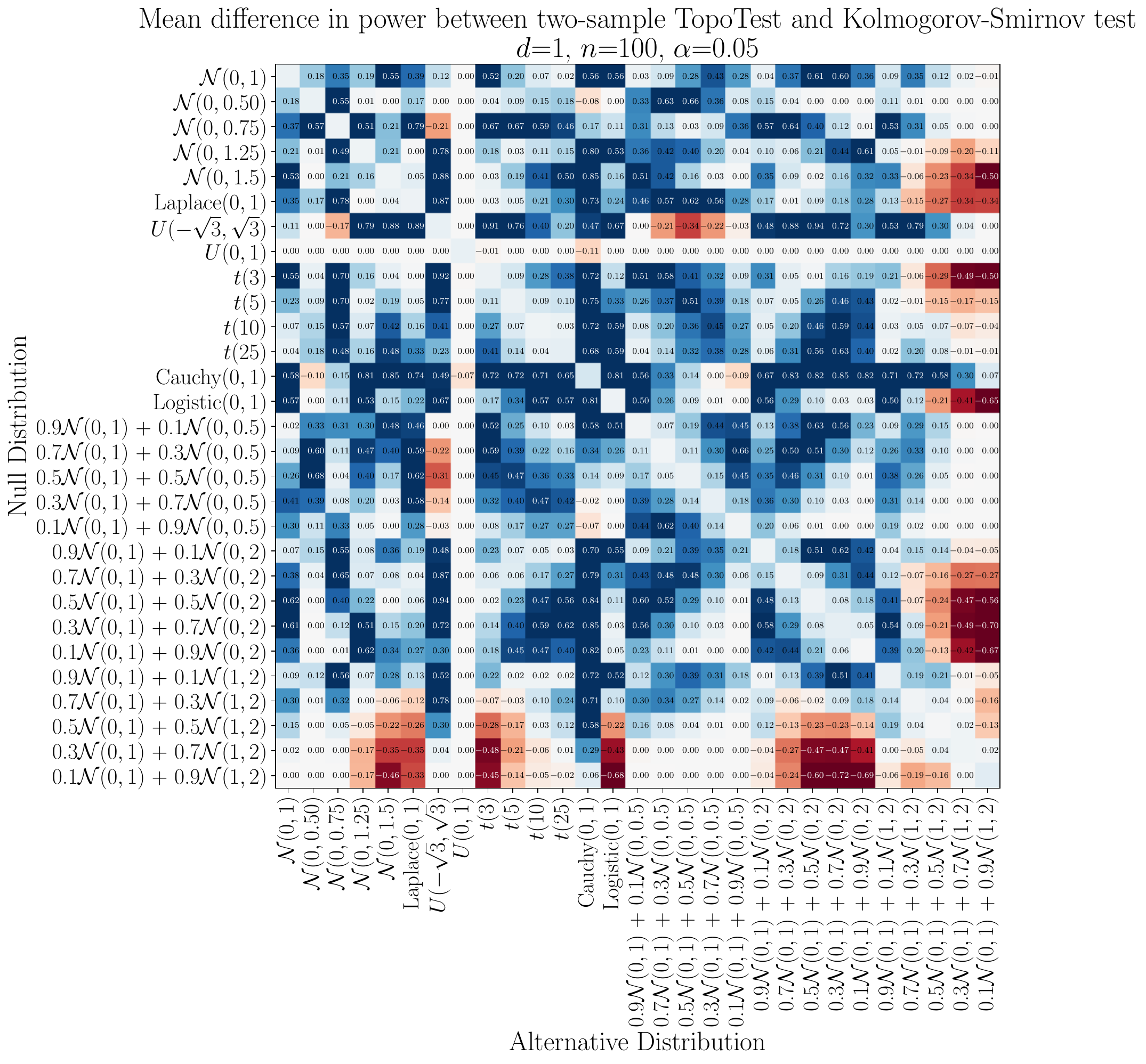}
    \includegraphics[width=0.49\linewidth]{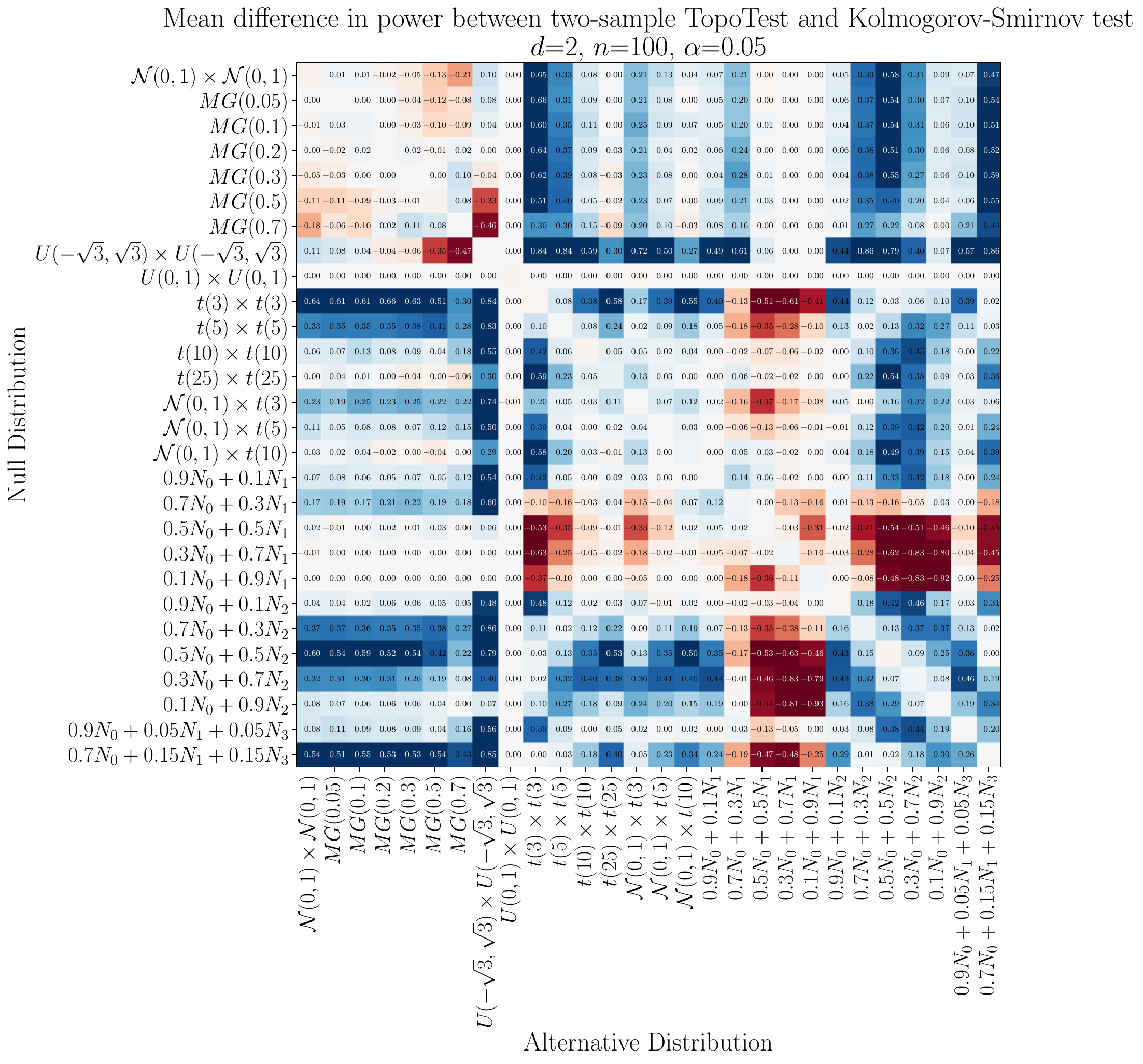}
\end{figure*} 

\red{
\section{Real data analysis}
In this section, we show two exemplary applications of the developed method to the analysis of real data.

First, we consider Fisher's \textit{Iris data} in the one-sample setting. 
This data includes three multivariate samples corresponding to three different species of Iris, i.e. \textit{Iris setosa}, \textit{Iris virginica}, and \textit{Iris versicolor}. 
There are 50 samples from each species, containing four measurements of the flower. 
We would like to determine if the distribution of each species follows a four-dimensional normal distribution. This can be formulated as a one-sample problem, where $G$ is the distribution of a sample, and $F$ is the specified four-dimensional normal distribution. $F$ involves an unknown mean vector $\mathbf{\mu}$ and unknown covariance matrix $\Sigma$. For each species, $\mathbf{\mu}$ and $\Sigma$ are estimated by sample mean and sample covariance matrix. Our one-sample test for testing $H_0: G = F$ against $H_1: G \ne F$ gave $p$-values of 0.057, 0.569 and 0.999 for \textit{Iris setosa}, \textit{Iris virginica} and \textit{Iris versicolor}, respectively. These $p$-values indicate that, at significance level 0.05, $H_0$ should not be rejected for each of the Iris species. 
However, when the same procedure is applied to the entire Iris dataset (i.e. without splitting into species), the $p$-value is $<10^{-4}$, hence the null hypothesis is to be rejected, which indicates that multivariate normal distribution does not fit whole Iris dataset. The conclusions are consistent with the literature \cite{Dhar2014}.

\begin{figure*}
    \caption{
    Spatial distribution of selected social services with the municipality of Rennes, France (left panel),
    corresponding Euler  curves (right panel).
    }
    \label{fig:rennes}
    \includegraphics[width=0.95\linewidth]{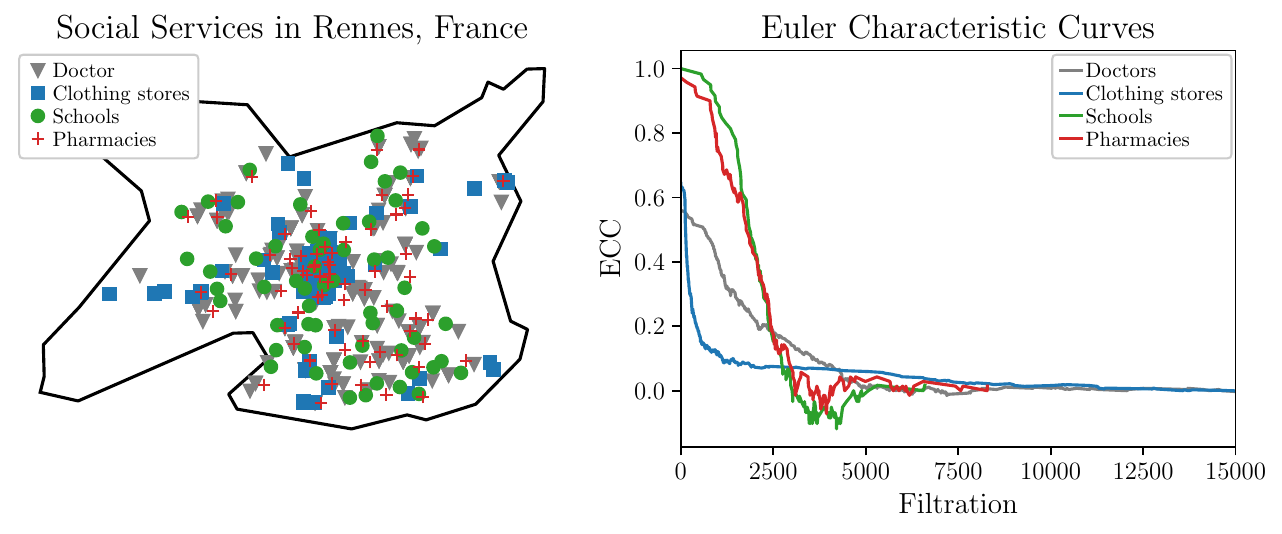}
\end{figure*} 

In our second example, we consider a dataset introduced in~\cite{Floch2018} consisting of a collection of  geographic locations of four distinct social services, i.e. doctor offices, clothing stores, schools, and pharmacies, in the municipality of Rennes, France.
It is visualized in Figure~\ref{fig:rennes} as a map. 
The two-sample TopoTest is used to detect if there are any significant differences in the distribution of those facilities. The test was conducted for all possible pairs. The $p$-values for all tests involving the distribution of clothing stores were below $10^{-4}$, meaning that in the Algorithm \ref{algo:2SampleTesting} in all of $K=10000$ iterations $d_{(p)} < D$, which indicates
that their geographic distribution is significantly different from the distribution of doctor offices, schools, and pharmacies. Such conclusion is supported by the plots of corresponding ECCs (c.f. Figure \ref{fig:rennes}, right panel): The curve computed for clothing stores (blue) is visually distinct from other curves. 
Contrary, no statistical differences were observed between the distribution of pharmacies and the distribution of schools -- the $p$-value of the TopoTest is 0.306. All the above conclusions are in agreement with the previous findings about that dataset made using the Fasano-Franceschi test \cite{Fasano1987, puritz2022fasanofranceschinitest}. 
However, in addition to that, the TopoTest rejects the hypothesis of equal geographical distributions of doctor offices vs. pharmacies and doctor offices vs. schools (in both cases the $p$-value is below $10^{-4}$), while the Fasano-Franceschi does not ($p$-value 0.881 and 0.435, respectively as computed using \texttt{fasano.franceschini.test} R package). 
This is an interesting observation in the context of previously discussed simulation study results, where we show that TopoTest is more powerful than the Kolmogorov-Smirnov test (closely related to the Fasano-Franceschi test) and hence more often correctly rejects the null hypothesis.

} 
\section{Discussion}
\label{sec:discussion}
Using Euler characteristic curves, we introduced a new framework for goodness-of-fit testing in arbitrary dimensions.
In addition, we provide a theoretical justification of the method.
Although the distribution of the test statistic is unknown for finite $n$, and contrary to the Kolmogorov-Smirnov test, depends on $F$, 
the asymptotic distribution is given by (\ref{eqn:GaussianProcess}), while theorem \ref{thm:TypeIIErrorAsymptotic} 
provides an upper bound on the type II error.

\red{A simulation} study was conducted to address the power of the TopoTest in comparison with Kolmogorov-Smirnov test.
A one- and two-sample setting was considered.
In both cases, the TopoTest in many cases yielded better performance than Kolmogorov-Smirnov.
It should be however highlighted that Kolmogorov-Smirnov test and TopoTests operate in slightly different frameworks -- the former in capable 
to distinguish between distributions that differ e.g. in location parameter while the TopoTests are insensitive to the distribution shifts,
rotations, reflections as described in Section~\ref{sec:properties}.

\bmhead{Acknowledgments}
    The second author is also with the University of Warsaw within the doctoral school of exact and natural sciences. N.H. thanks Wolfgang Polonik for a helpful discussion.

\bmhead{Funding}
    P.D., N.H. and R.T. were supported by the Dioscuri program initiated by the Max Planck Society,
    jointly managed with the National Science Centre (Poland), and mutually funded by the Polish
    Ministry of Science and Higher Education and the German Federal Ministry of Education and
    Research. {\L}.S. was supported by NCN grant no. 2020/37/B/HS4/00120.
    Computations reported in this paper were performed on the infrastructure provided by Google Cloud Higher Education Program.

\bibliography{references}      

\end{document}